\newtheorem{thm}{Theorem}
\date{}
\begin{document}

{\LARGE \bf  
\begin{center}
Spatio-temporal pattern formation under varying functional response parametrizations
\end{center}
}

\vspace*{1cm}

\centerline{\bf Indrajyoti Gaine$^a$, Malay Banerjee$^{a,}$\footnote{Corresponding author: malayb@iitk.ac.in}}

\vspace{0.5cm}

\centerline{ $^a$Department of Mathematics and Statistics, Indian Institute of Technology Kanpur, Kanpur - 208016, India}

\vspace{1cm}

\begin{center}
{\bf Abstract}
\end{center}

Enhancement of the predictive power and robustness of nonlinear population dynamics models allows ecologists to make more reliable forecasts about species' long term survival. However, the limited availability of detailed ecological data, especially for complex ecological interactions creates uncertainty in model predictions, often requiring adjustments to the mathematical formulation of these interactions. Modifying the mathematical representation of components responsible for complex behaviors, such as predation, can further contribute to this uncertainty, a phenomenon known as structural sensitivity. Structural sensitivity has been explored primarily in non-spatial systems governed by ordinary differential equations (ODEs), and in a limited number of simple, spatially extended systems modeled by nonhomogeneous parabolic partial differential equations (PDEs), where self-diffusion alone cannot produce spatial patterns. In this study, we broaden the scope of structural sensitivity analysis to include spatio-temporal ecological systems in which spatial patterns can emerge due to diffusive instability. Through a combination of analytical techniques and supporting numerical simulations, we show that pattern formation can be highly sensitive to how the system and its associated ecological interactions are mathematically parameterized. In fact, some patterns observed in one version of the model may completely disappear in another with a different parameterization, even though the underlying properties remain unchanged.

\vspace{1.0cm}

\noindent
{\bf Keywords:} Turing instability; Pattern formation; functional response; structural sensitivity; Stability.


\baselineskip 0.25in

\begin{center}
{\LARGE\bf }
\end{center}

\section{Introduction}{\label{Intro}} 
Mathematical models of interacting populations provide a systematic approach to predicting the long-term responses of natural ecosystems to external influences. The accuracy of the prediction mostly depends on the efficiency of the model under consideration. Recent research trends aim to enhance the predictive power and accuracy of ecological models by incorporating more realistic features of the ecosystem. One significant advancement in ecological modeling is the integration of symmetric diffusion into temporal ecological models, which are formulated under the assumption of uniform species distribution across the entire spatial domain. This enhancement allows for a more accurate representation of spatial heterogeneity of species within their habitats. By incorporating spatial dynamics, these models move closer to accurately capturing key ecological factors, such as the spatial variability of species communities and the role of space in determining species persistence or extinction \cite{ecology}. The primary contribution of contemporary research lies in exploring the complete bifurcation structure of spatio-temporal models. Turing's groundbreaking work demonstrated that the interplay between nonlinear reaction kinetics and diffusion can spontaneously generate spatial patterns such as stripes, spots, or labyrinthine structures, independent of any inhomogeneity in initial or boundary conditions \cite{turing1990chemical}. The mechanism responsible for this spontaneous emergence of spatially heterogeneous stationary patterns through biological or chemical interactions is known as diffusive instability. Segel and Jackson \cite{segel1972dissipative} were the first to apply Turing’s idea to explain the occurrence of diffusive instability in an ecological context. Further research has revealed that the coexistence of oscillatory temporal dynamics and diffusive instability can lead to the emergence of spatio-temporal patterns including stationary and dynamic patterns, traveling waves, oscillatory patches, and chaotic fluctuations. The interaction between diffusive instability and temporal oscillations, as well as its impact on the spatio-temporal dynamics of ecological models, has been investigated numerically in \cite{TH3, TH4, TH6, Hu2015PatternFA} and both theoretically and numerically in \cite{TH1, TH2, SONG2016229}. Beyond ecological models, spatio-temporal patterns also arise in network models \cite{MA2016586, 6678316, ZHANG201879}, chemical systems \cite{doi:10.1137/23M1552668, Jianping, PhysRevE.108.034206}, and various other scientific disciplines.

The inherent complexity of ecosystems often limits the realism and tractability of the spatio-temporal ecological models, described using parabolic partial differential equations with nonlinear reaction kinetics. The formulation of these reaction-diffusion equations frequently depends on some assumptions, which may either oversimplify ecological processes and overlook key behaviors of the ecosystem, or become too complex to understand \cite{Demongeot}. In some cases, simpler models are chosen for mathematical tractability, though they rarely compromise ecological realism. Furthermore, a major challenge in making accurate predictions with these ecological models is the uncertainty about the processes involved. These limitations can be addressed to some extent by using advanced experimental technologies and by collecting more comprehensive field data. However, experimental data are often collected on a small spatial scale within controlled laboratory settings, which can lead to potential inaccuracies in model predictions in the natural environment on larger temporal and spatial scales \cite{Arashkevich,MOROZOV201045}. Another challenge in model formulation lies in determining appropriate functions to describe specific ecological interactions. For most ecological interactions, multiple representative functions are available and their relevance lies in either aligning with the system's known properties and assumptions about underlying processes or fitting empirical data. The choice of the functional form by the modeler is typically guided by the specific features of the interaction between the species under consideration \cite{Martin,Barclay,Baumgaertner,Majmudar}. 

A common intuitive presumption in model prediction was if two functional forms, representing the same mechanism, share similar properties and could be parameterized to closely resemble to each other, the models utilizing these functions should exhibit essentially identical behavior. As a result, researchers often opt for either functional form that conveniently fits the available predation data \cite{Mullin}. However Fussmann and Blasius demonstrated that the degree of resource enrichment required to destabilize the community dynamics in the Rosenzweig–MacArthur model is highly sensitive to the mathematical form of the functional response \cite{Fussmann}, even when they appear visually indistinguishable. This phenomenon is referred to as structural sensitivity. Building on this concept, researchers have studied structural sensitivity for various ecological models, including the Leslie–Gower–May model \cite{Wyse2022StructuralSI}, the Bazykin model \cite{ALDEBERT2016163, Aldebert2019}, Hastings-Powell's model \cite{Gaine}, a predator-prey type model describing chemostat experiment \cite{ALDEBERT20181,FLORA201182}. A common assumption in all these studies is that species concentrations are uniformly distributed within their habitat, limiting the analysis to ODE-based models. The present work seeks to extend this investigation by exploring the impact of structural sensitivity in scenarios where species are not uniformly distributed across space.

Recently, Manna et al. studied the structural sensitivity of a spatially explicit Rosenzweig–MacArthur model \cite{MANNA2024134220}. The classical Rosenzweig–MacArthur model is unable to produce spatially heterogeneous patterns with self-diffusion alone \cite{MB}. To address this limitation, they introduced a cross-diffusion term in the predator equation and further explored the structural sensitivity of spatially heterogeneous patterns by replacing the Holling Type II functional response with Ivlev and trigonometric functional responses—alternative formulations that share common properties such as being zero at zero, monotonically increasing, possessing a finite horizontal asymptote, and being concave downward. Their findings revealed the existence of a spatial ghost attractor in the model with the Holling Type II functional response and the absence of labyrinthine patterns in the case of the model with the trigonometric functional response. Additionally, they observed spatially heterogeneous quasi-periodic patterns in the model with the trigonometric functional response, even though linear stability analysis confirmed that the homogeneous steady state remains stable under spatially heterogeneous perturbations.

In contrast to the Rosenzweig–MacArthur model, the Bazykin model \cite{Bazykin}, which incorporates a density-dependent death rate for the predator population, is capable of producing spatially heterogeneous patterns under self-diffusion. Additionally, the Bazykin model support two coexisting steady states, a feature absent in the Rosenzweig–MacArthur model. The main goal of this work is to study the structural sensitivity of temporal and spatio-temporal dynamics of the Bazykin model. We consider a spatially explicit model with Bazykin type reaction kinetics, and study the structural sensitivity by replacing the Holling Type II functional response with  Ivlev functional response. We derive the analytical results with a general form od the prey-dependent functional response meeting specific properties, such as being zero at zero, monotonic increasing, having a finite horizontal asymptote, and being concave down. A key objective of this study is to establish analytical insights that are independent of the specific parametrization of the functional response. Through numerical investigations, we support the analytical findings and illustrate the intricate dynamics of the predator–prey system. Our results highlight processes that remain consistent across different functional responses, while also revealing notable differences in the global dynamics of the temporal model and the resulting spatio-temporal pattern formation.

The paper is organized as follows: Section \ref{MathModel} introduces the structure of the temporal model and specifies the conditions for the functional response. Section \ref{LocBif} provides an analytical presentation of all possible homogeneous steady states, their stability criteria, and the conditions under which these criteria change based on the general functional form with certain properties. In Section \ref{SpatioModel}, the model is extended spatially, establishing the existence of global solutions and non-stationary coexisting solutions. This section also examines the stability of homogeneous steady states and derives the criteria for Turing instability of the coexisting homogeneous steady state. Section \ref{NumRes} validates the theoretical findings through numerical simulations using appropriate parameter values and explores the structural sensitivity of the model. Finally, Section \ref{disc} concludes the paper with a discussion of the results.

\section{Mathematical Models}{\label{MathModel}}

The abundance of enriched dynamics prompts us to consider the two-species Bazykin model \cite{Bazykin} for studying structural sensitivity. The model incorporates logistic growth for the prey population, a density-dependent predator death rate, and the Holling type II functional response to represent predation. As an alternative parametrization of the functional response, we consider the Ivlev functional response \cite{Ivlev}, an ecologically well-established prey-dependent function with characteristics similar to the Holling type II response. In general, we denote either functional response as $f(u)$. The key properties common to both functional responses are: \[(i) f(0) = 0;\,\,\, (ii) f'(u) > 0~\forall u \geq 0;\,\,\, (iii) \lim_{u\rightarrow\infty}f(u) =  f^{\infty} < \infty;\,\,\, (iv) f''(u) < 0~\forall u > 0.\] 
Given these shared characteristics, we adopt a generalized, smooth prey-dependent function satisfying the above properties as the functional response and formulate the continuous-time model with Bazykin type reaction kinetics as follows:
 \begin{subequations}\label{algeb_equn}
\begin{eqnarray}
\frac{d u}{d t}&=&ru\left(1-\frac{u}{X_0}\right)-f\left(u\right)v \equiv F_1(u,v),\\
\frac{d v}{d t}&=&f\left(u\right)v-v-k_3{v}^2 \equiv F_2(u,v),
\end{eqnarray}
\end{subequations}
which is non-dimensionalized as in \cite{Shen}. Here, $u\equiv u(t)$ and $v\equiv v(t)$ represent dimensionless concentrations of prey and predator populations, respectively, at time $t$, assuming a homogeneous spatial distribution. However, in reality, individuals of any species move randomly within their habitat, contradicting the assumption of a homogeneous species distribution. To better understand the impact of different parameterizations of the functional response on species dynamics, we extend our temporal model to include spatial interactions. This is achieved by incorporating symmetric diffusion, which effectively captures the random dispersal of species within their habitat. Let $\Omega\subset \mathbb{R}^n$ be a bounded, connected spatial domain representing the species' habitat, with a smooth boundary $\partial \Omega \subset \mathbb{R}^{n-1}$. Following the non-dimensionalization described in \cite{Shen}, we obtained the spatially-extended dimensionless form of the temporal model:
 \begin{subequations}\label{alg_eqn_pde}
\begin{eqnarray}
\frac{\partial u}{\partial t}&=&\Delta u \,\,+\,\,F_1\left(u,v\right),\\
\frac{\partial v}{\partial t}&=&d\Delta v\,\,+\,\,F_2\left(u,v\right),
\end{eqnarray}
\end{subequations}
where $u\equiv u(\mathbf{x},t)$ and $v\equiv v(\mathbf{x},t)$ represent dimensionless population densities of prey and predator, respectively, at time $t$ and spatial position $\mathbf{x}\in \Omega\subset \mathbb{R}^{n}$. Here, $\Delta=\sum_{j=1}^n \frac{\partial^2}{\partial x_j^2}$ denotes the Laplacian operator and $d$ represents the ratio of predator-to-prey self-diffusion coefficients. For our analysis, we assume a two-dimensional spatial domain, although most analytical results remain valid for any $n\in \mathbb{N}$.

The study of system (\ref{algeb_equn}) is conducted under non-negative initial conditions $u(0)\geq 0\mbox{ and }v(0)\geq 0$, while system (\ref{alg_eqn_pde}) is analyzed with non-negative initial conditions, $u(\mathbf{x},0)\geq 0\mbox{ and }v(\mathbf{x},0)\geq 0$ for $\mathbf{x} \in \Omega$ along with no-flux boundary conditions $\frac{\partial u(\mathbf{x},t)}{\partial n}=\frac{\partial v(\mathbf{x},t)}{\partial n}=0$ for $\mathbf{x}\in \partial\Omega$, where $n$ is the outward unit normal to $\partial\Omega$. We examine the system dynamics theoretically and numerically for two distinct functional responses, namely the Holling Type II functional response, denoted as $f_H(u)$, and the Ivlev functional response, denoted as $f_I(u)$, with a suitable choice of parameter values that ensure their close proximity \cite{Fussmann}. To generalize our analytical results beyond these specific cases, we present them for a smooth, generalized functional response $f(u)$ that satisfies the aforementioned properties.

\section{Dynamics of Temporal Model}{\label{LocBif}}

The equilibrium points of system (\ref{algeb_equn}) correspond to the spatially homogeneous steady states of the spatiotemporal system (\ref{alg_eqn_pde}). The analysis of these equilibrium points establishes a baseline understanding of the behaviour of the system and serves as a reference point for more detailed spatial analysis, making their study essential for further progress in spatiotemporal studies. A detailed analysis of the number and stability of equilibrium points for the model with a Holling type II functional response is available in \cite{LU202199}. However, we present these results in the context of a generalized functional response.

\subsection{Equilibria}

The system always has two predator-free equilibrium points: $E_0=(0,0)$ and $E_1=(X_0,0)$, regardless specific parameterization of $f(u)$. However, the potential number of coexisting equilibrium points, $E_*=(u_*,v_*)$, can vary between zero and three, depending on whether the functional response satisfies certain conditions. A coexisting equilibrium corresponds to the intersection of the following two nontrivial nullclines: \[r\left(1-\frac{u}{X_0}\right)=\frac{f(u)}{u}v ~~\mbox{and}~~f(u)-1=k_3v\] within the interior of first quadrant. From the second nullcline, it follows that $v_*=\frac{(f(u_*)-1)}{k_3}$, which is feasible if $f(u_*)>1$, where $u_*$ is a positive real root of the implicit equation 
  \[F(u)\equiv ru\left(1-\frac{u}{X_0}\right)-\frac{\left(\left(f(u)\right)^2-f(u)\right)}{k_3}=0.\]
Thus, determining the number of coexisting equilibrium points requires finding the number of positive real roots of $F(u)=0$ while ensuring $f(u_*)>1$. For notational convenience, we define:
\[G(u)=\frac{dF(u)}{du}=r\left(1-\frac{2u}{X_0}\right)-\frac{f'(u)\left(2f(u)-1\right)}{k_3},\]
\[H(u)=\frac{d^2F(u)}{du^2}=-\frac{2r}{X_0}-\frac{f''(u)(2f(u)-1)+2(f'(u))^2}{k_3}.\]
We first explore all possible scenarios and determine the potential number of coexisting equilibrium points in each case, as outlined in Theorem \ref{thm1}.

\begin{thm}\label{thm1}
 The number of feasible coexisting equilibrium points  in the system \eqref{algeb_equn}, for a fixed parameter value $r$, depends on the  parametrization and the choice of parameter values of the functional response, as follows:

\noindent(a) No coexisting equilibrium point for \[\textbf{C1:}~ f(X_0)<1.\]
\noindent(b) Unique coexisting equilibrium point for \[\textbf{C2:}~f\left(X_0/2\right)\leq 1 < f(X_0).\]
\noindent(c) At least one coexisting equilibrium point for \[\textbf{C3:}~1< f(X_0/2).\] Additionally, at most three coexisting equilibrium points can exist under the condition \begin{equation}\label{ineq1}
\mbox{\textbf{C4:}}~ \frac{f''(f^{-1}(1))+2(f'(f^{-1}(1)))^2}{k_3}<-\frac{2r}{X_0}<\frac{f''\left(\frac{X_0}{2}\right)(2f\left(\frac{X_0}{2}\right)-1)+2\left(f'\left(\frac{X_0}{2}\right)\right)^2}{k_3}.
\end{equation}
\end{thm}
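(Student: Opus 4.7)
The plan is to analyze the implicit equation $F(u) = 0$ together with the feasibility constraint on $u_*$. Writing $u_c := f^{-1}(1)$ whenever $f^{\infty} > 1$, any feasible coexisting equilibrium must satisfy $u_* \in (u_c, X_0)$: the second nullcline forces $f(u_*) > 1$, i.e.\ $u_* > u_c$, while the first nullcline forces $r(1 - u_*/X_0) > 0$, i.e.\ $u_* < X_0$. For part (a), this immediately rules out any equilibrium: C1 gives $f(u) \leq f(X_0) < 1$ on $[0, X_0]$, so either $u_c$ does not exist or $u_c > X_0$, and the admissible interval is empty. For part (b), C2 places $u_c$ in $[X_0/2, X_0)$; existence in $(u_c, X_0)$ follows from the intermediate value theorem using $F(u_c) = r u_c(1 - u_c/X_0) > 0$ and $F(X_0) = -f(X_0)(f(X_0) - 1)/k_3 < 0$. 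Uniqueness follows because on $(u_c, X_0)$ both summands of $G(u) = F'(u) = r(1 - 2u/X_0) - f'(u)(2f(u) - 1)/k_3$ are non-positive and not simultaneously zero: $u > X_0/2$ makes the first summand non-positive and $f(u) > 1 > 1/2$ with $f' > 0$ makes the second strictly negative, so $F$ is strictly decreasing and has a single zero.

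For part (c), C3 places $u_c$ in $(0, X_0/2)$; existence of at least one feasible root follows again from the intermediate value theorem as in (b). For the upper bound of three under C4, the strategy is to split $(u_c, X_0)$ at $X_0/2$. On $[X_0/2, X_0)$ the sign argument from (b) still applies verbatim, so $F$ is strictly decreasing there and contributes at most one feasible root. On $(u_c, X_0/2)$ I will evaluate $H(u) = F''(u)$ at the endpoints: using $2 f(u_c) - 1 = 1$, condition C4 rearranges exactly to $H(u_c) > 0$ and $H(X_0/2) < 0$, so $H$ admits at least one sign change in this subinterval. Provided $H$ has exactly one sign change there (necessarily from positive to negative), $G$ is strictly increasing and then strictly decreasing, hence has at most two zeros in $(u_c, X_0/2)$; by Rolle's theorem $F$ then has at most three zeros in this subinterval. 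Combining the two pieces yields at most four feasible zeros in $(u_c, X_0)$, and the parity of sign changes forced by $F(u_c) > 0$ and $F(X_0) < 0$ reduces this to at most three.

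The main obstacle is justifying that $H$ has exactly one sign change in $(u_c, X_0/2)$ from the two-point data in C4 alone. To close this step I expect to exploit the concavity of $f$ (so $f'' < 0$ and $f'$ is decreasing) together with smoothness to show that the map $u \mapsto f''(u)(2f(u) - 1) + 2(f'(u))^2$, whose additive shift by $-2k_3 r/X_0$ equals $-k_3 H(u)$, is monotone across the relevant window. That monotonicity, paired with the endpoint signs produced by C4, forces the unique sign change of $H$ and closes the count. Without such a monotonicity statement, C4 only guarantees an odd number of sign changes of $H$, which is insufficient for the sharp bound of three coexisting equilibria.
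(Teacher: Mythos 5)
Your proposal follows essentially the same route as the paper's own proof. Parts (a) and (b) are argued identically: positivity of $F$ on $[0,X_0]$ under C1, and, under C2, the endpoint signs $F(f^{-1}(1))>0$, $F(X_0)<0$ combined with strict negativity of $G$ on $(f^{-1}(1),X_0)$. For part (c), both arguments hinge on the same key step: showing that $H$ changes sign exactly once in $(f^{-1}(1),X_0/2)$, so that $G$ is increasing-then-decreasing there, has at most two zeros, and the count of zeros of $F$ is capped at three. (Incidentally, your detour of splitting the interval at $X_0/2$, counting up to four roots, and then patching with a parity argument is avoidable: since $G<0$ on $[X_0/2,X_0)$ as well, $G$ has at most two zeros on all of $(f^{-1}(1),X_0)$, and a single application of Rolle's theorem on the whole interval gives the bound of three directly.)

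The obstacle you flag --- that C4 alone only forces an odd number of sign changes of $H$ --- is genuine, and it is worth knowing how the paper deals with it: it does not. The paper simply asserts that ``$H(u)$ is a continuous, strictly decreasing function'' and then applies the intermediate value theorem to obtain the unique $u_{CP}$; no derivation of this monotonicity from properties (i)--(iv) is offered, and none is possible. Indeed, $H'(u)=-\left[f'''(u)(2f(u)-1)+6f'(u)f''(u)\right]/k_3$, so strict decrease of $H$ requires $f'''(u)(2f(u)-1)+6f'(u)f''(u)>0$, a third-derivative condition that properties (i)--(iv) do not control. In fact the assertion fails even for the paper's own Ivlev parametrization: for $f(u)=k_1(1-e^{-k_2u})$ one computes $f'''(2f-1)+6f'f''=k_1k_2^3e^{-k_2u}\left[2k_1-1-8k_1e^{-k_2u}\right]$, which at $u=f^{-1}(1)$ equals $k_1k_2^3e^{-k_2u}(7-6k_1)<0$ for $k_1>7/6$ (the fitted value is $k_{1_I}=1.2367$), so $H$ is initially increasing there; uniqueness of its zero is rescued only by the unimodality (increasing-then-decreasing shape) of $H$, not by monotonicity. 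So your instinct that concavity of $f$ alone will not deliver the needed statement is exactly right: closing this step requires either an additional hypothesis on $f'''$ or a direct verification of unimodality of $H$ for each concrete functional response. In short, your proposal reproduces the paper's argument, and the one step you could not close from the stated hypotheses is precisely the step the paper assumes without justification.
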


\begin{proof}

(a) The expression of $F(u)$ indicates that $F(u)>0$ for all $u \in [0,X_0]$ if the condition \textbf{\textit{C1}} holds. Consequently, the system \eqref{algeb_equn} has no positive solution for  $u\in [0,X_0]$, ensuring the non-existence of a feasible coexisting equilibrium point.

(b) The expression of $G(u)$ remains negative whenever $u\in [f^{-1}(1), X_0]$. Hence, the function $F(u)$ strictly monotonically decreases over the interval. Additionally, since $F(u)>0$ for $0< u\leq f^{-1}(1)$ and $F(X_0)<0$ under the condition $f\left(X_0/2\right)\leq 1 < f(X_0)$, the intermediate value theorem guarantees the existence of at most one positive solution $u_*\in (f^{-1}(1),X_0)$. Moreover, this solution satisfies the feasibility condition for $v_*$, ensuring the existence of a unique coexisting equilibrium point under this parameter restriction.

(c) The solution of $F(u)=0$ corresponds to the prey density of the coexisting equilibrium points of the system \eqref{algeb_equn}, and this equation can be geometrically represented as the intersection of the curves $v=F(u)$ and $v=0$. Under condition \textbf{\textit{C4}}, we deduce that $H(f^{-1}(1))>0$ and $H(X_0/2)<0$. Since $H(u)$ is a continuous, strictly decreasing function, the intermediate value theorem guarantees the existence of a unique point $u_{CP}\in (f^{-1}(1),X_0/2)$ such that $H(u_{CP})=0$, establishing $u_{CP}$ as the point of inflection of the curve $v=F(u)$.

Since, $G(X_0/2)<0$ and $G(f^{-1}(1))<0$, the slope of the curve $v=F(u)$ is negative at $u=X_0/2$ and $u=f^{-1}(1)$. At the point of inflection of the curve $v=F(u)$,  we must have $G(u_{CP})\geq 0$. Given the continuity of $G(u)$, different scenarios can arise depending on the choice of parameterization, which we discuss next.

\noindent\textbf{\textit{Case1:}} If $G(u_{CP})>0$, then intermediate value theorem assures that $G(u)$ will attain the value $0$ once in the interval $(f^{-1}(1),u_{CP})$, where $H(u)>0$, and once again in the interval $(u_{CP},X_0/2)$ where $H(u)<0$. Consequently, there exists a point $u_m\in (f^{-1}(1),u_{CP})$ where the curve $v=F(u)$ has a minimum, and another point $u_M\in (u_{CP},X_0/2)$ where it attains a maximum. The precise locations of these extrema depend on the choice of parameter values.

Next, we examine different scenarios concerning the shape of the curve $v=F(u)$, which varies based on parameter values that may depend on or be independent of the functional response. Additionally, we analyze the potential number of equilibrium points in each scenario.

\noindent(i) If $F(u_m)>0$ and $F(u_M)>0$, then the system will have only one feasible coexisting equilibrium point. Denote it by $E_*=(u_*,v_*)$, where the prey density of this equilibrium point, $u_*$, satisfies the condition $G(u_*)<0$.

    \noindent(ii) If $F(u_m)=0$ and $F(u_M)>0$, then the system will have two feasible coexisting equilibrium points, with one coexisting equilibrium point having prey density $u_m$. Note that $G(u_m)=0$. The prey density of the other equilibrium point satisfies the condition $G(u_{*})<0$.
    
\noindent(iii) If $F(u_m)<0$ and $F(u_M)>0$, then the system will have three feasible coexisting equilibrium points. Let us denote the equilibrium point with the lowest prey density as $E_{1*}=(u_{1*},v_{1*})$. For this equilibrium, the prey density satisfies the inequality $u_{*}<u_m$, implying the condition $G(u_{1*})<0$. The equilibrium point with medium prey density, $E_{2*}=(u_{2*},v_{2*})$, corresponds to a prey density that satisfies the inequality $u_m<u_{2*}<u_M$. For this equilibrium point, the condition $G(u_{2*})>0$ holds. Finally, the equilibrium point with the highest prey density is denoted by $E_{3*}=(u_{3*},v_{3*})$, where the prey density satisfies the inequality $u_M<u_{3*}$, and we can conclude that the condition $G(u_{3*})<0$ for this equilibrium point.

\noindent(iv) If $F(u_m)<0$ and  $F(u_M)=0$, then the system will have two feasible coexisting equilibrium points, with one having prey density $u_M$. Note that, the prey density of this equilibrium point satisfies the condition $G(u_M)=0$, while the prey density of the other equilibrium point satisfies $G(u_{*})<0$.

\noindent(v) If $F(u_m)<0$ and $F(u_M)<0$, then the system will have only one feasible coexisting equilibrium point. Denote it by $E_*=(u_*,v_*)$, and note that, the prey density satisfies the condition $G(u_*)<0$.

t is important to note that the prey density of any of the coexisting equilibrium points mentioned above must lie within the interval $(f^{-1}(1), X_0)$, meaning the prey density of any coexisting equilibrium point must satisfy the inequality $f(u_*)>1$.

\noindent\textbf{\textit{Case2:}} If $G(u_{CP})=0$, then for all $u\in (f^{-1}(1),X_0)$, we have $G(u)\leq 0$. Consequently, the curve $v=F(u)$ remains monotone increasing throughout this interval, indicating that it can intersect the line $v=0$ at most once. This implies that the system can have, at most, one coexisting equilibrium point.

If, for certain parameter values, the condition \textbf{\textit{C4}} is not satisfied, then the curve $v=F(u)$ does not have a point of inflection. Consequently, there is no possibility of a change in the slope of the curve, and we additionally obtain $G(u)< 0,~\forall~u\in[f^{-1}(1),\frac{X_0}{2}]$. Hence, the curve $v=F(u)$ intersects the line $v=0$ at most once, implying that the system has at most one feasible equilibrium point.\end{proof}

\noindent In the next subsection, we will discuss the local stability of all equilibrium points that the system can exhibit under different scenarios.

\subsection{Local stability of equilibria}\label{temstability}

Species interactions play a crucial role in spatial pattern formation, as diffusive instability alone cannot generate patterns without ecological interactions. Therefore, we focus on coexisting equilibrium points, where both species stably persist. To examine changes in the stability of these equilibrium points, we analyze their stability using linear stability analysis. This approach investigates the stable and unstable eigenspaces of the Jacobian matrix, whose general form for the system \eqref{algeb_equn} evaluated at an arbitrary point is given by:
\begin{eqnarray*}
    J(E) = \left[
    \begin{array}{ccc}
    r\left(1-\frac{2u}{X_0}\right)-f'(u)v & -f(u) \\
    f'(u)v & f(u)-1-2k_3v
    \end{array}\right].
\end{eqnarray*} 
The Jacobian matrix, evaluated at a typical coexisting equilibrium point $E_*=(u_*,v_*)\equiv(u_*,(f(u_*)-1)/k_3)$, is given by:
\begin{eqnarray*}
    J(E_*) = \left[
    \begin{array}{ccc}
    r\left(1-\frac{2u_*}{X_0}\right)-\frac{f'(u_*)\left(f(u_*)-1\right)}{k_3} & -f(u_*) \\
    \frac{f'(u_*)\left(f(u_*)-1\right)}{k_3} & 1-f(u_*)
    \end{array}\right].
\end{eqnarray*}
Using the Routh-Hurwitz criteria \cite{routh}, we can conclude that a coexisting equilibrium point is asymptotically stable if the following two conditions are satisfied:\[\text{Trace}(J(E_*))=r\left(1-\frac{2u_*}{X_0}\right)-\frac{f'(u_*)\left(f(u_*)-1\right)}{k_3}+(1-f(u_*))<0,\] \[\mbox{and}~\text{Det}(J(E_*))=\left(1-f(u_*)\right)G(u_*)>0.\]
Based on the different scenarios outlined in Theorem \ref{thm1}, we now discuss the stability of various coexisting equilibrium points. Whenever the system exhibits exactly one coexisting equilibrium point, $E_*=(u_*,v_*)$, the prey density at this equilibrium satisfies the condition $G(u_*)\leq 0$. Consequently, it is straightforward to verify that $\text{Det}(J(E_*))\geq0$. However, if the prey density $u_*$ of the coexisting equilibrium point satisfies the condition
 \begin{equation}\label{trace}
r\left(1-\frac{2u}{X_0}\right)<\left(f(u)-1\right)\left(\frac{f'(u)}{k_3}+1\right),
\end{equation}
then, the equilibrium point is stable if $\text{Trace}(J(E_*))<0$; otherwise, it remains unstable.

Whenever the system exhibits three coexisting equilibrium points, the equilibrium point $E_{2*}$ is inherently unstable whenever it exists, as we have already established the condition $G(u_{2*})>0$, which implies $\text{Det}(J(E_{2*}))<0$. However, for the equilibrium points $E_{1*}\mbox{~and~}E_{3*}$, we have $\text{Det}(J(E_{j*}))>0~\mbox{for}~j=1,2$. Therefore, the stability of these two equilibrium points hinges on whether the prey densities satisfy the condition \ref{trace}. If they do, the equilibrium point is stable; otherwise, it is unstable.

In the following subsection, we discuss the various potential bifurcations that the system might undergo near the coexisting equilibrium points. Additionally, we explore the specific conditions that could lead to these bifurcations. 

\subsection{Local bifurcation results}
In the context of studying structural sensitivity, we consider the generalized system (\ref{algeb_equn}) with a generalized functional response $f(u)$ and examine the number of coexisting equilibrium points and their stability under different parameter variations in $r,X_0~\mbox{and}~k_3$. Additionally, we use Sotomayor's theorem \cite{perko2013} to verify the transversality conditions for the local bifurcations obtained. To simplify our subsequent discussion, let us denote:\[F\left(u,v\right) \equiv \left[\begin{array}{c}
ru\left(1-\frac{u}{X_0}\right)-f\left(u\right)v  \\
f\left(u\right)v-v-k_3{v}^2
\end{array}\right].\]\\
First, we determine the conditions for the occurrence of a saddle-node bifurcation, analytically derive its threshold value, and verify the transversality conditions.

\subsubsection{Saddle-Node bifurcation}
From the discussion of Theorem \ref{thm1}, we know that when either $u_m$ or $u_M$ is a root of $F(u)=0$, the system has exactly two coexisting equilibrium points, with at least one having a prey density of $u_M$ or $u_m$. We refer to these particular nontrivial equilibrium points as $E_{SN_1}\equiv\left(u_M,v_M\right)$ and $E_{SN_2}\equiv\left(u_m,v_m\right)$, where $v_m=\frac{f(u_m)-1}{k_3}~\mbox{and}~v_M=\frac{f(u_M)-1}{k_3}$.Without loss of generality, we denote either equilibrium point as $E_{SN}=(u_{SN},v_{SN})$.  The Jacobian matrix evaluated at $E_{SN}$ is given by: 
\begin{eqnarray*}
    J(E_{SN}) = \left[
    \begin{array}{ccc}
    \frac{f'(u_{SN})f(u_{SN})}{k_{3}} & -f(u_{SN}) \\
    \frac{f'(u_{SN})\left(f(u_{SN})-1\right)}{k_{3}} & 1-f(u_{SN})
    \end{array}\right].
\end{eqnarray*}
A few steps of row operations show that the Jacobian matrix has rank one, indicating that zero is a simple eigenvalue of $J(E_{SN})$. Therefore, $E_{SN}$ is non-hyperbolic. The eigenvectors corresponding to this zero eigenvalue of the matrices $J(E_{SN})$ and $\left[J(E_{SN})\right]^T$ are given by
$$V = \left[\begin{array}{c}
1 \\
\frac{f'(u_{SN})}{k_{3}} \\
\end{array}\right] ~\mbox{and}~ W = \left[\begin{array}{c}
\frac{\left(1-f(u_{SN})\right)}{f(u_{SN})} \\
1 \\
\end{array}\right],$$ respectively. Here $\left[J(E_{SN})\right]^{T}$ denotes the transpose of the matrix $J(E_{SN})$. Theorem \ref{thm1} assures that $f(u_{SN})>1$, and whenever $u_{SN}\neq u_{CP}$, we have $H(E_{SN})\neq 0$. Using this informations, we obtain the transversality conditions following \cite{perko2013} as follows:
\[W^T\left[\frac{\partial F(u_{SN},\frac{f(u_{SN})-1}{k_{3}})}{\partial k_3} \right]= -\left(\frac{f(u_{SN})-1}{k_{3}}\right)^2\neq  0,~\mbox{and}\]
\[W^T\left[D^2F\left(u_{SN},\frac{f(u_{SN})-1}{k_{3}}\right)(V,V)\right] =\left(\frac{f(u_{SN})-1}{f(u_{SN})}\right)H(u_{SN}) \neq  0.\]
This implies that the system undergoes non-degenerate saddle-node bifurcations. Here, we have considered $k_3$ as the bifurcation parameter. However, other parameters can also yield these bifurcation thresholds, which are not covered here.

Note that the system experiences two saddle-node bifurcations of coexisting equilibrium points under certain choices of parameter values. In the next subsection, we will discuss a scenario where these two saddle-node bifurcations coincide.

\subsubsection{Cusp bifurcation}
With an appropriate choice of parameter values, it is possible to obtain a scenario in which the point $u_{CP}$ becomes the root of both the algebraic equations $F(u)=0$ and $G(u)=0$. In that scenario, the system exhibits a unique coexisting equilibrium point with prey density $u_{CP}$. Let us denote this unique equilibrium point as $E_{CP}\equiv\left(u_{CP},v_{CP}\right)$, where $v_{CP}=\frac{f(u_{CP})-1}{k_3}$. Proceeding as above, if we evaluate the Jacobian matrix $J(E_{CP})$, a simple and straightforward calculation shows that the matrix $J(E_{CP})$ has rank one and that the zero eigenvalue of $J(E_{CP})$ is simple, assuring that the equilibrium point is non-hyperbolic. The eigenvectors corresponding to this zero eigenvalue of the matrix $J(E_{CP})$ and $\left[J(E_{CP})\right]^T$ are
$$V = \left[\begin{array}{c}
1 \\
\frac{f'(u_{CP})}{k_{3}} \\
\end{array}\right] ~\mbox{and}~ W = \left[\begin{array}{c}
\frac{\left(1-f(u_{CP})\right)}{f(u_{CP})} \\
1 \\
\end{array}\right]$$ respectively. Using these eigenvectors, we can obtain the quadratic normal form coefficient evaluated at $E_{CP}$ as follows: 
\[W^T\left[D^2F\left(u_{CP},\frac{f(u_{CP})-1}{k_{3}}\right)(V,V)\right] =\left(\frac{f(u_{CP})-1}{f(u_{CP})}\right)H(u_{CP}) = 0,\]
since $H(u_{CP}) = 0$. From here, it can be concluded that the system undergoes a cusp bifurcation under this specific choice of parameter values.

From Theorem \ref{thm1}, it is evident that the system may exhibit at most three coexisting equilibrium points. In the next subsection, we discuss the results concerning the change in stability of these coexisting equilibrium points.

\subsubsection{Hopf bifurcation}
The system exhibits unique coexisting equilibrium point under certain scenarios, and the prey density of that unique equilibrium point must satisfy the inequalities $f(u_*)>1$ and $G(u_*)\leq 0$,  as discussed in Theorem \ref{thm1}. The equality $G(u_*)= 0$ occurs only when $u_*=u_{CP}$; otherwise $G(u_*)<0$. Consequently, we can conclude that $\text{Det}(J(E_*))>0$, whenever $u_*\neq u_{CP}$. However, the stability behavior of that unique coexisting equilibrium point changes when the condition $\text{Trace}(J(E_*))=0$ holds for some specific choice of parameter(s) value.

Theorem \ref{thm1} also addresses the scenario in which the system exhibits three coexisting equilibrium points. The equilibrium point with medium prey density cannot undergo a Hopf bifurcation because $G(u_{2*})>0$, which further implies $\text{Det}(J(E_{2*}))<0$. In contrast, for the equilibrium points associated with higher and lower prey densities, we have $G(u_{1*})<0$ and $G(u_{3*})<0$, ensuring that $\text{Det}(J(E_{1*}))>0$ and $\text{Det}(J(E_{3*}))>0$ respectively. If either of these two equilibrium points additionally satisfies the condition $\text{Trace}(J(E_*))=0$ for a specific choice of parameter(s) value, the stability of that equilibrium point undergoes an alteration.

Since we have previously chosen $k_3$ as the bifurcation parameter to obtain saddle-node bifurcations while keeping other parameters fixed, we will also consider $k_3$ in this context. Based on the scenarios discussed above, there exists a threshold value $k_{3H}$ for fixed values of the other parameters such that the stability of equilibrium points undergoes alteration. It can be shown that the system satisfies the following transversality conditions at the parametric threshold $k_3=k_{3H}$:
 \[{\text{Det}(J(E_*))}>0~~\mbox{and}\] \[\frac{d}{dk_3}{\left(\text{Trace}(J(E_*))\right)}_{|k_3=k_{3H}}=\frac{f'(u_*)\left(f(u_*)-1\right)}{k_{3H}^2}\neq 0.\]
Here $E_*=(u_*,v_*)$ represents the coexisting equilibrium point that experiences a stability alteration at this threshold value. This transversality condition implies that the coexisting equilibrium point undergoes a Hopf bifurcation under certain parametric choices.

Notably, if $\text{Trace}(J(E_{3*}))=0$ holds for a particular set of parameter values, then we must have $\text{Trace}(J(E_{1*}))>0$ for the same parameter choice. Conversely, if $\text{Trace}(J(E_{1*}))=0$ holds for some specific parameter values, then, under that parameter restriction, we will have $\text{Trace}(J(E_{3*}))<0$. This implies that coexisting equilibrium points with higher and lower prey densities cannot undergo Hopf bifurcations simultaneously.

In this subsection, we have only considered cases where the system exhibits either a unique or three coexisting equilibrium points. However, the system can also exhibit two coexisting equilibrium points. In the next subsection, we will discuss the scenario of a Hopf bifurcation when the system exhibits exactly two coexisting equilibrium points.

\subsubsection{Bogdanov-Takens bifurcation}
From the discussion in the subsection on saddle-node bifurcation, we know that whenever the system exhibits two coexisting equilibrium points, one of them must be the equilibrium point $E_{SN}=(u_{SN},v_{SN})$. Depending on the choice of parameter values, the equilibrium point $E_{SN}$ undergoes a Hopf bifurcation. We denote the equilibrium point at this bifurcation threshold as $E_{BT}=(u_{BT},v_{BT})$. At this parameter threshold, $E_{BT}$ satisfies the following conditions: \[\text{Trace}(J(E_{BT}))=0~\mbox{and}~\text{Det}(J(E_{BT}))=0.\] This bifurcation is known as the Bogdanov-Takens bifurcation. The transversality conditions for the Bogdanov-Takens bifurcation are also satisfied when evaluated at the equilibrium point $E_{BT}$, and they are given as follows:\begin{multline*}
    \frac{\partial^2F_1}{\partial u^2}-\frac{\frac{\partial F_1}{\partial u}\frac{\partial^2 F_1}{\partial u\partial v}}{\frac{\partial F_1}{\partial v}}+\frac{\partial^2F_2}{\partial u \partial v}
=\\-\frac{2r}{X_0}-f''(u_{BT})v_{BT}-\frac{\left(r-\frac{2ru_{BT}}{X_0}-f'(u_{BT})v_{BT}\right)f'(u_{BT})}{f(u_{BT})}+f'(u_{BT})\neq 0,\end{multline*}
\begin{multline*}\frac{1}{2}\frac{\partial F_1}{\partial u}\frac{\partial^2F_1}{\partial u^2}-\frac{\left(\frac{\partial F_1}{\partial u}\right)^2\frac{\partial^2 F_1}{\partial u\partial v}}{\frac{\partial F_1}{\partial v}}+\frac{1}{2}\frac{\partial F_1}{\partial v}\frac{\partial^2F_1}{\partial u^2}-\frac{\partial F_1}{\partial u}\frac{\partial^2F_2}{\partial u \partial v}=\\\left(-\frac{r}{X_0}-\frac{f''(u_{BT})v_{BT}}{2}-f'(u_{BT})\right)\left(r-\frac{2ru_{BT}}{X_0}-f'(u_{BT})v_{BT}\right)+f(u_{BT})k_3\\-\left(r-\frac{2ru_{BT}}{X_0}-f'(u_{BT})v_{BT}\right)\left(r-\frac{2ru_{BT}}{X_0}-f'(u_{BT})v_{BT}\right)\frac{f'(u_{BT})}{f(u_{BT})}\neq 0.\end{multline*}

\noindent These transversality conditions helps us to put the system into the following normal form: 
\begin{subequations}
\begin{eqnarray}
\frac{d u}{d t}&=&v,\\
\frac{d v}{d t}&=& \zeta_1(\varepsilon_1,\varepsilon_2)+\zeta_2(\varepsilon_1,\varepsilon_2)v+u^2+\frac{d_2}{\sqrt{d_1}}uv,
\end{eqnarray}
\end{subequations}
where $d_1=\frac{1}{2}\frac{\partial F_1}{\partial u}\frac{\partial^2F_1}{\partial u^2}-\frac{\left(\frac{\partial F_1}{\partial u}\right)^2\frac{\partial^2 F_1}{\partial u\partial v}}{\frac{\partial F_1}{\partial v}}+\frac{1}{2}\frac{\partial F_1}{\partial v}\frac{\partial^2F_1}{\partial u^2}-\frac{\partial F_1}{\partial u}\frac{\partial^2F_2}{\partial u \partial v},~d_2=\frac{\partial^2F_1}{\partial u^2}-\frac{\frac{\partial F_1}{\partial u}\frac{\partial^2 F_1}{\partial u\partial v}}{\frac{\partial F_1}{\partial v}}+\frac{\partial^2F_2}{\partial u \partial v}$ and $\varepsilon_1,\varepsilon_2$ are very small perturbation ($|\varepsilon_1|,|\varepsilon_2|\ll 1$) of the parameter values from the bifurcation threshold.

It is important to note that all the previously discussed temporal bifurcations occur in the vicinity of equilibrium points, where the species population is assumed to be homogeneously distributed in space. However, from an ecological perspective, studying homogeneous equilibrium points alone is insufficient for making accurate predictions about the long-term behavior of species. To improve the accuracy of predictions regarding the future dynamics of species habitats, it is necessary to incorporate spatial variables into the analysis.

 \section{Spatio-Temporal Model: Analytical results}{\label{SpatioModel}}
 The diffusion-driven continuous-time system under consideration, which incorporates Bazykin reaction kinetics with a generalized functional response, is given by: 
\begin{subequations}\label{alg_eqn_pde9}
\begin{eqnarray}
\frac{\partial u}{\partial t}&=&\Delta u \,\,+\,\,ru\left(1-\frac{u}{X_0}\right)-f\left(u\right)v,\\
\frac{\partial v}{\partial t}&=&d\Delta v\,\,+\,\,f\left(u\right)v-v-k_3{v}^2,
\end{eqnarray}
\end{subequations}
which has been non-dimensionalized as in \cite{Shen}, subject to the non-negative initial conditions $u(\mathbf{x},0)\geq 0\mbox{ and }v(\mathbf{x},0)\geq 0$ for $\mathbf{x} \in \Omega$ along with no-flux boundary conditions $\frac{\partial u(\mathbf{x},t)}{\partial n}=\frac{\partial v(\mathbf{x},t)}{\partial n}=0$ for $\mathbf{x}\in \partial\Omega$, where $n$ is the outward unit normal to $\partial\Omega$. We establish the existence of its global solution along with a priori bounds under specified initial and boundary conditions in Section \ref{Existence}. We analyze the stability of homogeneous steady states in Section \ref{localstablespace} and derive the criteria for the existence and non-existence of heterogeneous steady states in Section \ref{HetSteb}.

 \subsection{Existence and Boundedness of global solution}\label{Existence}
 
\begin{thm}\label{thm2} Let us denote $\mathcal{D}=\Omega\times (0,\infty)$, $\Bar{\mathcal{D}}=\Bar{\Omega}\times (0,\infty)$. If we choose the parameter values related to functional response in such a way that $f(X_0) >1$, then for the system (\ref{alg_eqn_pde9}),

\noindent(a) if $u_{0}(\mathbf{x},0) \geq 0$ and $v_0(\mathbf{x},0)\geq 0$, we can find a global solution $(u(\mathbf{x},t),v(\mathbf{x},t))$. If our initial condition is identically zero, then $(u(\mathbf{x},t),v(\mathbf{x},t))\equiv (0,0)$.

\noindent(b) If $k_3> \frac{f^{\infty}(f(X_0)-1)}{r}$ and $f\left(X_0-\left(\left[\frac{X_0f^{\infty}(f(X_0)-1)}{rk_3}\right]\right)\right)>1$, then any global solution $(u(\mathbf{x},t),v(\mathbf{x},t))$ of the system (\ref{alg_eqn_pde9}) satisfies the inequalities \[\underset{t \rightarrow \infty}{\limsup} \max_{\mathbf{x} \in \Bar{\Omega}}u(\mathbf{x},t)\leq X_0\mbox{~;~}\underset{t \rightarrow \infty}{\limsup} \max_{\mathbf{x} \in \Bar{\Omega}}v(\mathbf{x},t)\leq \left[(f(X_0)-1)/k_3\right],\] and \[\underset{t \rightarrow \infty}{\liminf} \min_{\mathbf{x} \in \Bar{\Omega}}u(\mathbf{x},t)\geq X_0-\left(\left[\frac{X_0f^{\infty}(f(X_0)-1)}{rk_3}\right]\right);\]
\[\underset{t \rightarrow \infty}{\liminf} \min_{\mathbf{x} \in \Bar{\Omega}}v(\mathbf{x},t)\geq \left[f\left(X_0-\left(\left[\frac{X_0f^{\infty}(f(X_0)-1)}{rk_3}\right]\right)\right)-1\right]/k_3.\]
i.e., the positive non-constant steady state persists.

\noindent(c) For any global solution $(u(\mathbf{x},t),v(\mathbf{x},t))$ of the system (\ref{alg_eqn_pde9}),
$$\underset{t \rightarrow \infty}{\limsup} \int_{\Omega} u(\mathbf{x},t) \,d\mathbf{x} \leq \frac{(r+1)^2X_0}{4r}|\Omega| \,\,;\,\, \underset{t \rightarrow \infty}{\limsup} \int_{\Omega} v(\mathbf{x},t) \,d\mathbf{x} \leq \frac{(f\left(X_0
) -1\right)}{k_3}|\Omega|.$$
\end{thm}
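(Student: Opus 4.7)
The plan is to treat the three parts sequentially, with part (a) providing the existence/positivity framework, part (b) establishing the pointwise asymptotic bounds via an iterative comparison scheme, and part (c) the integral bounds via direct integration over $\Omega$.

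For part (a), I would first invoke standard local existence for semilinear parabolic systems with smooth, locally Lipschitz reaction kinetics (for instance the Amann or semigroup approach), which produces a unique classical solution on some maximal interval $[0,T_{\max})$. Non-negativity follows because the system is quasi-positive: $F_1(0,v)=0$ and $F_2(u,0)=0$, so the closed first quadrant is positively invariant under the flow by the parabolic (weak) maximum principle applied to each component. Global existence then reduces to an a priori $L^\infty$ bound, which is supplied by the argument in part (b). For identically zero initial data, uniqueness forces $(u,v)\equiv(0,0)$.

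For part (b), I would run a four-step bootstrap, each step comparing one component against a scalar reaction--diffusion inequality that is in turn dominated by an autonomous ODE through the parabolic comparison principle. \emph{Step 1:} Since $f(u)v\geq 0$, we have $\partial_t u\leq \Delta u + ru(1-u/X_0)$, and comparison with the logistic ODE $\dot U=rU(1-U/X_0)$ yields $\limsup_t\max_{\bar\Omega}u\leq X_0$. \emph{Step 2:} Fix $\epsilon>0$; eventually $u\leq X_0+\epsilon$, so by monotonicity of $f$ we obtain $\partial_t v\leq d\Delta v+(f(X_0+\epsilon)-1)v-k_3v^2$ eventually, and the corresponding logistic ODE has attractor $(f(X_0+\epsilon)-1)/k_3$, well defined because $f(X_0)>1$; letting $\epsilon\to 0$ gives the stated bound on $v$. \emph{Step 3:} Once $v$ is eventually bounded by $(f(X_0)-1)/k_3$ and $f(u)\leq f^{\infty}$ globally, we obtain $\partial_t u\geq \Delta u + ru(1-u/X_0)-f^{\infty}(f(X_0)-1)/k_3$; the condition $k_3>f^{\infty}(f(X_0)-1)/r$ ensures that the corresponding sub-ODE admits a positive attractor whose value, estimated as in the statement, produces the claimed lower bound on $u$. \emph{Step 4:} Monotonicity of $f$ transfers the lower bound on $u$ to a lower bound on $f(u)$, and the hypothesis $f\bigl(X_0-X_0f^{\infty}(f(X_0)-1)/(rk_3)\bigr)>1$ makes the resulting sub-ODE for $v$ have a positive stable equilibrium, yielding the $\liminf$ bound on $v$.

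For part (c), I integrate each equation over $\Omega$ and use the divergence theorem together with the no-flux boundary conditions to eliminate the Laplacian contributions. For $u$, adding $\int_\Omega u\,d\mathbf{x}$ to both sides and bounding $(r+1)u-ru^2/X_0\leq (r+1)^2X_0/(4r)$ pointwise gives the scalar differential inequality $\frac{d}{dt}\int u + \int u\leq (r+1)^2X_0|\Omega|/(4r)$, and Gronwall's inequality delivers the claimed $\limsup$. For $v$, the upper bound on $u$ from part (b) reduces the problem at large times to $\frac{d}{dt}\int v\leq (f(X_0)-1)\int v-k_3\int v^2$, and Jensen's inequality $\int v^2\geq (\int v)^2/|\Omega|$ closes the estimate in terms of $\int v$ alone, producing a logistic bound with attractor $(f(X_0)-1)|\Omega|/k_3$.

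The main obstacle will be the careful execution of the iterative comparison in part (b): each bound has to be converted from an ``eventually valid'' inequality into an asymptotic majorant or minorant that can be fed into the next comparison without eroding the parameter thresholds $k_3>f^{\infty}(f(X_0)-1)/r$ and $f(u_{\min})>1$. The remaining ingredients—the Jensen step in part (c) for $v$, and the pointwise maximum of $(r+1)u-ru^2/X_0$ for $u$—are essentially bookkeeping once the framework is in place.
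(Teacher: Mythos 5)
Your overall strategy---scalar logistic comparison for part (b) and integration over $\Omega$ for part (c)---is the same as the paper's, and parts (a), (c) and Steps 1, 2, 4 of part (b) are sound. The differences there are benign: for (a) the paper uses Pao's existence--comparison theorem for mixed quasimonotone systems with the coupled upper solution $(\hat u(t),\hat v(t))$ and lower solution $(0,0)$, which delivers existence, uniqueness and the upper bounds in one stroke, whereas you assemble the same conclusion from semigroup local existence, quasi-positivity and the a priori bounds; for the $v$-estimate in (c) the paper simply integrates the pointwise bound from (b), while your Jensen-inequality closure is a valid (slightly longer) alternative.

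The genuine gap is Step 3 of part (b). You bound the predation loss by the constant $C=f^{\infty}(f(X_0)-1)/k_3$, so your comparison equation is the constantly harvested logistic ODE $\dot U=rU(1-U/X_0)-C$. Three things go wrong. First, the hypothesis $k_3>f^{\infty}(f(X_0)-1)/r$, i.e. $C<r$, does not guarantee that this ODE has any positive equilibrium: equilibria require $C\le rX_0/4$, which can fail (e.g. $X_0<4$ with $C$ close to $r$), and then every solution of the harvested ODE hits $0$ in finite time, so the comparison yields nothing. Second, even when the equilibria $U_{\pm}=\frac{X_0}{2}\bigl(1\pm\sqrt{1-4C/(rX_0)}\bigr)$ exist, $U_{+}$ attracts only initial data above the unstable root $U_{-}$ (an Allee-type threshold); your comparison starts from $\min_{\bar\Omega}u(\cdot,T_2)$, which may lie below $U_{-}$, so the sub-solution can still crash and no $\liminf$ follows. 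Third, even in the favourable case the limit is $U_{+}$, not the claimed $X_0(1-C/r)$, and the inequality $U_{+}\ge X_0(1-C/r)$ needs extra conditions (it can fail for $X_0<2$). The paper avoids all three problems by bounding the loss term \emph{proportionally to} $u$: it uses $f(u)v\le f^{\infty}u\,v\le f^{\infty}u\,D$ with $D=(f(X_0)-1)/k_3+\epsilon$ (tacitly invoking $f(u)\le f^{\infty}u$), so the comparison equation is the pure logistic $\dot z=z\bigl[(r-f^{\infty}D)-rz/X_0\bigr]$, which has no Allee threshold, attracts every positive initial datum, and whose carrying capacity $X_0\bigl(1-f^{\infty}D/r\bigr)$ converges, as $\epsilon\to 0$, exactly to the bound in the statement. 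Your Step 4 is fine in itself but inherits this gap, since it consumes Step 3's lower bound on $u$.
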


\begin{proof} (a) For $(u,v) \in \mathbb{R}^2_{\geq 0}=\{(u,v)\in \mathbb{R}^2: u \geq 0, v\geq 0\}$, we have the conditions $\frac{\partial F_1(u,v)}{\partial v}\leq 0$ and $\frac{\partial F_2(u,v)}{\partial u}\geq 0$. Hence the system (\ref{alg_eqn_pde9}) is a mixed quasimonotone system \cite{pao}. Let $\hat{u}(t)$ be the solution of the following initial value problem
\begin{equation}
   \begin{cases}
     \frac{du}{dt}=ru\left(1-\frac{u}{X_0}\right),\\
     u(0)=\sup_{\mathbf{x} \in \Bar{\Omega}}u(\mathbf{x},0),
   \end{cases}\,
\end{equation}
and $\hat{v}(t)$ be the solution of the following initial value problem
\begin{equation}
   \begin{cases}
     \frac{dv}{dt}=\left(f(\hat{u}(t)) -1\right)v-k_3v^2,\\
     v(0)=\sup_{\mathbf{x}
     \in \Bar{\Omega}}v(\mathbf{x},0).
   \end{cases}\,
\end{equation}
 Clearly, depending upon the supremum of the initial conditions $u(\mathbf{x},0)\mbox{ and }v(\mathbf{x},0)$, we can state that $\hat{u}(t) < \max\{u(0),X_0\}$ and $\hat{v}(t)< \max\{v(0), \frac{f(X_0)-1}{k_3}\}$ for all $t \in (0, \infty)$ and $\underset{t \rightarrow \infty}{lim}\hat{u}(t)=X_0\mbox{; }\underset{t \rightarrow \infty}{lim}\hat{v}(t)=\frac{f(X_0)-1}{k_3}$. Now if we consider $\left(\overline{u}(\mathbf{x},t),\overline{v}(\mathbf{x},t)\right)=\left(\hat{u}(t),\hat{v}(t)\right)$ and $(\underline{u}(\mathbf{x},t),\underline{v}(\mathbf{x},t))=(0,0)$, we can establish the following inequalities: \[\frac{\partial \overline{u}(\mathbf{x},t)}{\partial t}-\Delta \overline{u}(\mathbf{x},t)-F_1\left(\overline{u}(\mathbf{x},t),\underline{v}(\mathbf{x},t)\right)=0\geq0=\frac{\partial \underline{u}(\mathbf{x},t)}{\partial t}-\Delta \underline{u}(\mathbf{x},t)-F_1\left(\underline{u}(\mathbf{x},t),\overline{v}(\mathbf{x},t)\right),\]
\[\,\,\frac{\partial \overline{v}(\mathbf{x},t)}{\partial t}-d\Delta \overline{v}(\mathbf{x},t)-F_2\left(\overline{u}(\mathbf{x},t),\overline{v}(\mathbf{x},t)\right)=0\geq 0=\frac{\partial \underline{v}(\mathbf{x},t)}{\partial t}-d\Delta \underline{v}(\mathbf{x},t)-F_2\left(\underline{u}(\mathbf{x},t),\underline{v}(\mathbf{x},t)\right),\]
together with the inequality of initial conditions: \[\overline{u}(\mathbf{x},0)=\sup_{\mathbf{x} \in \Bar{\Omega}}u(\mathbf{x},0)\geq 0=\underline{u}(\mathbf{x},0)\mbox{; }\overline{v}(\mathbf{x},0)=\sup_{\mathbf{x} \in \Bar{\Omega}}v(\mathbf{x},0)\geq 0=\underline{v}(\mathbf{x},0),\]
and the inequality of boundary conditions:
\[\frac{\partial \overline{u}(\mathbf{x},t)}{\partial n}=0\geq 0=\frac{\partial \underline{u}(\mathbf{x},t)}{\partial n}\mbox{ ; }\frac{\partial \overline{v}(\mathbf{x},t)}{\partial n}=0\geq 0=\frac{\partial \underline{v}(\mathbf{x},t)}{\partial n}.\]
From all of these above inequalities, we can conclude that $\left(\overline{u}(\mathbf{x},t),\overline{v}(\mathbf{x},t)\right)$ and $(\underline{u}(\mathbf{x},t),\underline{v}(\mathbf{x},t))$ are the coupled upper and lower solution of the mixed quasimonotone system (\ref{alg_eqn_pde9}), respectively \cite{pao}. Now using the existence-comparison theorem $\left(\mbox{Theorem }3.3\mbox{ of chapter }8\mbox{ \cite{pao}}\right)$, we can conclude that the system (\ref{alg_eqn_pde9}) has a unique global solution $\left(u(\mathbf{x},t),v(\mathbf{x},t)\right)$ such that 
\[0\leq u(\mathbf{x},t) \leq \hat{u}(t) \mbox{ and }0\leq v(\mathbf{x},t) \leq \hat{v}(t).\]
Additionally, if we choose the initial conditions $\left(u(\mathbf{x},0),v,(\mathbf{x},0)\right)\equiv (0,0)$, then we can conclude that $\left(u(\mathbf{x},t),v(\mathbf{x},t)\right)\equiv (0,0)$.

\noindent(b) Let us choose an $\epsilon>0$, sufficiently small, in such a way that $r-f^{\infty}\left(\left[\frac{(f(X_0)-1)}{k_3}\right]+\epsilon\right)>0$. If the initial conditions are not identically zero, then using the limit: $\underset{t \rightarrow \infty}{\lim}\hat{u}(t)=X_0$ and $\underset{t \rightarrow \infty}{\lim}\hat{v}(t)=(f(X_0)-1)/k_3$, we can find $T_1\in (0,\infty)$ depending upon $\epsilon$ such that $ u(\mathbf{x},t)\leq X_0+\epsilon$ and $ v(\mathbf{x},t)\leq \left[(f(X_0)-1)/k_3\right]+\epsilon,~\forall~ \mathbf{x}\in \Omega\mbox{~whenever~}t\geq T_1$. Using the arbitrariness of $\epsilon$, one can easily conclude that 
\[\underset{t \rightarrow \infty}{\limsup} \max_{\mathbf{x} \in \Bar{\Omega}}u(\mathbf{x},t)\leq X_0\mbox{~;~}\underset{t \rightarrow \infty}{\limsup} \max_{\mathbf{x} \in \Bar{\Omega}}v(\mathbf{x},t)\leq \left[(f(X_0)-1)/k_3\right].\]

Also using the inequality $v(\mathbf{x},t)\leq \left[(f(X_0)-1)/k_3\right]+\epsilon,\mbox{~for~all~} (\mathbf{x},t)\in \Omega \times [T_1, \infty)$, we further obtain the inequality
\[\,\,\,\frac{\partial u(\mathbf{x},t)}{\partial t}-\Delta u(\mathbf{x},t)-ru(\mathbf{x},t)\left(1-\frac{u(\mathbf{x},t)}{X_0}\right)+f^{\infty}u(\mathbf{x},t)v(\mathbf{x},t)\]
\[\geq0=\frac{\partial z(t)}{\partial t}-\Delta z(t)-rz(t)\left(1-\frac{z(t)}{X_0}\right)+f^{\infty}u(\mathbf{x},t)\left(\left[\frac{(f(X_0)-1)}{k_3}\right]+\epsilon\right)\]
together with the inequality of initial condition
\[u(\mathbf{x},T_1)\geq \min_{\mathbf{x} \in \Bar{\Omega}}u(\mathbf{x},T_1)=z(T_1),\]
and boundary condition 
\[\frac{\partial u(\mathbf{x},t)}{\partial n}=0\geq 0=\frac{\partial z(t)}{\partial n},\]
where $z(t)$ is the solution to the initial value problem
\begin{equation}
   \begin{cases}
     \frac{dz}{dt}=rz\left(1-\frac{z}{X_0}\right)-f^{\infty}z\left(\left[\frac{(f(X_0)-1)}{k_3}\right]+\epsilon\right),~ t>T_1,\\
     z(T_1)=\underset{\mathbf{x} \in supp(u(\mathbf{x},T_1))}{\min}u(\mathbf{x},T_1)>\underset{\mathbf{x} \in supp\left(u(\mathbf{x},0)\right)}{\min}u(\mathbf{x},0)\geq 0.
   \end{cases}\,
\end{equation}
Therefore, using the comparison principle \cite{pao}, we can have the inequality $u(\mathbf{x},t)\geq z(t)$, i.e., $u(\mathbf{x},t)$ be an upper solution \cite{pao}. Since $\underset{t \rightarrow \infty}{\lim} z(t)=\frac{X_0}{r}\left[r-f^{\infty}\left(\left[\frac{(f(X_0)-1)}{k_3}\right]+\epsilon\right)\right]$, using this limit and arbitrariness of $\epsilon$, we can conclude that \[\underset{t \rightarrow \infty}{\liminf} \min_{\mathbf{x} \in \Bar{\Omega}}u(\mathbf{x},t)\geq X_0-\left(\left[\frac{X_0f^{\infty}(f(X_0)-1)}{rk_3}\right]\right)\geq 0.\]
Now, depending upon $\epsilon$, we can find $T_2\in(0,\infty)$ such that $\forall \mathbf{x}\in \Omega$, $u(\mathbf{x},t)\geq \left[X_0-\left(\left[\frac{X_0f^{\infty}(f(X_0)-1)}{rk_3}\right]\right)\right]-\epsilon,$ whenever $t\geq T_2$. Using this inequality, we obtain
\[\,\,\,\,\,\,\,\,\,\,\,\,\,\,\,\,\,\frac{\partial v(\mathbf{x},t)}{\partial t}-\Delta v(\mathbf{x},t)-f\left(X_0-\left(\left[\frac{X_0f^{\infty}(f(X_0)-1)}{rk_3}\right]\right)\right)v(\mathbf{x},t)+v(\mathbf{x},t)+k_3v^2(\mathbf{x},t)\]
\[\geq0=\frac{\partial w(t)}{\partial t}-\Delta w(t)-f\left(X_0-\left(\left[\frac{X_0f^{\infty}(f(X_0)-1)}{rk_3}\right]\right)\right) v(\mathbf{x},t)+v(\mathbf{x},t)+k_3v^2(\mathbf{x},t),\]
together with the inequality of initial condition
\[v(\mathbf{x},T_2)\geq \min_{\mathbf{x} \in \Bar{\Omega}}v(\mathbf{x},T_2)=w(T_2),\]
and the boundary condition 
\[\frac{\partial v(\mathbf{x},t)}{\partial n}=0\geq 0=\frac{\partial w(t)}{\partial n},\]
where $w(t)$ is the solution to the initial value problem
\begin{equation}
   \begin{cases}
     \frac{dw}{dt}=w(t)\left[f\left(X_0-\left(\left[\frac{X_0f^{\infty}(f(X_0)-1)}{rk_3}\right]\right)\right)-1-k_3w(t)\right],~ t>T_2,\\
     w(T_2)=\underset{\mathbf{x} \in supp(v(\mathbf{x},T_2))}{\min}v(\mathbf{x},T_2)>\underset{\mathbf{x} \in supp(v(\mathbf{x},0))}{\min}v(\mathbf{x},0)\geq 0.
   \end{cases}\,
\end{equation}
Therefore, using the comparison principle \cite{pao}, we can have the inequality $v(\mathbf{x},t)\geq w(t)$, i.e., $v(\mathbf{x},t)$ becomes an upper solution. Check that $\underset{t \rightarrow \infty}{\lim} w(t)=\left[f\left(X_0-\left(\left[\frac{X_0f^{\infty}(f(X_0)-1)}{rk_3}\right]\right)\right)-1\right]/k_3$, it is straight forward to conclude that \[\underset{t \rightarrow \infty}{\liminf} \min_{\mathbf{x} \in \Bar{\Omega}}v(\mathbf{x},t)\geq \left[f\left(X_0-\left(\left[\frac{X_0f^{\infty}(f(X_0)-1)}{rk_3}\right]\right)\right)-1\right]/k_3.\]

\noindent(c) Let us consider $P(t)=\displaystyle \int_{\Omega} u(x,t)\,dx$ and $Q(t)=\displaystyle \int_{\Omega} v(x,t)\,dx$, Then by using divergence theorem and boundary conditions, we can write 
\begin{equation}
   \begin{cases}
     \frac{dP(t)}{dt}=\displaystyle \int_{\Omega} \left[ru(\mathbf{x},t)\left(1-\frac{u(\mathbf{x},t)}{X_0}\right)-f(u(\mathbf{x},t))v(\mathbf{x},t)\right] d\mathbf{x},\\
     \frac{dQ(t)}{dt}= \displaystyle \int_{\Omega} [f(u(\mathbf{x},t))v(\mathbf{x},t)-v(\mathbf{x},t)-k_3v^2(\mathbf{x},t)] d\mathbf{x}.
   \end{cases}\,
\end{equation}
From the proof of Theorem \ref{thm2} (b), we know that, for all $t> T_1$, $v(\mathbf{x},t)\leq \frac{(f(X_0)-1)}{k_3}+\epsilon$. Hence we can show that $$\underset{t \rightarrow \infty}{\lim \sup} \int_{\Omega} v(\mathbf{x},t) \,d\mathbf{x} \leq \frac{(f(X_0) -1)}{k_3}|\Omega|.$$

Also from the equation $\frac{d(P(t)+Q(t))}{dt}=\int_{\Omega}[ru(1-\frac{u}{X_0})-v-k_3v^2]\, d\mathbf{x}$, we can say $P(t)+Q(t) \leq \frac{(r+1)^2X_0}{4r}|\Omega|$ which gives an upper bound for $P(t)$. Therefore we can clearly say $$\underset{t \rightarrow \infty}{\lim \sup} \int_{\Omega} u(\mathbf{x},t) \,d\mathbf{x} \leq \frac{(r+1)^2X_0}{4r}|\Omega|.$$
\end{proof}

 \subsection{Homogeneous steady-state analysis}\label{localstablespace}
 
From the discussion in Section \ref{LocBif}, we have already gained comprehensive insights into the homogeneous steady states of system (\ref{alg_eqn_pde9}), which include a trivial homogeneous steady state $E_0$, a semi-trivial homogeneous steady state $E_1$, and at most three coexisting homogeneous steady states. In this subsection, we focus on analyzing the stability of these steady states under spatially heterogeneous perturbations. Let us represent a typical homogeneous steady state as $E=(u_{eq},v_{eq})$. Introducing spatial perturbation as $u(\mathbf{x},t)=u_{eq}+\hat{u}(\mathbf{x},t)$ and $v(\mathbf{x},t)=v_{eq}+\hat{v}(\mathbf{x},t)$, and linearizing the system (\ref{alg_eqn_pde9}) about the steady state, we obtain:
\[\left[
    \begin{array}{ccc}
    \frac{\partial \hat{u}}{\partial t} \\
    \frac{\partial \hat{v}}{\partial t}
    \end{array}\right]=\mathcal{L}\left[\begin{array}{ccc}\hat{u} \\\hat{v}
    \end{array}\right]=(J(E)+\mathcal{D}\Delta)\left[\begin{array}{ccc}\hat{u} \\\hat{v}
    \end{array}\right],\]
where $(\hat{u}(\mathbf{x},t),\hat{v}(\mathbf{x},t))\in \{C(\bar{\Omega}\times [0,\infty))\cap C^{2,1}(\bar{\Omega}\times [0,\infty))\}^2$, $\mathcal{D}=diag(1,d)$ and $J(E)$ represents the Jacobian matrix evaluated at homogeneous steady state. Assume that $0=\mu_0<\mu_1<\cdots<\mu_j<\cdots$ are the eigenvalues of the operator $-\Delta$ on $\Omega$ with the homogeneous Neumann boundary condition and $\mathrm{E}_{\mu_j}$ be the eigenfunction space corresponding to the eigenvalue $\mu_j$. Let $\{\phi_{ij}|i=1,2,\cdots, dim(\mathrm{E}_{\mu_j}) \}$ be the orthogonal basis set of $\mathrm{E}_{\mu_j}$. Let us denote $\mathcal{S}_{\Delta}=\{\mu_0,\mu_1,\mu_2,\cdots\}$. If $\mathcal{U}_{ij}=\{c\phi_{ij}: c=\left[c_1,c_2\right]^T; c_1,c_2\in \mathbb{R}\}$, then it can be verify easily that each $\mathcal{U}_{j}$ is invariant under the operator $\mathcal{L}$, where \[\mathcal{U}_{j}=\overset{\dim(\mathrm{E}_{\mu_j})}{\underset{i=1}{\bigoplus}}\mathcal{U}_{ij},\]
and also it can be shown that
\[\mathcal{U}=\Biggl\{(\psi_1,\psi_2)^T\in C^2(\bar{\Omega})\times C^2(\bar{\Omega}): \frac{\partial \psi_1}{\partial n}=\frac{\partial \psi_2}{\partial n}=0\mbox{~for~}x\in \partial\Omega\Biggl\}=\overset{\infty}{\underset{j=1}{\bigoplus}}~\mathcal{U}_j.\]

\noindent Consequently it is straight forward that $\lambda$ be an eigenvalue of the operator $\mathcal{L}$, if and only if  $\lambda$ be an eigenvalue of each of the matrix $$\mathcal{L}_j=J(E)-\mu_{j}\mathcal{D}=\left[
    \begin{array}{ccc}
    r\left(1-\frac{2u_{eq}}{X_0}\right)-f'(u_{eq})v_{eq}-\mu_{j} & -f(u_{eq}) \\
    f'(u_{eq})v_{eq} & f(u_{eq})-1-2k_3v_{eq}-d\mu_{j}
    \end{array}\right],$$ for some $j\geq 0$. Hence, it is sufficient to discuss about the eigenvalues of each of the matrix $\mathcal{L}_j$. The characteristic equation corresponding to the matrix $\mathcal{L}_j$ be
\[det(\mathcal{L}_j-\lambda I_n)=\lambda^2-\Phi_{j}\lambda+\Psi_{j}=0,~~j=0,1,2,\ldots,\infty,\]
where \[\Phi_{j}=r\left(1-\frac{2u_{eq}}{X_0}\right)-(f'(u_{eq})+2k_3)v_{eq} + f(u_{eq})-1-(1+d)\mu_j,~\mbox{and}\]
\[\Psi_{j}=d\mu_j^2-\mu_j\left(d\left[r\left(1-\frac{2u_{eq}}{X_0}\right)-f'(u_{eq})v_{eq}\right]+f(u_{eq})-1-2k_3v_{eq}\right)+\det(J(E)).\]
Now evaluating the trace and determinant for all values of $j$ at different spatially homogeneous steady states, we obtain the following results.
\begin{thm}
    (a) The constant steady state $E_0=(0,0)$ of the system (\ref{alg_eqn_pde9}) is unstable. (b) The constant steady state $E_1=(X_0,0)$ of the system (\ref{alg_eqn_pde9}) is asymptotically stable whenever the condition \textit{\textbf{C1:}}$f(X_0)<1$ holds, and unstable otherwise.
    
\end{thm}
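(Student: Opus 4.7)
The plan is to apply the eigenvalue analysis developed just above the theorem statement to the two specific boundary equilibria $E_0$ and $E_1$, exploiting the fact that at both of these steady states the Jacobian $J(E)$ (and hence every matrix $\mathcal{L}_j = J(E) - \mu_j\mathcal{D}$) turns out to be upper triangular, so its eigenvalues can be read off directly from the diagonal without ever having to analyze the characteristic polynomial $\lambda^2 - \Phi_j\lambda + \Psi_j = 0$ in full generality.

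First I would evaluate $J(E_0)$ using property $(i)$, $f(0)=0$, and likewise $f'(0)$ appears only multiplied by $v_{eq}=0$, which collapses the $(1,2)$-entry contribution from $f'(u)v$ and gives
\[
J(E_0)=\begin{pmatrix} r & 0 \\ 0 & -1\end{pmatrix},\qquad
\mathcal{L}_j(E_0)=\begin{pmatrix} r-\mu_j & 0 \\ 0 & -1-d\mu_j\end{pmatrix}.
\]
Since $\mu_0=0$ lies in the spectrum $\mathcal{S}_\Delta$, the matrix $\mathcal{L}_0(E_0)$ has $r>0$ as an eigenvalue, so the operator $\mathcal{L}$ admits a positive real eigenvalue and $E_0$ is unstable; this proves part (a).

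For part (b) I would evaluate the Jacobian at $E_1=(X_0,0)$, again using $v_{eq}=0$ to zero out the $(2,1)$-entry, obtaining
\[
J(E_1)=\begin{pmatrix} -r & -f(X_0) \\ 0 & f(X_0)-1\end{pmatrix},\qquad
\mathcal{L}_j(E_1)=\begin{pmatrix} -r-\mu_j & -f(X_0) \\ 0 & f(X_0)-1-d\mu_j\end{pmatrix}.
\]
The eigenvalues of $\mathcal{L}_j(E_1)$ are $-r-\mu_j$ and $f(X_0)-1-d\mu_j$. The first is strictly negative for every $j\geq 0$ since $r>0$ and $\mu_j\geq 0$. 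For the second, the monotonicity $0=\mu_0<\mu_1<\cdots$ means the quantity $f(X_0)-1-d\mu_j$ is maximized at $j=0$, where it equals $f(X_0)-1$. Thus if $\textbf{C1}$ holds, i.e.\ $f(X_0)<1$, then $f(X_0)-1-d\mu_j<0$ for all $j\geq 0$, so every eigenvalue of every $\mathcal{L}_j(E_1)$ has negative real part, giving asymptotic stability of $E_1$. Conversely, if $f(X_0)>1$, the $j=0$ mode produces a positive eigenvalue of $\mathcal{L}_0(E_1)$ and $E_1$ is unstable.

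There is no genuine obstacle in this proof; the only thing to be careful about is (i) correctly using $f(0)=0$ from property $(i)$ to get the triangular structure at $E_0$, and (ii) observing that because every $\mathcal{L}_j(E_1)$ is triangular, the problem reduces entirely to the spatially homogeneous ($j=0$) mode, so diffusion plays no destabilizing role here. In other words, no Turing-type argument is needed for the boundary equilibria, which is consistent with the fact that $v_{eq}=0$ eliminates the cross-diagonal coupling that any diffusive instability mechanism would require.
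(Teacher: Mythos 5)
Your proposal is correct and follows essentially the same route as the paper: both reduce the problem to the mode matrices $\mathcal{L}_j = J(E) - \mu_j \mathcal{D}$ and observe that the $j=0$ (homogeneous) mode decides everything, the paper via the Routh--Hurwitz quantities $\Phi_j$ and $\Psi_j$ (e.g.\ $\Psi_0 = -r < 0$ at $E_0$, and $\Phi_j<0$, $\Psi_j>0$ for all $j$ at $E_1$ under \textbf{C1}), while you read the eigenvalues directly off the triangular structure of $\mathcal{L}_j$ at these boundary equilibria---a trivially equivalent bookkeeping for $2\times 2$ matrices. The only shared blemish is that both arguments settle the ``unstable otherwise'' claim only for $f(X_0)>1$, leaving the degenerate case $f(X_0)=1$ outside the reach of linear analysis.
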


\begin{proof}
    (a) Evaluating at the steady state $E_0=(0,0)$, we obtain $\Phi_{j}=r-1-(1+d)\mu_j~\mbox{and}~\Psi_{j}=d\mu_j^2-\mu_j(rd-1)-r,$ for $j=0,1,2,\cdots,\infty$. Observe that $\Psi_{0}=-r<0$, which implies this constant steady state is unstable.  

    \noindent (b) Evaluating at the steady state $E_1=(X_0,0)$, we obtain $\Phi_{j}=-r+f(X_0)-1-(1+d)\mu_j$ and $\Psi_{j}=d\mu_j^2-\mu_j(-rd+f(X_0)-1)-r(f(X_0)-1)$. Whenever we have $f(X_0)<1$, then $\Phi_{j}<0$ and $\Psi_{j}>0$ for all $j=0,1,2,\cdots,\infty.$ Hence, we can conclude that the constant homogeneous steady state $E_1=(X_0,0)$ is asymptotically stable if $f(X_0)<1$. Otherwise note that $\Psi_{0}=-r(f(X_0)-1)<0$, which implies that the steady state is unstable.
\end{proof}

Furthermore, if the parameter values of the functional response are chosen such that the inequality $f(X_0)<1$ holds, then the steady state $E_1=(X_0,0)$ becomes globally asymptotically stable. Consequently, pattern formation would not be observed under this parameter restriction. This is analytically proven as follows:
\begin{thm}
    If functional response satisfies the condition \textit{\textbf{C1}}, then any global solution $\left(u(\mathbf{x},t),v,(\mathbf{x},t)\right)$ for the system (\ref{alg_eqn_pde9}) satisfies \[\underset{t \rightarrow \infty}{\lim}\mbox{ } u(\mathbf{x},t)=X_0\mbox{,~}\underset{t \rightarrow \infty}{\lim}\mbox{ } v(\mathbf{x},t)=0 \mbox{~uniformly on }\Bar{\Omega}.\]
    That is, $E_1=(X_0,0)$ is globally asymptotically stable in $\mathbb{R}^2_{\geq 0}.$
\end{thm}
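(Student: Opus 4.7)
The plan is to iterate the comparison principle in four stages: bound $u$ from above, use this bound together with $f(X_0)<1$ to force $v$ to extinction, then use $v\to 0$ to drive $u$ up to carrying capacity $X_0$.

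First I would argue that $\limsup_{t\to\infty}\max_{\bar\Omega}u(\mathbf{x},t)\le X_0$. Since $f(u)v\ge 0$, the prey equation satisfies $u_t-\Delta u\le ru(1-u/X_0)$, and comparison with the logistic ODE $\hat u'=r\hat u(1-\hat u/X_0)$, $\hat u(0)=\sup_{\bar\Omega}u(\mathbf{x},0)$, yields $u(\mathbf{x},t)\le\hat u(t)$, with $\hat u(t)\to X_0$. By continuity of $f$ and the strict inequality $f(X_0)<1$, choose $\epsilon>0$ small enough that $f(X_0+\epsilon)<1$, and pick $T_1$ with $u(\mathbf{x},t)\le X_0+\epsilon$ on $\bar\Omega\times[T_1,\infty)$. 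Then on this region $f(u)\le f(X_0+\epsilon)$, so the predator satisfies
\[
v_t-d\Delta v\le \bigl(f(X_0+\epsilon)-1-k_3 v\bigr)v\le\bigl(f(X_0+\epsilon)-1\bigr)v.
\]
Comparing with $\tilde v'=(f(X_0+\epsilon)-1)\tilde v$, $\tilde v(T_1)=\sup_{\bar\Omega}v(\mathbf{x},T_1)$, gives exponential decay $v(\mathbf{x},t)\le\tilde v(T_1)\exp\bigl((f(X_0+\epsilon)-1)(t-T_1)\bigr)\to 0$ uniformly on $\bar\Omega$. This already establishes $v\to 0$.

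Second, I would recover the prey lower bound. Fix $\eta>0$ small. By Step~1, pick $T_2\ge T_1$ with $v(\mathbf{x},t)\le\eta$ on $\bar\Omega\times[T_2,\infty)$. Since $0\le f(u)\le f^{\infty}$, the prey equation obeys
\[
u_t-\Delta u\ge ru(1-u/X_0)-f^{\infty}\eta.
\]
Let $z(t)$ solve $z'=rz(1-z/X_0)-f^{\infty}\eta$ with $z(T_2)=\min_{\bar\Omega}u(\mathbf{x},T_2)$. For $\eta$ small, this ODE has two positive equilibria $z_{-}(\eta)<z_{+}(\eta)$ with $z_{-}(\eta)\to 0$ and $z_{+}(\eta)\to X_0$ as $\eta\to 0^+$. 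Comparison gives $u(\mathbf{x},t)\ge z(t)$. Provided $z(T_2)>z_{-}(\eta)$, the solution satisfies $z(t)\to z_{+}(\eta)$, so $\liminf_{t\to\infty}\min_{\bar\Omega}u(\mathbf{x},t)\ge z_{+}(\eta)$, and sending $\eta\to 0$ gives $\liminf\min u\ge X_0$. Combined with the upper bound from Step~1, this yields $u\to X_0$ uniformly on $\bar\Omega$.

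The main obstacle is justifying the ordering $z(T_2)>z_{-}(\eta)$, i.e.\ a positive lower bound on $u$ at some finite time that is independent of the later choice of $\eta$. I would handle this by the strong parabolic maximum principle: writing the prey equation as $u_t-\Delta u+M(\mathbf{x},t)u=ru\ge 0$ with $M(\mathbf{x},t)=ru/X_0+f(u)v/u$ bounded on compacts (using $f(u)/u\le f'(0)$ by concavity), and assuming the standing non-triviality $u(\mathbf{x},0)\not\equiv 0$ tacitly used throughout, one concludes $u(\mathbf{x},t_0)\ge c_0>0$ for some fixed $t_0>0$. Choose $\eta$ so small that $z_{-}(\eta)<c_0/2$, then apply the Step~2 comparison on $[t_0,\infty)$ (replacing $T_2$ by the later of $t_0$ and the time by which $v\le\eta$). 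This closes the argument, and the conclusion $E_1$ is globally asymptotically stable in $\mathbb{R}^2_{\ge 0}$ (modulo the trivial solution, already noted in Theorem~\ref{thm2}(a)) follows.
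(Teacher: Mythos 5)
Your overall architecture matches the paper's: cap $u$ by the logistic ODE via comparison, use $f(X_0)<1$ to force $v\to 0$ uniformly, then use the smallness of $v$ to push $u$ up to $X_0$. Your Step 1 is sound and essentially the paper's argument. The genuine gap is in Step 2 and in the patch you propose for it. Because you bound the predation loss by the \emph{constant} $f^{\infty}\eta$, your comparison ODE $z'=rz(1-z/X_0)-f^{\infty}\eta$ is bistable: it has a threshold equilibrium $z_-(\eta)$ below which solutions crash to zero in finite time, so the comparison is useless unless you start above $z_-(\eta)$. You correctly flag this, but the patch does not deliver it. The lower bound $u(\mathbf{x},t_0)\ge c_0$ holds at a \emph{fixed} time $t_0$, whereas the comparison must start at $T_2(\eta)$, the time by which $v\le\eta$, and $T_2(\eta)\to\infty$ as $\eta\to 0^+$. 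The strong maximum principle gives positivity of $u$ at each fixed time, not a time-uniform lower bound, so nothing you have proved prevents $\min_{\bar\Omega}u(\cdot,T_2(\eta))$ from degrading as $T_2(\eta)$ grows; the requirement $\min_{\bar\Omega}u(\cdot,T_2(\eta))>z_-(\eta)$ therefore makes the choice of $\eta$ circular ($\eta$ must be small relative to a quantity that itself depends on $\eta$). Concretely, the crude estimate $u_t-\Delta u\ge -C'u$ only yields $\min_{\bar\Omega}u(\cdot,T_2(\eta))\gtrsim c_0\,\eta^{\alpha}$ with $\alpha=C'/(1-f(X_0+\epsilon))$, while $z_-(\eta)\sim (f^{\infty}/r)\,\eta$; when $\alpha>1$ the ordering you need is not implied by your bounds.

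The repair — and this is exactly how the paper's proof avoids the issue — is to bound the predation loss \emph{per capita} rather than by a constant: concavity together with $f(0)=0$ gives $f(u)\le f'(0)\,u$, hence for $t\ge T_2(\eta)$,
\[
u_t-\Delta u\;\ge\;u\Bigl[r\Bigl(1-\frac{u}{X_0}\Bigr)-f'(0)\,\eta\Bigr].
\]
The resulting comparison ODE $z'=z\bigl[r(1-z/X_0)-f'(0)\eta\bigr]$ is monostable: for small $\eta$ its unique positive equilibrium $X_0\bigl(1-f'(0)\eta/r\bigr)$ attracts \emph{every} positive initial value, so mere positivity of $u(\cdot,T_2)$ — which the strong maximum principle does give — suffices, and letting $\eta\to 0^+$ yields $\liminf_{t\to\infty}\min_{\bar\Omega}u\ge X_0$, closing the proof. (The paper writes this per-capita constant as $f^{\infty}$ rather than $f'(0)$; that is harmless for the limiting argument but is only a valid pointwise bound when $f'(0)\le f^{\infty}$, so your $f'(0)$ from concavity is actually the correct constant to use.) With that single substitution in Step 2, your proof goes through and coincides in substance with the paper's.
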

\begin{proof}
    Following the proof of Theorem \ref{thm2}(a), we can say that, whenever $f(X_0)<1$, $\hat{v}(t)\to 0$ as $t\to \infty$, where we have the inequality $v(\mathbf{x},t)<\hat{v}(t)$ for all $\mathbf{x}\in \Omega$. Thus, $\underset{t \rightarrow \infty}{\lim}~ v(\mathbf{x},t)=0 \mbox{~uniformly on }\Bar{\Omega}$.

    Now, for significantly small $\epsilon>0$, we can find suitable $T_3\in (0,\infty)$ such that $v(\mathbf{x},t)<\epsilon,\forall \mathbf{x}\in \Omega, t\geq T_3.$ Therefore we obtain the inequality 
    \[\,\,\,\,\,\,\,\,\,\,\,\,\,\,\,\,\,\,\,\,\,\,\,\,\,\,\,\,\,\,\,\,\,\,\,\,\frac{\partial u(\mathbf{x},t)}{\partial t}-\Delta u(\mathbf{x},t)-ru(\mathbf{x},t)\left(1-\frac{u(\mathbf{x},t)}{X_0}\right)+f^{\infty}u(\mathbf{x},t)\epsilon\]
    \[\geq0=\frac{\partial \hat{z}(t)}{\partial t}-\Delta \hat{z}(t)-r\hat{z}(t)\left(1-\frac{\hat{z}(t)}{X_0}\right)+f^{\infty}u(\mathbf{x},t)\epsilon,\,\,\,\,\,\,\,\]
    together with the inequality of initial condition
\[u(\mathbf{x},T_3)\geq \min_{\mathbf{x} \in \Bar{\Omega}}u(\mathbf{x},T_3)=\hat{z}(T_3),\]
and boundary condition 
\[\frac{\partial u(\mathbf{x},t)}{\partial n}=0\geq 0=\frac{\partial \hat{z}(t)}{\partial n},\]
where $w(t)$ is the solution to the initial value problem
\begin{equation}
   \begin{cases}
     \frac{d\hat{z}}{dt}=\hat{z}(t)\left[r\left(1-\frac{\hat{z}(t)}{X_0}\right)-f^{\infty}\epsilon\right],~ t>T_3,\\
     \hat{z}(T_3)=\underset{\mathbf{x} \in supp(u(\mathbf{x},T_3))}{\min}u(\mathbf{x},T_3)>\underset{\mathbf{x} \in supp(u(\mathbf{x},0))}{\min}u(\mathbf{x},0)\geq 0.
   \end{cases}\,
\end{equation}
Therefore, using the comparison principle, we can have the inequality $v(\mathbf{x},t)\geq \hat{z}(t)$ \cite{pao}. Now observe that $\underset{t \rightarrow \infty}{\lim} \hat{z}(t)=\frac{X_0}{r}(r-f^{\infty}\epsilon)$, using arbitrariness of $\epsilon$, we can conclude that \[\underset{t \rightarrow \infty}{\liminf} \min_{\mathbf{x} \in \Bar{\Omega}}u(\mathbf{x},t)\geq X_0.\]
This along with the result of Theorem \ref{thm2}(b) implies that $\underset{t \rightarrow \infty}{\lim}\mbox{ } u(\mathbf{x},t)=X_0\mbox{~uniformly on }\Bar{\Omega}$.
\end{proof}

From our discussion of the temporal system, we already know that system \ref{alg_eqn_pde9} can exhibit at most three nontrivial homogeneous coexisting steady states, depending on the choice of parameter values. Next, we examine the stability of these steady states under spatially heterogeneous perturbations. Rather than analyzing all possible scenarios individually, we focus on the conditions under which a steady state remains stable in the presence of small-amplitude heterogeneous perturbations. For convenience, let us consider..
\[\mathcal{A}(d,\mu)=d\mu^2-\left(d\left[r\left(1-\frac{2u}{X_0}\right)-\frac{f'(u_*)\left(f(u_*)-1\right)}{k_3}\right]+1-f(u_*)\right)\mu+\left(1-f(u_*)\right)G(u_*),\]
\noindent and denote the discriminant of the quadratic equation $\mathcal{A}(d,\mu)=0$ by:
\[\mathcal{D}(d)=\left(d\left[r\left(1-\frac{2u}{X_0}\right)-\frac{f'(u_*)\left(f(u_*)-1\right)}{k_3}\right]+1-f(u_*)\right)^2+4d\left(f(u_*)-1\right)G(u_*).\] 
Note that if the discriminant is positive, the equation $\mathcal{A}(d,\mu)=0$ has two real roots, which are of the following form:
\[\mu^{\pm}(d)=\frac{\left(d\left[r\left(1-\frac{2u}{X_0}\right)-\frac{f'(u_*)\left(f(u_*)-1\right)}{k_3}\right]+1-f(u_*)\right)\pm\sqrt{\mathcal{D}(d)}}{2d}.\]
\noindent Assume two real roots of $\mathcal{A}(d,\mu)=0$ are feasible, in that case, let us define the set \[\mathcal{S}(d):=\left(\mu^-,\mu^+\right)-\mathbb{R}_{< 0}.\]
\noindent Since our focus is on the stability of coexisting homogeneous steady states under small-amplitude spatially heterogeneous perturbations, we consider only those steady states that are already stable under spatially homogeneous perturbations. For clarity in the upcoming discussion, let us denote the following expressions of $u$ as
\begin{align*}
    &D_1(u):= (f(u)-1)\left\{1+\frac{f'(u)}{k_3}\right\},\,\,\,\,\,D_2(u):= (f(u)-1)\left\{\frac{1}{d}+\frac{f'(u)}{k_3}\right\},\,\,\,\\ &D_3(u):= 2\sqrt{d(1-f(u))G(u)}.
\end{align*}
We now turn our attention to the local stability of coexisting homogeneous steady states under spatially heterogeneous perturbations.

\begin{thm}  The coexisting homogeneous steady-state $E_*=(u_*,v_*)$ is locally asymptotically stable

\noindent(i) whenever the prey density, $u_*$, satisfies the condition: 
 \begin{equation}\label{cond}
r\left(1-\frac{2u_*}{X_0}\right)<f'(u_*)\frac{\left(f(u_*)-1\right)}{k_3};
\end{equation}
\noindent(ii) whenever the prey density, $u_*$, does not satisfy the condition \ref{cond} but satisfy the inequality 
       \[r\left(1-\frac{2u_*}{X_0}\right)<min\{D_1(u_*),D_2(u_*)\};\]
\noindent(iii) whenever the prey density, $u_*$, does not satisfy the condition \ref{cond} but satisfy the inequality 
       \[D_2(u_*)<r\left(1-\frac{2u_*}{X_0}\right)<min\{D_1(u_*),D_2(u_*)+D_3(u_*)\};\]
\noindent(iv) whenever the prey density, $u_*$, does not satisfy the condition \ref{cond} but satisfy the inequality 
       \[D_2(u_*)+D_3(u_*)<r\left(1-\frac{2u_*}{X_0}\right)<D_1(u_*)\]
       and additionally satisfy the condition $\overline{\mathcal{S}(d)}\cap \mathcal{S}_{\Delta}=\phi$, where $\overline{\mathcal{S}(d)}$ represents closure of the set $\mathcal{S}(d)$.

\end{thm}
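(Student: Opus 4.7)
The plan is to exploit the eigenfunction decomposition already set up above. Since the linearized operator $\mathcal{L} = J(E_*) + \mathcal{D}\Delta$ leaves each invariant subspace $\mathcal{U}_j$ stable, its spectrum is the union of the spectra of the $2\times 2$ matrices $\mathcal{L}_j = J(E_*)-\mu_j\mathcal{D}$ as $j$ ranges over $0,1,2,\ldots$. By the Routh--Hurwitz criterion, local asymptotic stability of $E_*$ under spatially heterogeneous perturbations is therefore equivalent to the pair of conditions $\Phi_j<0$ and $\Psi_j>0$ holding for every $j\geq 0$. The entire proof reduces to checking these two inequalities in each of the four scenarios.

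The trace condition is the easy half. Because $\Phi_j=\Phi_0-(1+d)\mu_j$ with $\mu_j\geq 0$ and $d>0$, the family is maximized at $j=0$, so $\Phi_j<0$ for all $j$ collapses to $\Phi_0<0$, i.e., $\text{Trace}(J(E_*))<0$. A direct computation identifies this with $r\left(1-\frac{2u_*}{X_0}\right)<D_1(u_*)$, which is either assumed outright in cases (ii)--(iv) or follows in case (i) from the stronger inequality $r\left(1-\frac{2u_*}{X_0}\right)<f'(u_*)(f(u_*)-1)/k_3$, which already forces the $(1,1)$ entry of $J(E_*)$ to be negative and combines with $J_{22}=1-f(u_*)<0$. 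For the determinant I would view $\Psi_j=\mathcal{A}(d,\mu_j)$ as a quadratic in $\mu_j$ with positive leading coefficient $d$ and constant term $\mathcal{A}(d,0)=(1-f(u_*))G(u_*)>0$; the positivity of this constant term uses $f(u_*)>1$ together with $G(u_*)<0$, which holds throughout because the coexistence equilibria at which we can hope for stability (the unique $E_*$, or $E_{1*}$ and $E_{3*}$ in the three-equilibrium case) are exactly those with $G(u_*)<0$, as established in Section \ref{temstability}. The vertex of $\mathcal{A}(d,\cdot)$ sits at $\mu_v=[dJ_{11}+J_{22}]/(2d)$, whose sign flips at the threshold $r\left(1-\frac{2u_*}{X_0}\right)=D_2(u_*)$. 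In cases (i) and (ii) the hypothesis $r\left(1-\frac{2u_*}{X_0}\right)\leq D_2(u_*)$ places the vertex at $\mu_v\leq 0$, so $\mathcal{A}(d,\mu)>0$ for every $\mu\geq 0$ automatically. In case (iii) the vertex moves into $(0,\infty)$, and the extra bound $r\left(1-\frac{2u_*}{X_0}\right)<D_2(u_*)+D_3(u_*)$ is chosen precisely to force the discriminant $\mathcal{D}(d)$ to be negative (using $D_3(u_*)^2=4d(1-f(u_*))G(u_*)$), so the quadratic stays strictly positive on all of $\mathbb{R}$. In case (iv) the discriminant is nonnegative, the two real roots $\mu^\pm(d)$ exist and are both positive (since $\mu^+\mu^-=(1-f(u_*))G(u_*)/d>0$ and $\mu^++\mu^-=(dJ_{11}+J_{22})/d>0$), yielding the Turing instability band $\mathcal{S}(d)=(\mu^-,\mu^+)\setminus\mathbb{R}_{<0}$; the extra hypothesis $\overline{\mathcal{S}(d)}\cap\mathcal{S}_\Delta=\emptyset$ is then precisely the statement that no Laplacian eigenvalue $\mu_j$ lies inside that band, so $\mathcal{A}(d,\mu_j)>0$ for every admissible $j$.

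The genuine obstacle, and the only qualitative rather than purely algebraic step, is case (iv): unlike (i)--(iii) it is not a pure inequality among $r\left(1-\frac{2u_*}{X_0}\right)$ and the thresholds $D_1,D_2,D_3$, but instead a compatibility condition between the Turing band and the discrete spectrum $\mathcal{S}_\Delta$ of $-\Delta$ on $\Omega$, which makes it domain-dependent. Everything else is a careful but mechanical unpacking of the thresholds $D_1$, $D_2$, $D_3$ in terms of the trace, vertex, and discriminant of $\mathcal{A}(d,\cdot)$, which I would organize as a single quadratic-positivity lemma (positivity of $\mathcal{A}(d,\mu)$ on $[0,\infty)$ or on $\mathcal{S}_\Delta$) and then invoke four times.
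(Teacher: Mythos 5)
Your proposal follows the paper's proof essentially line for line in strategy: the same Fourier-mode reduction to the matrices $\mathcal{L}_j$, Routh--Hurwitz on each mode, the observation that the trace condition collapses to the mode $j=0$, and the treatment of $\Psi_j=\mathcal{A}(d,\mu_j)$ as a quadratic in $\mu$ handled through its value at zero, its vertex, and its discriminant, with case (iv) settled by the band condition $\overline{\mathcal{S}(d)}\cap\mathcal{S}_{\Delta}=\phi$. In fact you supply the algebra that the paper compresses into ``straightforward to check'' and ``a straightforward observation,'' and your cases (i), (ii) and the band argument in (iv) are correct.

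Precisely because you make the discriminant computation explicit, however, one step does not survive scrutiny and should be flagged: in case (iii) you claim that the hypothesis $r\left(1-\frac{2u_*}{X_0}\right)<D_2(u_*)+D_3(u_*)$ ``is chosen precisely to force $\mathcal{D}(d)$ to be negative.'' With the paper's definition $D_3(u_*)^2=4d\left(1-f(u_*)\right)G(u_*)=4d\det\left(J(E_*)\right)$ one has
\[
\mathcal{D}(d)=d^2\left[r\left(1-\tfrac{2u_*}{X_0}\right)-D_2(u_*)\right]^2-D_3(u_*)^2,
\]
so $\mathcal{D}(d)<0$ is equivalent to $\left|r\left(1-\frac{2u_*}{X_0}\right)-D_2(u_*)\right|<D_3(u_*)/d$, not $<D_3(u_*)$. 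Hence your implication, and with it the stability conclusion of case (iii), is justified only when $d\le 1$; for $d>1$ the hypothesis of (iii) still permits a genuine band of Turing-unstable modes. Symmetrically, your assertion in case (iv) that the discriminant is nonnegative uses $d\ge 1$ (though there the conclusion is unharmed, since a negative discriminant would only make $\mathcal{A}(d,\cdot)$ positive everywhere). To be clear, this defect is inherited from the theorem itself: the paper's own proof asserts the same implications without any computation, and both arguments become correct verbatim if $D_3$ is read as $2\sqrt{\left(1-f(u_*)\right)G(u_*)/d}$, i.e., with $D_3$ replaced by $D_3/d$ in the statement. So your reconstruction matches the paper's intent and level of rigor, but as written the case-(iii) step fails in exactly the diffusion regime $d>1$ that the paper's Turing analysis is concerned with.
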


\begin{proof} The characteristic equations corresponding to the matrix $\mathcal{L}_j$ evaluated at $E_*=(u_*,v_*)$ is given by
\[\lambda^2-\Phi_{_j}\lambda+\Psi_{j}=0,~~j=0,1,2,\cdots,\infty,\]
where \[\Phi_{j}=r\left(1-\frac{2u_*}{X_0}\right)-D_1(u_*)-(1+d)\mu_j,~\mbox{and}\]
\[\Psi_{j}=d\mu_j^2-d\left[r\left(1-\frac{2u_*}{X_0}\right)-D_2(u_*)\right]\mu_j+\det(J(E_*)).\]

\noindent(i) If Condition \ref{cond} holds, it is straightforward to check that $\Phi_{j}<0$ and $\Psi_{j}>0$  $\forall~j=0,1,2,\cdots, \infty$. Hence, both eigenvalues of each matrix $\mathcal{L}_j$ have negative real parts, ensuring that all elements of the spectrum of $\mathcal{L}$ be negative. Thus, we can conclude that the coexisting homogeneous steady state that meets the condition \ref{cond} is locally asymptotically stable.

\noindent(ii) If the condition \ref{cond} does not hold but the condition $r\left(1-\frac{2u_*}{X_0}\right)<min\{D_1(u_*),D_2(u_*)\}$ is satisfied, then $\Phi_{j}<0$ and $\Psi_{j}>0$ for all $j=0,1,2,,\cdots, \infty$. Hence both eigenvalues of each matrix $\mathcal{L}_j$ have negative real parts, and by the similar argument of the case (i), the coexisting homogeneous steady-state is locally asymptotically stable.

\noindent(iii) If the condition \ref{cond} does not hold but the condition $D_2(u_*)<r\left(1-\frac{2u_*}{X_0}\right)<min\{D_1(u_*),D_2(u_*)+D_3(u_*)\}$ is satisfied, then a straightforward observation shows that $\Phi_{j}<0$ and $\Psi_{j}>0$, for all $j=0,1,2,\cdots, \infty$. Therefore, for each $j\geq 0$, both the eigenvalues of the matrices $\mathcal{L}_j$ have negative real parts. Using a similar argument as in case (i), we conclude that the coexisting homogeneous steady state is locally asymptotically stable in this scenario.

\noindent(iv) If the condition \ref{cond} does not hold but the condition $D_2(u_*)+D_3(u_*)<r\left(1-\frac{2u_*}{X_0}\right)<D_1(u_*)$ is satisfied, then note that $\mathcal{D}(d)>0$. Consequently, we can compute $\mu^{\pm}(d)$, ensuring that the set $\mathcal{S}(d)$ is well-defined in this case. Furthermore, if the condition $\overline{\mathcal{S}(d)}\cap \mathcal{S}_{\Delta}=\phi$ holds for some choice of parameter values, then for all $j \geq 0$, both the eigenvalues of the matrices $\mathcal{L}_j$ will be negative under this parameter restriction. Hence, by a similar argument as in case (i), we conclude that the coexisting homogeneous steady-state is locally asymptotically stable.
\end{proof}

\noindent From the above theorem, it is easy to observe that if $d\leq 1$, then $D_1(u_*)\leq D_2(u_*)$. This implies that the coexisting homogeneous steady state of system \ref{alg_eqn_pde9} exhibits the same local stability behavior under spatially homogeneous and heterogeneous perturbations. Moreover, we can further demonstrate that a homogeneous coexisting steady state satisfying Condition \ref{cond} is globally asymptotically stable. This result implies that system \ref{alg_eqn_pde9} does not admit any heterogeneous steady state under this parameter restriction, regardless of the values of the diffusion coefficients.

\begin{thm}
    If a coexisting homogeneous steady state satisfies condition \ref{cond}, it is globally asymptotically stable.
\end{thm}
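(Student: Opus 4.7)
The plan is to prove global asymptotic stability via a Lyapunov functional argument. First, I would invoke Theorem~\ref{thm2}(b), which guarantees every global solution eventually enters a bounded, positively invariant rectangle, so it suffices to prove convergence for trajectories confined to such a rectangle.

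The candidate functional I would try is
\[
V(t) \;=\; \int_{\Omega}\!\left[\,\int_{u_*}^{u}\!\frac{f(s)-f(u_*)}{f(s)}\,ds \;+\; \int_{v_*}^{v}\!\frac{s-v_*}{s}\,ds\,\right] d\mathbf{x},
\]
which is non-negative and vanishes only at the equilibrium, since $f$ is strictly increasing (so the first integrand has the sign of $s-u_*$) and the second integrand has the sign of $s-v_*$. Differentiating along trajectories and integrating by parts under the no-flux boundary condition, the diffusive contribution reduces to
\[
-f(u_*)\!\int_{\Omega}\!\frac{f'(u)\,|\nabla u|^{2}}{f(u)^{2}}\,d\mathbf{x} \;-\; d\,v_*\!\int_{\Omega}\!\frac{|\nabla v|^{2}}{v^{2}}\,d\mathbf{x} \;\leq\; 0,
\]
using that $f'>0$ and $f(u_*),v_*>0$. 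For the reactive contribution, the equilibrium identities $f(u_*)=1+k_{3}v_*$ and $ru_*(1-u_*/X_{0})=f(u_*)v_*$ make the cross terms involving $(f(u)-f(u_*))(v-v_*)$ cancel exactly, leaving the pointwise integrand
\[
\bigl(f(u)-f(u_*)\bigr)\bigl(g(u)-g(u_*)\bigr) \;-\; k_{3}(v-v_*)^{2},
\]
where $g(u):=ru(1-u/X_{0})/f(u)$ is the prey-nullcline value, satisfying $g(u_*)=v_*$. A short calculation shows that condition~\eqref{cond} is precisely equivalent to $g'(u_*)<0$.

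The idea is then: since $f$ is monotone increasing, $(f(u)-f(u_*))(g(u)-g(u_*))$ has the same sign as $(u-u_*)(g(u)-g(u_*))$, which is $\leq 0$ whenever $g$ is non-increasing on the $u$-range of the invariant rectangle. This would give $dV/dt\leq 0$ with equality only at $(u_*,v_*)$, and LaSalle's invariance principle would then deliver global convergence.

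The main obstacle is exactly this last step: condition~\eqref{cond} directly supplies only the pointwise inequality $g'(u_*)<0$ at the equilibrium, whereas the Lyapunov estimate calls for a one-sided monotonicity of $g$ on an interval. For concrete parametrizations such as Holling~II or Ivlev the function $g$ is humped on $(0,X_0)$, so global monotonicity generally fails. Closing this gap will likely require exploiting the a priori lower bound from Theorem~\ref{thm2}(b) to confine $u$ to a sub-interval of $(f^{-1}(1),X_0)$ on which $g$ is indeed decreasing, or else replacing the Lyapunov argument entirely by Pao's coupled upper--lower solution iteration, in which one constructs monotone sequences of steady-state bounds and shows that condition~\eqref{cond} forces their limits to coincide at $(u_*,v_*)$.
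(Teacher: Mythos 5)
Your proposal is, in substance, the paper's own proof. The paper uses exactly the Lyapunov functional you wrote, obtains the same diffusive contribution (your two gradient integrals are its $I_2$), and its reactive integrand
\[
\frac{f(u)-f(u_*)}{f(u)}\left\{ru\left(1-\frac{u}{X_0}\right)-f(u)\frac{f(u_*)-1}{k_3}\right\}-k_3(v-v_*)^2
\]
is precisely your $\bigl(f(u)-f(u_*)\bigr)\bigl(g(u)-g(u_*)\bigr)-k_3(v-v_*)^2$, after using $v_*=(f(u_*)-1)/k_3$ and $g(u_*)=v_*$. The only divergence occurs at the step you flagged as the main obstacle: the paper closes it by simply asserting that under condition \ref{cond} the function $\phi(u):=ru\left(1-u/X_0\right)-f(u)v_*$ is decreasing in $u$; since $\phi(u_*)=0$, this monotonicity would give $(f(u)-f(u_*))\,\phi(u)<0$ for $u\neq u_*$, and the LaSalle-type conclusion follows.

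Your refusal to make that assertion is well founded, and it is the substantive content of your write-up. Condition \ref{cond} states only that $\phi'(u_*)<0$ (equivalently, $g'(u_*)<0$, as you computed); because both $r(1-2u/X_0)$ and $f'(u)$ are decreasing in $u$ (the latter since $f''<0$), the derivative $\phi'(u)=r(1-2u/X_0)-f'(u)v_*$ has no determined sign away from $u_*$, and the paper gives no argument that the pointwise condition propagates to an interval. The failure mode is exactly the one you describe: for a humped prey nullcline $g$ (e.g. Holling type II with $k_{2_H}X_0>1$, as in the paper's own parameter set) with $u_*$ on the descending branch and $g(0)<v_*$, the level set $g(u)=v_*$ has a second root $u_1<u_*$, and for $u\in(0,u_1)$ the term $(f(u)-f(u_*))(g(u)-g(u_*))$ is positive, so the claimed pointwise negativity of $dE/dt$ breaks down there. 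In short: you reproduced the paper's argument and stopped precisely where it ceases to be rigorous, whereas the paper states the unproven monotonicity as fact. The repairs you propose --- confining $u(\mathbf{x},t)$ via the asymptotic lower bound of Theorem \ref{thm2}(b) to a subinterval on which the sign condition holds, or replacing the Lyapunov argument with a monotone upper--lower solution iteration --- are exactly the kind of additional input needed to make the theorem's proof complete, and neither is supplied by the paper.
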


\begin{proof} Consider $\left(u(\mathbf{x},t),v(\mathbf{x},t)\right)$ as the solution of the system (\ref{alg_eqn_pde9}) and $E_*=(u_*,v_*)$ be the coexisting homogeneous steady state satisfying the condition \ref{cond}. Let us consider the Lyapunov function 
\[E\left(u(\mathbf{x},t),v(\mathbf{x},t)\right)=\underset{\Omega~~}{\int}\overset{~~u}{\underset{u_*~}{\int}}\frac{f(\zeta)-f(u_*)}{f(\zeta)}d\zeta dx+\underset{\Omega~~}{\int}\overset{~~v}{\underset{v_*~}{\int}}\frac{\nu-v_*}{\nu}d\nu dx.\]
Then \[\frac{d E}{d t}=\underset{\Omega~~}{\int}\left[\frac{f(u)-f(u_*)}{f(u)}\frac{\partial u}{\partial t}+\frac{v-v_*}{v}\frac{\partial v}{\partial t}\right]dx,\]
\[=I_1+I_2,~~~~~~~~~~~~~~~~~~~~~~~~~~~~~~~~~~~\]
where $$I_1=\underset{\Omega~~}{\int}\left[\frac{f(u)-f(u_*)}{f(u)}\left\{ru\left(1-\frac{u}{X_0}\right)-f(u)\frac{f(u_*)-1}{k_3}\right\}-k_3(v-v_*)^2\right]~\mbox{and}$$ $$I_2=-\left[\underset{\Omega~~}{\int}\frac{f'(u)f(u_*)}{f^2(u)}|\nabla u|^2dx+d\underset{\Omega~~}{\int}\frac{v_*}{v^2}|\nabla v|^2dx\right].$$
Clearly, $I_2<0$ for any solution $\left(u(\mathbf{x},t),v(\mathbf{x},t)\right)$. If the coexisting steady state satisfy the condition \ref{cond}, then the expression $ru\left(1-\frac{u}{X_0}\right)-f(u)\frac{f(u_*)-1}{k_3}$ become a decreasing function of $u$. Irrespective of the magnitude of $u$ is greater or less than $u_*$, the expression $\frac{f(u)-f(u_*)}{f(u)}\left\{ru\left(1-\frac{u}{X_0}\right)-f(u)\frac{f(u_*)-1}{k_3}\right\}$ will always negative. Therefore, we can say that $\frac{d E}{dt}<0$ along any orbit $\left(u(\mathbf{x},t),v(\mathbf{x},t)\right)$ of the system (\ref{alg_eqn_pde9}) with positive initial condition $(u_0,v_0)$ and $\frac{d E}{dt}=0$ only if $\left(u(\mathbf{x},t),v(\mathbf{x},t)\right)=(u_*,v_*)$. Hence, it is established that the coexisting homogeneous steady-state is globally asymptotically stable whenever the prey density $u_*$ satisfies the condition \ref{cond}. 
\end{proof}

\subsection{Heterogeneous positive steady state}\label{HetSteb} 
In this section, we study the criteria for the existence and non-existence of diffusion-driven spatially heterogeneous positive steady states for system (\ref{alg_eqn_pde9}). Positive steady states are the positive solutions of the following equations:
 \begin{subequations}\label{alg_eqn_pde2}
\begin{eqnarray}
-\Delta u&=&ru\left(1-\frac{u}{X_0}\right)-f\left(u\right)v,\\
-d\Delta v&=&f\left(u\right)v-v-k_3{v}^2,
\end{eqnarray}
\end{subequations}
where $u=u(\mathbf{x})~\mbox{and}~v=v(\mathbf{x})$, along with the no-flux boundary conditions $\frac{\partial u(\mathbf{x})}{\partial n}=\frac{\partial v(\mathbf{x})}{\partial n}=0$ for $\mathbf{x}\in \partial\Omega$, where $n$ is the outward unit normal to $\partial \Omega$.  We begin our discussion by establishing upper and lower bounds for the positive solutions of system (\ref{alg_eqn_pde2}). Following this, we prove the existence and non-existence of non-constant positive steady-state solutions.
\subsubsection{Bounds for positive steady states}
To establish upper and lower bounds for the positive solutions of system (\ref{alg_eqn_pde2}), we use the Harnack inequality \cite{LIN19881} up to the boundary and a proposition based on the Maximum Principle \cite{LOU199679}. For clarity and smooth readability, we restate both statements below:

\noindent \textbf{Proposition} (\textit{Maximum Principle \cite{LOU199679}}): Suppose $g\in C(\overline{\Omega}\times\mathbb{R})$ where $\Omega \in \mathbb{R}^n$, then 
\noindent(i) if $w(\mathbf{x})\in C^2(\Omega)\cap C^1(\overline{\Omega})$ and satisfies the condition \[\Delta w(\mathbf{x})+g(x,w(\mathbf{x}))\geq 0 ~\mbox{in}~\Omega,~ \frac{\partial w}{\partial \eta}\leq 0~\mbox{on}~\partial\Omega,\]
    and $w(x_0)=\underset{\overline{\Omega}}{\max}~w(\mathbf{x})$, then $g(x_0,w(x_0))\geq 0$.
 \noindent(ii) if $w(\mathbf{x})\in C^2(\Omega)\cap C^1(\overline{\Omega})$ and satisfies the condition \[\Delta w(\mathbf{x})+g(x,w(\mathbf{x}))\leq 0 ~\mbox{in}~\Omega,~ \frac{\partial w}{\partial \eta}\geq 0~\mbox{on}~\partial\Omega,\]
    and $w(x_0)=\underset{\overline{\Omega}}{\min}~w(\mathbf{x})$, then $g(x_0,w(x_0))\leq 0$.

\noindent Harnack inequality \cite{LIN19881} \textbf{:} Suppose $\Omega \in \mathbb{R}^n$ and $w(\mathbf{x})\in C^2(\Omega)\cap C^1(\overline{\Omega})$ be a positive solution to $d\Delta w(\mathbf{x})+c(\mathbf{x})w=0$ in $\Omega$ subjected to the no-flux boundary condition, where $d$ is positive constant and $c(\mathbf{x})\in C(\overline{\Omega})$, then there exist a positive constant $C_{\#}=C_{\#}(n,\Omega,d,\|c\|_{\infty})$ such that for any ball $B(z,R)$ of radius $R$ and center at $z\in \overline{\Omega}$, we have 
\[\underset{\mathbf{x}\in B(z,R)\cap \Omega}{\sup}~w(\mathbf{x})~\leq~ C_{\#}\underset{\mathbf{x}\in B(z,R)\cap \Omega}{\inf}~w(\mathbf{x}).\]

\noindent Now, we define and prove the main theorem, which provides an estimate for the upper and lower bounds of the solution to system (\ref{alg_eqn_pde2}).

\begin{thm}\label{Boundss}
If the choice of parameter values implies the inequality $f(X_0)>1$, then any positive solution $(u(\mathbf{x}),v(\mathbf{x}))$ of the system (\ref{alg_eqn_pde2}) is bounded above by  $u(\mathbf{x})\leq X_0$ and $v(\mathbf{x}) \leq \frac{f(X_0)-1}{k_3}$. Moreover, for some fixed $\tilde{d}$, there exist a constant $C$ depending upon $\Omega~\mbox{and}~ \tilde{d}$ such that $u(\mathbf{x})\geq C$ and $v(\mathbf{x}) \geq C$ for $d\geq \tilde{d}$.
\end{thm}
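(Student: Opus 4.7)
The plan is to establish the upper bounds via the Neumann Maximum Principle quoted in the excerpt, and the lower bounds via the Harnack inequality together with a contradiction argument whose uniformity in $d \geq \tilde d$ is the delicate point.

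For the upper bound, I would apply the Maximum Principle to each stationary equation separately. Writing the first equation as $\Delta u + g_1(\mathbf{x}, u) = 0$ with $g_1(\mathbf{x}, u) = ru(1 - u/X_0) - f(u)v(\mathbf{x})$, evaluation at an argmax $\mathbf{x}_0$ of $u$ gives $g_1(\mathbf{x}_0, u(\mathbf{x}_0)) \geq 0$, and since $f(u)v \geq 0$ on $\overline\Omega$, $u(\mathbf{x}_0)(1 - u(\mathbf{x}_0)/X_0) \geq 0$, forcing $u \leq X_0$. The same reasoning applied to $d\Delta v + g_2(\mathbf{x}, v) = 0$ at an argmax $\mathbf{y}_0$ of $v$ yields $f(u(\mathbf{y}_0)) - 1 - k_3 v(\mathbf{y}_0) \geq 0$; combining with the monotonicity of $f$ and the bound $u(\mathbf{y}_0) \leq X_0$ gives $v \leq (f(X_0) - 1)/k_3$, which is positive by the hypothesis $f(X_0) > 1$.

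For the lower bound, I would recast the equations as $\Delta u + c_1(\mathbf{x})u = 0$ and $d\Delta v + c_2(\mathbf{x})v = 0$ by setting $c_1(\mathbf{x}) = r(1 - u/X_0) - f(u)v/u$ and $c_2(\mathbf{x}) = f(u) - 1 - k_3 v$; since $f(0)=0$ and $f'(0) < \infty$, the quotient $f(u)/u$ is continuous down to $u=0$, and the upper bounds make $\|c_1\|_\infty$ and $\|c_2\|_\infty$ controllable by a constant $M$ depending only on the model parameters. The Harnack inequality then provides constants $C_\#^u$ and $C_\#^v(d)$, with $\sup u \leq C_\#^u \inf u$ and $\sup v \leq C_\#^v(d) \inf v$; here $C_\#^u$ is independent of $d$, and $C_\#^v(d)$ is uniformly bounded for $d \in [\tilde d, \infty)$. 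Argue by contradiction: if the lower bound fails, there is a sequence of positive steady states $(u_n, v_n)$ with $d_n \geq \tilde d$ and $\min u_n \to 0$ or $\min v_n \to 0$. By the Harnack estimates, either $u_n \to 0$ or $v_n \to 0$ uniformly on $\overline\Omega$. In the first case, for large $n$ the quantity $f(u_n) - 1$ is uniformly strictly negative, and integrating the second equation over $\Omega$ (the Laplacian vanishes by the no-flux boundary condition) gives $0 = \int_\Omega v_n(f(u_n) - 1 - k_3 v_n)\, d\mathbf{x} < 0$, a contradiction. In the second case, integrating the first equation yields $\int_\Omega r u_n(1 - u_n/X_0)\, d\mathbf{x} = \int_\Omega f(u_n) v_n\, d\mathbf{x} \to 0$, so applying Harnack to $u_n$ forces either $u_n \to 0$ uniformly (reducing to the previous subcase) or $u_n \to X_0$ uniformly; the latter is excluded by a rescaling argument on $w_n := v_n/\|v_n\|_\infty$, whose equation $-d_n\Delta w_n = w_n(f(u_n) - 1 - k_3 v_n)$ would pass, along a subsequence, to a positive Neumann solution of $-d_\infty \Delta w = (f(X_0) - 1) w$ with $\|w\|_\infty = 1$, contradicting the fact that no positive Neumann eigenfunction exists for the strictly positive eigenvalue $f(X_0) - 1$.

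The main obstacle is securing the uniformity of the Harnack constant for $v$ over $d \geq \tilde d$; this relies on the observation that Harnack constants for $d\Delta w + c\,w = 0$ do not blow up as $d$ grows, and indeed improve toward $1$ in the large-diffusion regime. A secondary technical point is the subcase $d_n \to \infty$ in the rescaling step, which is handled by dividing the $v$-equation by $d_n$ so that $-\Delta w_n \to 0$; then $w_n$ converges to a constant $w_\infty$, and integrating the original equation against $1$ forces $w_\infty \int_\Omega (f(X_0)-1)\, d\mathbf{x} = 0$, giving $w_\infty = 0$ and contradicting $\|w_n\|_\infty = 1$.
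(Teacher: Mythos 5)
Your proposal is correct in substance and shares the paper's overall skeleton --- Maximum Principle for the upper bounds (your argument there is essentially identical to the paper's), then Harnack plus a compactness/contradiction argument for the lower bounds --- but the endgame of the contradiction is organized differently. The paper extracts, via elliptic regularity, a subsequence converging in $C^2$ to a limit $(\tilde u,\tilde v)$ and rules out the three cases $\tilde u\equiv 0,\tilde v\not\equiv 0$; $\tilde u\not\equiv 0,\tilde v\equiv 0$; $\tilde u\equiv\tilde v\equiv 0$ from the limiting integral identities; in the case $\tilde v\equiv 0$ (where Harnack forces $\tilde u\equiv X_0$) it closes in one line by observing that the continuous integrand $v_j\left(f(u_j)-1-k_3v_j\right)$, having zero integral, must vanish at some $\mathbf{x}_j$, so $f(u_j(\mathbf{x}_j))-1-k_3v_j(\mathbf{x}_j)=0$, and $j\to\infty$ yields $f(X_0)=1$, contradicting $f(X_0)>1$. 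Your Case 1 ($u_n\to 0$ uniformly) is actually cleaner than the corresponding cases in the paper: the direct sign argument $0=\int_\Omega v_n\left(f(u_n)-1-k_3v_n\right)d\mathbf{x}<0$ needs no limit extraction and absorbs the paper's first and third cases at once. Conversely, your exclusion of $u_n\to X_0$ by normalizing $w_n=v_n/\|v_n\|_\infty$ and passing to a positive Neumann eigenfunction with strictly positive eigenvalue is valid (including the $d_n\to\infty$ branch), but it is much heavier than necessary: once $u_n\to X_0$ and $v_n\to 0$ uniformly, $f(u_n)-1-k_3v_n$ is uniformly positive for large $n$, and the same integral identity gives $0=\int_\Omega v_n\left(f(u_n)-1-k_3v_n\right)d\mathbf{x}>0$, the mirror image of your Case 1. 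Two points need tightening. First, your dichotomy ``$u_n\to 0$ or $u_n\to X_0$ uniformly'' does not follow from Harnack alone: Harnack plus $\int_\Omega u_n(1-u_n/X_0)\,d\mathbf{x}\to 0$ gives $u_n\to X_0$ only in $L^1$, and upgrading to uniform (or $C^2$) convergence requires exactly the elliptic-regularity compactness step that the paper invokes explicitly; this is routine but should be stated. Second, your claim that the Harnack constant for the $v$-equation is uniform over $d\geq\tilde d$ is correct and is justified most simply by dividing that equation by $d$, so the zeroth-order coefficient satisfies $\|c_2/d\|_\infty\leq\|c_2\|_\infty/\tilde d$; this is precisely why the paper's constant $C_*$ depends on $\tilde d$ rather than on $d$, and it deserves the one-line justification rather than an appeal to behavior of Harnack constants in the large-diffusion regime.
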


\begin{proof}
    Let us assume that $(u(\mathbf{x}),v(\mathbf{x}))$ is a solution to the system (\ref{alg_eqn_pde2}). Suppose $x_0,x_1\in \overline{\Omega}$ be two points such that $u(x_0)=\underset{\mathbf{x}\in \overline{\Omega}}{\max}~ u(\mathbf{x})$ and $v(x_1)=\underset{\mathbf{x}\in \overline{\Omega}}{\max} ~v(\mathbf{x})$. Using the above proposition, from the first equation of the system (\ref{alg_eqn_pde2}), we have 
    \[r\left(1-\frac{u(x_0)}{X_0}\right)-\frac{f\left(u(x_0)\right)}{u(x_0)}v(x_0)\geq 0.\]
    Therefore we can conclude that $u(x_0)\leq X_0$, which further implies that $u(\mathbf{x})\leq X_0$ for all $\mathbf{x}\in \Omega$. Similarly, from the second equation of the system (\ref{alg_eqn_pde2}), we get 
    \[f(u(x_1))-1-k_3v(x_1)\geq 0.\]
    So, we can easily conclude that $v(\mathbf{x})\leq \frac{f(u(x_1))-1}{k_3}\leq  \frac{f(X_0)-1}{k_3}$. Thus, we obtain the upper bound for the solution of the system (\ref{alg_eqn_pde2}). Next, we will prove that the solution must have a lower bound. Note that 
    \[\|r\left(1-\frac{u(\mathbf{x})}{X_0}\right)-\frac{f\left(u(\mathbf{x})\right)}{u(\mathbf{x})}v(\mathbf{x})\|_{\infty}\leq r~\mbox{and}~\|f(u(\mathbf{x}))-1-k_3v(\mathbf{x})\|_{\infty}\leq \left(f(X_0)-1\right).\]
    Hence, if we choose some suitable $d > \tilde{d}$, using Harnack inequality we can find some $C_{*}$ depending upon $n, \tilde{d},\max\{r,\left( f(X_0)-1\right)\} ~\mbox{and}~\Omega$, which satisfies the following inequalities
    \[\underset{\mathbf{x}\in \overline{\Omega}}{\sup}~u(\mathbf{x})~\leq~ C_{*}\underset{\mathbf{x}\in \overline{\Omega}}{\inf}~u(\mathbf{x}),~\mbox{and}~\underset{\mathbf{x}\in \overline{\Omega}}{\sup}~v(\mathbf{x})~\leq~ C_{*}\underset{\mathbf{x}\in \overline{\Omega}}{\inf}~v(\mathbf{x}),\]
    provided $d > \tilde{d}$. Suppose the solution has no lower bound, then there exist a sequence $\{d_j\}_{j=1}^{\infty}$ satisfying $d_j> \tilde{d}$, and corresponding positive solutions $\{\left(u_j(\mathbf{x}),v_j(\mathbf{x})\right)\}_{j=1}^{\infty}$ of system (\ref{alg_eqn_pde2}), such that 
    \[\underset{j\to \infty}{\lim} ~\underset{\mathbf{x}\in \overline{\Omega}}{\max}~u_j(\mathbf{x})=0,~\mbox{or}~\underset{j\to \infty}{\lim} ~\underset{\mathbf{x}\in \overline{\Omega}}{\max}~v_j(\mathbf{x})=0.\]
    Since each $\left(u_j(\mathbf{x}),v_j(\mathbf{x})\right)$ satisfy the system of equation (\ref{alg_eqn_pde2}), by integrating both side we obtain 
    \[\underset{\Omega}{\int}\left[ru_j(\mathbf{x})\left(1-\frac{u_j(\mathbf{x})}{X_0}\right)-f(u_j(\mathbf{x}))v_j(\mathbf{x})\right]d\mathbf{x}=0,\]\[\underset{\Omega}{\int}\left[f(u_j(\mathbf{x}))v_j(\mathbf{x})-v_j(\mathbf{x})-k_3v^2_j(\mathbf{x})\right]d\mathbf{x}=0.~~~~~~\]
   Using the regularity theory of elliptic equations \cite{EPDE}, there exists a sub-sequence of $\{\left(u_j(\mathbf{x}),v_j(\mathbf{x})\right)\}_{j=1}^\infty$, which we still denote by $\{\left(u_j(\mathbf{x}),v_j(\mathbf{x})\right)\}_{j=1}^\infty$, and two non-negative functions $(\tilde{u}(\mathbf{x}),\tilde{v}(\mathbf{x}))\in [C^2(\Omega)]^2$ such that $\left(u_j,v_j\right)\to (\tilde{u},\tilde{v})$ in $[C^2(\Omega)]^2$ as $j \to \infty$. Now, our assumption of not having a lower bound concludes that either $\tilde{u}\equiv 0$ or $\tilde{v}\equiv 0$. Together with this, we obtain the following integration
   \[\underset{\Omega}{\int}\left[r\tilde{u}(\mathbf{x})\left(1-\frac{\tilde{u}(\mathbf{x})}{X_0}\right)-f(\tilde{u}(\mathbf{x}))\tilde{v}(\mathbf{x})\right]d\mathbf{x}=0,\]\[\underset{\Omega}{\int}\left[f(\tilde{u}(\mathbf{x}))\tilde{v}(\mathbf{x})-\tilde{v}(\mathbf{x})-k_3\tilde{v}^2(\mathbf{x})\right]d\mathbf{x}=0,~~~~~~\]
   as $j \to \infty$. Now, we draw the conclusion by considering the following three cases.
   
  \noindent(i) If $\tilde{u}\equiv 0$ and $\tilde{v}\neq 0$, then from the inequality: $\underset{\mathbf{x}\in \overline{\Omega}}{\sup}~\tilde{v}(\mathbf{x})~\leq~ C_{*}\underset{\mathbf{x}\in \overline{\Omega}}{\inf}~\tilde{v}(\mathbf{x})$, we can conclude that $\tilde{v}> 0$ on $\Omega$. Therefore, $\underset{\Omega}{\int}\left[f(\tilde{u}(\mathbf{x}))\tilde{v}(\mathbf{x})-\tilde{v}(\mathbf{x})-k_3\tilde{v}^2(\mathbf{x})\right]d\mathbf{x}=\underset{\Omega}{\int}\left[-\tilde{v}(\mathbf{x})-k_3\tilde{v}^2(\mathbf{x})\right]d\mathbf{x}<0$ since $\tilde{v}(\mathbf{x})$ is positive, which is a contradiction.
 
 \noindent(ii) From the inequality: $\underset{\mathbf{x}\in \overline{\Omega}}{\sup}~\tilde{u}(\mathbf{x})~\leq~ C_{*}\underset{\mathbf{x}\in \overline{\Omega}}{\inf}~\tilde{u}(\mathbf{x})$, we can conclude that $0<\tilde{u}\leq X_0$ on $\overline{\Omega}$ if $\tilde{u}\neq 0$ and $\tilde{v}\equiv 0$, which further implies $\tilde{u}\equiv X_0$. Since we have \[\underset{\Omega}{\int}\left[f(u_j(\mathbf{x}))v_j(\mathbf{x})-v_j(\mathbf{x})-k_3v^2_j(\mathbf{x})\right]d\mathbf{x}=0,\] we can find $\mathbf{x}_j\in \Omega$, assuming the sequence $\{\mathbf{x}_j\}_{j=1}^\infty$ converges to a limit within $\overline{\Omega}$, such that
\[f(u_j(\mathbf{x}_j))v_j(\mathbf{x}_j)-v_j(\mathbf{x}_j)-k_3v^2_j(\mathbf{x}_j)=0.\]
This further implies that $f(u_j(\mathbf{x}_j))-1-k_3v_j(\mathbf{x}_j)=0$. Taking the limit $j \to \infty$, we obtain $f(X_0)=1$, which is a contradiction to the assumption $f(X_0)>1$. 

\noindent(iii) If $\tilde{u}\equiv 0$ and $\tilde{v}\equiv 0$, then using similar argument as above, we can find a sequence $\{\mathbf{x}_j\}_{j=1}^{\infty}\in \Omega$ converges to a limit within $\overline{\Omega}$, such that
\[f(u_j(\mathbf{x}_j))v_j(\mathbf{x}_j)-v_j(\mathbf{x}_j)-k_3v^2_j(\mathbf{x}_j)=0\]
holds. This further implies that $f(u_j(\mathbf{x}_j))-1-k_3v_j(\mathbf{x}_j)=0$. Taking the limit $j \to \infty$, we obtain $-1=0$, which is a contradiction.

   Therefore, the solution of the system (\ref{alg_eqn_pde2}) must have some lower bound. 
\end{proof}

\subsubsection{Non-existence of positive heterogeneous steady state}
 Now, we determine the conditions that establish the non-existence of positive heterogeneous solutions of system (\ref{alg_eqn_pde2}), which implies the non-existence of positive heterogeneous steady states for system (\ref{alg_eqn_pde9}).
\begin{thm}\label{non-existence}
    If the condition $f(X_0)>1$ holds and the parameter value of $r$ is chosen so that $r < \mu_1$, where $\mu_1$ is the smallest non-zero eigenvalue of the operator $-\Delta$, then there exists a positive $d_*$, depending upon the set $\Omega$ and parameter values involved with the system, such that the system (\ref{alg_eqn_pde2}) does not have any positive solution.
\end{thm}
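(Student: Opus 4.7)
The plan is an energy/integral argument: multiply each equation by its deviation from the spatial mean, integrate by parts using the no-flux condition, and close with Poincar\'e's inequality. The hypothesis $r<\mu_1$ will absorb the destabilizing term in the $u$-estimate, after which a sufficiently large $d$ absorbs everything in the $v$-estimate and forces the solution to be spatially constant.

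Let $(u,v)$ be a positive solution of (\ref{alg_eqn_pde2}) and set $\bar u=|\Omega|^{-1}\int_\Omega u\,d\mathbf x$, $\bar v=|\Omega|^{-1}\int_\Omega v\,d\mathbf x$. First I multiply the first equation by $(u-\bar u)$ and integrate. Since $\int_\Omega(u-\bar u)\,d\mathbf x=0$, only the mean-deviation part of the reaction survives, and using the algebraic identity
\[
g_1(u,v)-g_1(\bar u,\bar v)=r(u-\bar u)\!\left(1-\tfrac{u+\bar u}{X_0}\right)-v\bigl(f(u)-f(\bar u)\bigr)-f(\bar u)(v-\bar v),
\]
together with the bounds $u,\bar u\leq X_0$ and $v\leq(f(X_0)-1)/k_3$ from Theorem \ref{Boundss}, the non-positivity of the middle term (monotonicity of $f$), and Cauchy--Schwarz on the cross term, I obtain
\[
\|\nabla u\|_2^{2}\;\leq\; r\,\|u-\bar u\|_2^{2}+f(X_0)\,\|u-\bar u\|_2\,\|v-\bar v\|_2.
\]
Applying Poincar\'e's inequality $\|w-\bar w\|_2^2\leq \mu_1^{-1}\|\nabla w\|_2^2$ and using $r<\mu_1$, I rearrange this to
\[
\|\nabla u\|_2 \;\leq\; \frac{f(X_0)}{\mu_1-r}\,\|\nabla v\|_2.
\]
This is the single place where the hypothesis $r<\mu_1$ is needed, since otherwise the coefficient $1-r/\mu_1$ of $\|\nabla u\|_2^2$ could vanish or turn negative.

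Next I repeat the calculation on the second equation: multiply by $(v-\bar v)$, integrate, divide by $d$, and expand
\[
g_2(u,v)-g_2(\bar u,\bar v)=v\bigl(f(u)-f(\bar u)\bigr)+\bigl(f(\bar u)-1-k_3(v+\bar v)\bigr)(v-\bar v).
\]
Concavity gives $|f(u)-f(\bar u)|\leq f'(0)\,|u-\bar u|$; the a priori upper bounds on $u$ and $v$, the non-positive sign of $-k_3(v+\bar v)$ (which may be discarded), and Cauchy--Schwarz together with Poincar\'e produce
\[
d\,\|\nabla v\|_2^{2}\;\leq\;\frac{f'(0)(f(X_0)-1)}{k_3\,\mu_1}\,\|\nabla u\|_2\,\|\nabla v\|_2+\frac{f(X_0)-1}{\mu_1}\,\|\nabla v\|_2^{2}.
\]
Substituting the earlier bound on $\|\nabla u\|_2$ collapses the right-hand side into a multiple of $\|\nabla v\|_2^{2}$, yielding $(d-d_\ast)\|\nabla v\|_2^{2}\leq 0$ with the explicit threshold
\[
d_\ast \;=\; \frac{f(X_0)-1}{\mu_1}\left[1+\frac{f(X_0)\,f'(0)}{k_3(\mu_1-r)}\right],
\]
which depends only on $\Omega$ (through $\mu_1$) and the model parameters. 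For $d>d_\ast$ we are forced to have $\nabla v\equiv 0$, hence $\nabla u\equiv 0$ by the preceding inequality, and $(u,v)$ must coincide with a constant coexistence state --- contradicting the assumption that the solution was spatially heterogeneous.

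The main obstacle I anticipate is bookkeeping of signs: several of the terms produced by expanding $g_1-g_1(\bar u,\bar v)$ and $g_2-g_2(\bar u,\bar v)$ are naturally non-positive (notably $-v(u-\bar u)(f(u)-f(\bar u))$ and $-k_3(v+\bar v)(v-\bar v)^2$) and must be \emph{dropped} rather than bounded, otherwise the constants swell and cannot be absorbed by a single parameter $d$. A secondary care is verifying that the constants entering the $v$-estimate depend only on the fixed data $(f,X_0,k_3,\mu_1)$ and not on the particular solution $(u,v)$, so that $d_\ast$ is a genuine constant as the statement requires.
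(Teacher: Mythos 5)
Your proof is correct and follows essentially the same route as the paper: multiply each equation by its deviation from the spatial mean, integrate using the no-flux condition, control the reaction terms with the a priori bounds of Theorem \ref{Boundss} and the monotonicity/concavity of $f$, and close with Poincar\'e's inequality, with $r<\mu_1$ absorbing the prey term and a large $d$ absorbing the predator term. The only notable difference is bookkeeping: you eliminate $\|\nabla u\|_2$ by direct substitution, which yields an explicit, solution-independent threshold $d_*$, whereas the paper adds the two estimates and applies an $\epsilon$-Young inequality to the cross term with the constant $L=\beta_1\overline{v}-f(\overline{u})$ (your $f'(0)$ is exactly the paper's $\beta_1$); your variant is in fact slightly cleaner, since the paper's $L$ nominally depends on the solution through $\overline{u},\overline{v}$ and requires the same a priori bounds you invoke to be made uniform.
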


\begin{proof}
    Consider $(u(\mathbf{x}),v(\mathbf{x}))$ to be a positive non-constant solution of the system (\ref{alg_eqn_pde2}). Using the notation $\overline{u}=\frac{1}{|\Omega|}\int_{\Omega}u(\mathbf{x})d\mathbf{x}$ and $\overline{v}=\frac{1}{|\Omega|}\int_{\Omega}v(\mathbf{x})d\mathbf{x}$ to represent the spatial averages of $u(\mathbf{x})~\mbox{and}~v(\mathbf{x})$, we can easily derive the following result from this representation:
    \[\int_{\Omega}(u-\overline{u})d\mathbf{x}=0~\mbox{and}~\int_{\Omega}(v-\overline{v})d\mathbf{x}=0.\]
    If we Multiply both side of the first equation of the system (\ref{alg_eqn_pde2}) by $(u-\overline{u})$ and integrate over $\Omega$, using the no-flux boundary condition, we get
    \begin{align*}
        \underset{\Omega}{\int}|\nabla(u-\overline{u})|^2d\mathbf{x} &=\underset{\Omega}{\int}(u-\overline{u})F_1(u,v)d\mathbf{x}=\underset{\Omega}{\int}(u-\overline{u})\left(F_1(u,v)-F_1(\overline{u},\overline{v})\right)d\mathbf{x}\\
        &=\underset{\Omega}{\int}(u-\overline{u})^2\left\{r-\frac{r}{X_0}(u+\overline{u})\right\}d\mathbf{x}-\underset{\Omega}{\int}(u-\overline{u})\left\{f(u)v-f(\overline{u})\overline{v}\right\}d\mathbf{x}\\
        &\leq r\underset{\Omega}{\int}(u-\overline{u})^2d\mathbf{x}-f(\overline{u})\underset{\Omega}{\int}(u-\overline{u})(v-\overline{v})d\mathbf{x}.
    \end{align*}

    \noindent Similarly, multiplying the second equation of the system (\ref{alg_eqn_pde2}) by $(v-\overline{v})$ and integrating over $\Omega$ using the no-flux boundary condition, we obtain
    \begin{align*}
        d\underset{\Omega}{\int}|\nabla(v-\overline{v})|^2d\mathbf{x}&=\underset{\Omega}{\int}(v-\overline{v})F_2(u,v)d\mathbf{x}=\underset{\Omega}{\int}(v-\overline{v})\left(F_2(u,v)-F_2(\overline{u},\overline{v})\right)d\mathbf{x}\\
        &\leq \underset{\Omega}{\int}(v-\overline{v})\left[f(u)v-f(\overline{u})\overline{v}\right]d\mathbf{x}- \underset{\Omega}{\int}(v-\overline{v})^2\left\{1+k_3(v+\overline{v})\right\}d\mathbf{x}\\
        &\leq (f^{\infty}-1)\underset{\Omega}{\int}(v-\overline{v})^2d\mathbf{x}+\beta_1\overline{v}\underset{\Omega}{\int}(u-\overline{u})(v-\overline{v})d\mathbf{x},
    \end{align*}
where $\beta_1=\lim_{u\to 0}\frac{f(u)}{u}<\infty$. Adding both side of above two inequalities, we obtain
\[\underset{\Omega}{\int}|\nabla(u-\overline{u})|^2d\mathbf{x}+d\underset{\Omega}{\int}|\nabla(v-\overline{v})|^2d\mathbf{x}~~~~~~~~~~~~~~~~~~~~~~~~~~~~~~~~\]\[\leq r\underset{\Omega}{\int}(u-\overline{u})^2d\mathbf{x}+(f(X_0)-1)\underset{\Omega}{\int}(v-\overline{v})^2d\mathbf{x}+L\underset{\Omega}{\int}(u-\overline{u})(v-\overline{v})d\mathbf{x},\]
where $L=(\beta_1\overline{v}-f(\overline{u}))$. Now if we apply $\epsilon$-Young inequality, the above inequality becomes
\[\underset{\Omega}{\int}|\nabla(u-\overline{u})|^2d\mathbf{x}+d\underset{\Omega}{\int}|\nabla(v-\overline{v})|^2d\mathbf{x}~~~~~~~~~~~~~~~~~~~~~~~~~~~~\]\[\leq \left(r+\frac{L}{2\epsilon}\right)\underset{\Omega}{\int}(u-\overline{u})^2d\mathbf{x}+\left\{(f(X_0)-1)+\frac{L\epsilon}{2}\right\}\underset{\Omega}{\int}(v-\overline{v})^2d\mathbf{x},\]
where $\epsilon$ is an arbitrary positive constant. Since $\mu_1$ is the minimal eigenvalue of $-\Delta$, using Poincare inequality \cite{Leoni2009AFC}, finally we obtain
\[\underset{\Omega}{\int}|\nabla(u-\overline{u})|^2d\mathbf{x}+d\underset{\Omega}{\int}|\nabla(v-\overline{v})|^2d\mathbf{x}~~~~~~~~~~~~~~~~~~~~~~~~~~~~~~~~~~~~~~~~~~~~~\]\[\leq \frac{1}{\mu_1}\left(\left(r+\frac{L}{2\epsilon}\right)\underset{\Omega}{\int}|\nabla(u-\overline{u})|^2d\mathbf{x}+\left\{(f(X_0)-1)+\frac{L\epsilon}{2}\right\}\underset{\Omega}{\int}|\nabla(v-\overline{v})|^2d\mathbf{x}\right).\]
Using the assumption $r<\mu_1$, we can establish the inequality $\frac{1}{\mu_1}\left(r+\frac{L}{2\epsilon}\right)<1$, by appropriately choosing a suitable value of $\epsilon$.Next, by selecting $d_*\leq d$ with $d_*=\frac{1}{\mu_1}\left\{(f(X_0)-1)+\frac{L\epsilon}{2}\right\}$, we obtain $\nabla(u-\overline{u})=\nabla(v-\overline{v})=0$, which implies that the positive solution of the system (\ref{alg_eqn_pde2}) is homogeneous. This completes the proof.
\end{proof}

\subsubsection{Existence of heterogeneous positive steady state} 
Now we derive the conditions for existence of positive heterogeneous solution for the system (\ref{alg_eqn_pde2}).
Let us define $W(\mathbf{x})=(u(\mathbf{x}),v(\mathbf{x}))^T$, $W_*=(u_*,v_*)^T$, $\mathcal{U}^+=\left\{(u,v)\in \mathcal{U}| u>0, v>0~\mbox{for}~\mathbf{x}\in \overline{\Omega}\right\}$ and $\mathcal{F}(W)=(F_1,d^{-1}F_2)^T$, and rewrite the system (\ref{alg_eqn_pde2}) as follows: 
 \begin{subequations}\label{alg_eqn_pde3}
\begin{eqnarray}
-\Delta W(\mathbf{x})&=&\mathcal{F}(W),\\
\frac{\partial W(\mathbf{x})}{\partial n}&=&0.
\end{eqnarray}
\end{subequations}
$\textbf{W}$ will be a positive solution of the system (\ref{alg_eqn_pde3}) if and only if it satisfies 
\[\mathcal{T}(\textbf{W}):=\textbf{W}-(I-\Delta)^{-1}\left[\textbf{W}+\mathcal{F}(\textbf{W})\right]=0~\mbox{for}~\textbf{W}\in \mathcal{U}^+,\]
where $I$ is the identity operator and $(I-\Delta)^{-1}$ is the inverse operator of $(I-\Delta)$, subject to the zero-flux boundary condition. Thus, $\mathcal{T}(.)$ becomes a compact perturbation of the identity operator. Note that 
\[\mathcal{T}_W(W^*):=I-(I-\Delta)^{-1}\left[I+\mathcal{F}_W(W_*)\right]\]
with $$\mathcal{F}_W(W_*)=\left[
    \begin{array}{ccc}
    r(1-\frac{2u_*}{X_0})-f'(u_*)\frac{f(u_*)-1}{k_3} & -f(u_*)\\
    f'(u_*)\frac{f(u_*)-1}{dk_3} & \frac{1-f(u_*)}{d}
    \end{array}\right].$$Now, our next target is to study the eigenvalue of $\mathcal{T}_W(W^*)$. If $\lambda$ be an eigenvalue of $\mathcal{T}_W(W^*)$ and $\psi=\left(\psi_1,\psi_2\right) \in \mathcal{U}^+$ be the corresponding eigenvector, then following the definition of eigenvector, we can write \[\mathcal{T}_W(W^*)(\psi)=\lambda\psi.\] Now from the equation \ref{alg_eqn_pde3}, we can derive the following equation:
    \begin{subequations}
\begin{eqnarray}
\left(-\Delta -\mathcal{F}_W(W_*)\right)\psi&=&\lambda\left(I-\Delta\right)\psi,~~~~~~~ \mathbf{x}\in \Omega,\\
\frac{\partial \psi(\mathbf{x})}{\partial n}&=&0,~~~~~~~~~~~~~~~~~~~~ \mathbf{x}\in \partial\Omega.
\end{eqnarray}
\end{subequations}
Hence, we can claim that $\lambda$ be an eigenvalue of $\mathcal{T}_W(W^*)$ if $\lambda$ be an eigenvalue of $(1+\mu_j)^{-1}(\mu_jI-\mathcal{F}_W(W_*))$ for all $j \geq 0$. Therefore, $\mathcal{T}_W(W^*)$ is invertible if and only if the matrix $(\mu_jI-\mathcal{F}_W(W_*))$ is invertible, for $j \geq 0$, which further implies $\text{Det}(\mu_jI-\mathcal{F}_W(W_*))\neq 0$. Let us define $\mathcal{H}(\mu_j,d):=\text{Det}(\mu_j I-\mathcal{F}_W(W_*))$, then \[\mathcal{H}(\mu_j,d)= \mu_j^2-\left\{r\left(1-\frac{2u_*}{X_0}\right)-(f(u_*)-1)\left\{\frac{1}{d}+\frac{f'(u_*)}{k_3}\right\}\right\}\mu_j+\frac{(1-f(u_*))G(u_*)}{d}.\] Note that, $\mathcal{H}(\mu,d)=0$ is a quadratic equation of $\mu$ and two real roots $\mu^{\pm}$ of the equation will be obtained if the discriminant of this quadratic equation be positive (\ref{localstablespace}). Then the set $\mathcal{S}(d):=\left(\mu^-,\mu^+\right)-\mathbb{R}_{<0 }$ be well defined. Now, to compute $\mbox{index}(\mathcal{T}(.), W_*)$, we will use the following lemma as given in \cite{Pang_Wang_2003}.

\noindent\textbf{Lemma}\label{lemma} Suppose $\mathcal{H}(\mu_j,d)\neq 0$ for all $j \geq 0$, then \[\mbox{index}(\mathcal{T}(.), W_*)=(-1)^\xi,\] where $\xi=\underset{\mu_j\in \mathcal{S}(d)\cap \mathcal{S}_{\Delta}}{\sum}m(\mu_j)$ if the set $\mathcal{S}(d)\cap \mathcal{S}_{\Delta}$ is non-empty, where $m(\mu_j)$ denotes the algebraic multiplicity of the eigenvalue $\mu_j$, and if the set $\mathcal{S}(d)\cap \mathcal{S}_{\Delta}$ is empty then $\xi=0$.  In particular, if $\mathcal{H}(\mu_j,d)>0$ for all $j \geq 0$, then $\xi=0$.

We now present a theorem that ensures the existence of at least one positive heterogeneous solution. For this, we make the following assumption on the prey density of homogeneous steady-state
\begin{equation}\label{inequality}
    \left(d\left[G(u_*)+\frac{f'(u_*)f(u_*)}{k_3}\right]+1-f(u_*)\right)>2\sqrt{d\left(1-f(u_*)\right)G(u_*)}.
\end{equation}

\begin{thm} Assume that the prey density of a coexisting homogeneous steady-state of the system (\ref{alg_eqn_pde9}) satisfies the condition (\ref{inequality}). If there exists some integers $0\leq n_1<n_2$ such that $\mu^- \in (\mu_{n_1},\mu_{n_1+1})$, $\mu^+ \in (\mu_{n_2},\mu_{n_2+1})$ and $\sum_{j=n_1+1}^{n_2}m(\mu_{j})$ is an odd number, then the system (\ref{alg_eqn_pde2}) has at least one non-constant positive solution.
\end{thm}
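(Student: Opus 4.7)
The plan is to apply Leray--Schauder topological degree theory on a bounded invariant region, combined with the Pang--Wang index formula cited in the lemma above. The argument proceeds by contradiction: if no non-constant positive solution existed, the total degree computed at the target diffusion would be $-1$, while a homotopy argument will force it to be $+1$.

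First I would localize all positive solutions. By Theorem \ref{Boundss}, for $d\geq \tilde d$ every positive solution of (\ref{alg_eqn_pde3}) satisfies $C\leq u\leq X_0$ and $C\leq v\leq (f(X_0)-1)/k_3$. Choose $0<C_0<C$ and $M$ strictly greater than both upper bounds, and set
\[B=\{(u,v)\in \mathcal{U}\,:\,C_0<u<M,\ C_0<v<M\}.\]
Then $\mathcal{T}$ has no zero on $\partial B$, so $\deg(\mathcal{T}(\cdot),B,0)$ is well defined. Under the contradiction hypothesis that $W_*$ is the only positive zero in $B$, additivity together with the cited index lemma yields
\[\deg(\mathcal{T}(\cdot),B,0)=\mathrm{index}(\mathcal{T},W_*)=(-1)^{\sum_{j=n_1+1}^{n_2}m(\mu_j)}=-1,\]
because the hypothesis places exactly the eigenvalues $\mu_{n_1+1},\ldots,\mu_{n_2}$ inside the open interval $(\mu^-,\mu^+)$ and by assumption their multiplicities sum to an odd number.

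Next I would deform the diffusion coefficient along $d(s)=(1-s)d+s\hat d$ for $s\in[0,1]$, choosing $\hat d$ large enough that (i) the comparison system with diffusion $\hat d$ admits $W_*$ as its unique positive solution, argued by a large-diffusion energy estimate of the same flavour as in the proof of Theorem \ref{non-existence}, and (ii) the roots $\mu^{\pm}(\hat d)$ are pushed into a spectral gap of $\mathcal{S}_\Delta$, so that $\mathcal{S}(\hat d)\cap \mathcal{S}_\Delta=\emptyset$. The homotopy is
\[\mathcal{T}_s(W)=W-(I-\Delta)^{-1}\bigl[W+\mathcal{F}_{d(s)}(W)\bigr],\]
where $\mathcal{F}_{d(s)}$ denotes the operator $\mathcal{F}$ with diffusion parameter $d(s)$. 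The a priori bounds of Theorem \ref{Boundss} are uniform in $s$, so no zero of $\mathcal{T}_s$ escapes $B$ and homotopy invariance gives $\deg(\mathcal{T}_0,B,0)=\deg(\mathcal{T}_1,B,0)$. At $s=1$ the index lemma returns $\mathrm{index}(\mathcal{T}_1,W_*)=(-1)^0=+1$, and $W_*$ is the only positive zero at that endpoint, whence $\deg(\mathcal{T}_1,B,0)=+1$. Combined with the value $-1$ at $s=0$, this is a contradiction, so system (\ref{alg_eqn_pde2}) must admit at least one non-constant positive solution.

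The main obstacle I foresee is the construction of an admissible reference parameter $\hat d$. One must simultaneously (a) guarantee that the comparison system at $d=\hat d$ has no non-constant positive solution, which may require sharper bookkeeping than Theorem \ref{non-existence} supplies because the hypothesis $r<\mu_1$ there is not available in the present statement, and (b) guarantee that $(\mu^-(\hat d),\mu^+(\hat d))$ contains no element of $\mathcal{S}_\Delta$, so the index at $W_*$ at that endpoint is genuinely $+1$. Hypothesis (\ref{inequality}) is used in an essential way here: it forces $\mathcal{D}(d)>0$ for $d$ in an open neighbourhood of the target value, which is what makes $\mu^{\pm}(d)$ real and continuous branches along the homotopy; one then tracks them as $d$ is increased to $\hat d$, and if necessary perturbs $\hat d$ infinitesimally to avoid the isolated coincidences $\mu^{\pm}(\hat d)=\mu_j$. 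Controlling both requirements simultaneously, and verifying that no zero of $\mathcal{T}_s$ touches $\partial B$ at any intermediate $s$, is the delicate step on which the whole argument rests.
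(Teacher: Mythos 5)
Your overall skeleton---argue by contradiction, localize all positive solutions via the a priori bounds of Theorem \ref{Boundss}, compute $\mathrm{index}(\mathcal{T},W_*)=(-1)^{\xi}=-1$ at the target diffusion $d$ from the Pang--Wang lemma, and then transport the Leray--Schauder degree along a homotopy in the diffusion coefficient to a reference value where the index is $+1$---is exactly the paper's strategy (the paper deforms $d$ linearly to the value $d_*$ of Theorem \ref{non-existence}, at which it asserts that no non-constant positive solution exists and that $\mathcal{H}(\mu_j,d_*)>0$ for all $j\geq 0$). The genuine gap is in your construction of the reference point: you take $\hat d$ \emph{larger} than $d$, and that direction provably cannot work.

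To see why, write $J_{11}=r\left(1-\frac{2u_*}{X_0}\right)-\frac{f'(u_*)(f(u_*)-1)}{k_3}$ and $J_{22}=1-f(u_*)<0$, so that $d\,\mathcal{H}(\mu,d)=d\mu^2-(dJ_{11}+J_{22})\mu+\det J(E_*)$. At fixed $\mu$ its derivative in $d$ equals $\mu(\mu-J_{11})$, which is negative for $0<\mu<J_{11}$; moreover both roots $\mu^{\pm}(d)$ lie in $(0,J_{11})$, since their product $\det J(E_*)/d$ is positive and their sum is $J_{11}+J_{22}/d<J_{11}$. Hence the instability band is monotonically \emph{expanding} in $d$: for every $\hat d>d$ one has $(\mu^-(d),\mu^+(d))\subseteq(\mu^-(\hat d),\mu^+(\hat d))$, and as $\hat d\to\infty$ the band tends to $(0,J_{11})$. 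Consequently $\mathcal{S}(\hat d)\cap\mathcal{S}_{\Delta}$ still contains $\mu_{n_1+1},\dots,\mu_{n_2}$ for every $\hat d>d$; your requirement (ii) that $\mu^{\pm}(\hat d)$ be pushed into a spectral gap is therefore unachievable, no infinitesimal perturbation of $\hat d$ can repair it, and if no new eigenvalues enter the band the index at the endpoint is again $-1$, so the homotopy produces no contradiction at all. Your requirement (i) collapses for the same underlying reason: for $\hat d>d$ the state $W_*$ remains Turing-unstable, and the energy argument of Theorem \ref{non-existence} needs $r<\mu_1$, which is incompatible with the hypotheses of the present theorem because $r>J_{11}>\mu^+>\mu_{n_2}\geq\mu_1$. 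The reference diffusion must be taken on the \emph{other} side, where the discriminant of $\mathcal{H}(\cdot,d)$ is negative, hence $\mathcal{H}(\mu_j,\cdot)>0$ for every $j\geq 0$, $\xi=0$, and the index is $+1$; this is precisely the role the paper assigns to $d_*$ through its assertions (i)--(ii). As written, your proof fails at its pivotal step.
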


\begin{proof} We proof this theorem by applying Leray-Schauder degree theory \cite{degree} with homotopy invariance. Let us denote the coexisting homogeneous steady-state by $W_*=(u_*,v_*)$, where $u_*$ satisfies the condition (\ref{inequality}). Suppose that the system (\ref{alg_eqn_pde2}) does not admit any non-constant positive solution whenever $d<d_*$, where $d_*$ is defined in Theorem \ref{non-existence}. Note that
\begin{itemize}
    \item[(i)] the system (\ref{alg_eqn_pde2})  does not exhibit non-constant positive solution with diffusion coefficient $d_*$,
    \item[(ii)] $\mathcal{H}(\mu_j,d_*)> 0$ for all $j \geq 0$.
\end{itemize}
Now for $\tau\in[0,1]$, let us define 
$\overline{\mathcal{F}}(\tau, W)=(F_1,(\tau d+(1-\tau) d_*)^{-1}F_2)^T$ and consider the problem
\begin{subequations}\label{alg_eqn_pde4}
\begin{eqnarray}
-\Delta W(\mathbf{x})&=&\overline{\mathcal{F}}(\tau, W),\\
\frac{\partial W(\mathbf{x})}{\partial n}&=&0.
\end{eqnarray}
\end{subequations}
The system (\ref{alg_eqn_pde4}) has positive non-constant solution \textbf{W} if and only if it satisfies the following equation
\[\overline{\mathcal{T}}(\tau,\textbf{W}):=\textbf{W}-(I-\Delta)^{-1}\left[\textbf{W}+\overline{\mathcal{F}}(\tau,\textbf{W})\right]=0~\mbox{for}~\textbf{W}\in \mathcal{U}^+.\]
A straightforward calculation yields
\[\overline{\mathcal{T}}_W(\tau,W^*):=I-(I-\Delta)^{-1}\left[I+\overline{\mathcal{F}}_W(\tau,W_*)\right],\]
where $\overline{\mathcal{F}}_W(\tau,W_*)=\left[
    \begin{array}{ccc}
    r(1-\frac{2u_*}{X_)})-f'(u_*)\frac{f(u_*)-1}{k_3} & -f(u_*) \\
    f'(u_*)\frac{f(u_*)-1}{(\tau d+(1-\tau) d_*)k_3} & \frac{1-f(u_*)}{\tau d+(1-\tau) d_*}
    \end{array}\right]$. Therefore \textbf{W} is a positive non-constant solution of the system (\ref{alg_eqn_pde2}) if and only if it is a solution to the system (\ref{alg_eqn_pde4}), when $\tau=1$. Observe that $\overline{\mathcal{F}}(1,\textbf{W})=\mathcal{F}(\textbf{W})$ and $\overline{\mathcal{T}}(1,\textbf{W})=\mathcal{T}(\textbf{W})$. If the condition (\ref{inequality}) holds, then from the quadratic equation $\mathcal{H}(\mu,d)=0$, we obtain two real roots $\mu^\pm$ such that $0<\mu^-<\mu^+$. Let us choose two natural number $0\leq n_1<n_2$ in such a way that $\mu^- \in (\mu_{n_1},\mu_{n_1+1})$ and $\mu^+ \in (\mu_{n_2},\mu_{n_2+1})$. Observe that 
    $\mathcal{H}(\mu_j,d)< 0 ~\mbox{whenever}~n_1+1\leq j\leq n_2~\mbox{and}~\mathcal{H}(\mu_j,d)>0~\mbox{otherwise}$. Now, using the Lemma, we can further derive that
    \[\xi=\underset{\mu_j\in \mathcal{S}(d)\cap \mathcal{S}_{\Delta}}{\sum}m(\mu_j)=\overset{n_2}{\underset{j=n_1+1}{\sum}}m(\mu_j),\] 
    which is an odd number, according to our assumption. Therefore 
    \[\mbox{index}(\overline{\mathcal{T}}(1,.), W_*)=(-1)^\xi=-1.\]
    Conversely, we know $\mathcal{H}(\mu_j,d_*)>0$ for all $j\geq 0$, which implies
    \[\mbox{index}(\overline{\mathcal{T}}(0,.), W_*)=(-1)^\xi=(-1)^0=1.\]
    Using Theorem \ref{Boundss}, we can find constants $\underline{C}$ and $\overline{C}$ such that, for all $\tau \in [0,1]$, the positive solution $W=(u,v)$ satisfies the condition  $\underline{C}<u,v<\overline{C}$. Consider the set $\mathbb{B}=\{W\in \mathcal{W}:\underline{C}<u,v<\overline{C}\}$. $W_*$ being unique constant solution of the system (\ref{alg_eqn_pde4}) satisfying the condition (\ref{inequality}), we have $W_*\in \mathbb{B}$. From the homotopy invariance of Leray-Schauder degree, we deduce 
    \[\deg(\overline{\mathcal{T}}(0,.),0,\mathbb{B})=\deg(\overline{\mathcal{T}}(1,.),0,\mathbb{B}).\]
    Also both the equations $\overline{\mathcal{T}}(0,\textbf{W})=0$ and $\overline{\mathcal{T}}(1,\textbf{W})=0$ has unique solution $W_*$ in $\mathbb{B}$. Now check that $\deg(\overline{\mathcal{T}}(0,.),0,\mathbb{B})=\mbox{index}(\overline{\mathcal{T}}(0,.), W_*)=1$ and $\deg(\overline{\mathcal{T}}(1,.),0,\mathbb{B})=\mbox{index}(\overline{\mathcal{T}}(1,.), W_*)=-1$, which is a contradiction. Therefore, the system has at least one non-constant positive solution.
    \end{proof}
    
\subsection{Turing bifurcation} 
We have already discussed the conditions for the existence and non-existence of heterogeneous steady states. However, the underlying mechanism leading to the existence of a non-constant steady state remains to be examined. To explore this, we analyze the diffusion-driven instability conditions under which a homogeneous steady state, stable under small-amplitude homogeneous perturbations, becomes unstable due to small-amplitude heterogeneous perturbations—a phenomenon known as Turing instability. To determine the condition for Turing instability, we introduce spatial perturbations around the homogeneous steady state by setting $u=u_*+\epsilon \bar{u}e^{(\lambda t+\iota k \mathbf{x})}$ and $v=v_* + \epsilon \bar{v}e^{(\lambda t+\iota k \mathbf{x})}$, where $|\epsilon| \ll 1$ and $\bar{u} ~\mbox{and}~ \bar{v}$ be a non-trivial solution of the following system:
\[M_k\left[
    \begin{array}{ccc}
    \bar{u} \\
    \bar{v}
    \end{array}\right]=\left[
    \begin{array}{ccc}
    r\left(1-\frac{2u_*}{X_0}\right)-\frac{f'(u_*)\left(f(u_*)-1\right)}{k_3}-k^2-\lambda & -f(u_*) \\
    \frac{f'(u_*)\left(f(u_*)-1\right)}{k_3} & 1-f(u_*)-d k^2-\lambda
    \end{array}\right]\left[
    \begin{array}{ccc}
    \bar{u} \\
    \bar{v}
    \end{array}\right]=\left[
    \begin{array}{ccc}
    0 \\
    0
    \end{array}\right].\]
Now the above system posses a non-trivial solution if $\det(M_k)=0$, which leads to the characteristic equation
\[\lambda^2-\Phi_k\lambda+\Psi_k=0,\]
where \[\Phi_k=G(u_*)+\frac{f'(u_*)f(u_*)}{k_3}+1-f(u_*)-(1+d)k^2,~ \mbox{and}\]\[\Psi_k=dk^4-k^2\left(d\left[G(u_*)+\frac{f'(u_*)f(u_*)}{k_3}\right]-(f(u_*)-1)\right)+\left(1-f(u_*)\right)G(u_*).\] 
Since the homogeneous steady state is stable under small-amplitude homogeneous perturbations, we can conclude that $\Phi_k<0,~\forall~k$. If $\Psi_k>0,~\forall~k$, then the homogeneous steady state remains stable under small amplitude heterogeneous perturbations. Therefore, we seek a condition under which $\Psi_k<0$ for some $k$, as this would indicate the onset of Turing instability. The minimum value of $\Psi_k$ is \[\underset{k>0}{\min} \Psi_k=\left(1-f(u_*)\right)G(u_*)-\frac{\left(d\left[G(u_*)+\frac{f'(u_*)f(u_*)}{k_3}\right]-(f(u_*)-1)\right)^2}{4d}.\]
For $\Psi_k$ to transition from positive to negative for some $k$, we must satisfy the condition $\underset{k>0}{\min} \Psi_k=0$, which defines the Turing bifurcation threshold. The corresponding critical value of $d$ is given by

\[d_c=\frac{(f(u_*)-1)\left[\left(3G(u_*)+\frac{f(u_*)f'(u_*)}{k_3}\right)+2\sqrt{G(u_*)\left(2G(u_*)+\frac{f(u_*)f'(u_*)}{k_3}\right)}\right]}{\left(G(u_*)+\frac{f(u_*)f'(u_*)}{k_3}\right)^2}.\]
Under a certain choice of parameter values, if the critical value $d_c$ is positive, then for $d>d_c$, the coexisting homogeneous steady state becomes unstable, and the system can exhibit a heterogeneous steady state. However, the existence of a non-constant steady state must be verified through numerical simulations and cross-checked against the analytical existence conditions for non-constant steady states.

\section{Structural Sensitivity: Numerical illustration}{\label{NumRes}}

The analytical criteria for the occurrence of saddle-node, Hopf, Bogdanov–Takens, and cusp bifurcations around the homogeneous steady state have been discussed in previous sections, along with the derivation of the condition for Turing instability. These analytical results are based on the assumption that the functional responses, though not explicitly specified, share several similar qualitative properties. We now proceed to numerically analyze these local bifurcations and their impact on the global dynamics of the temporal model, as well as on spatio-temporal pattern formation in the spatially extended model. For this purpose, we consider two explicitly defined parametrizations of functional responses: the Holling Type II response ($f_H(u) = {k_{1_H} u}/{1 + k_{2_H} u}$) and the Ivlev response ($f_I(u) = k_{1_I}(1 - \exp(-k_{2_I} u))$). We begin by verifying and comparing the local and global dynamics of the temporal model \eqref{algeb_equn} under each functional response. Then, we examine how the different parametrizations influence the Turing instability criteria and the resulting spatial patterns.
 
\begin{figure}[ht]
\centering
\mbox{\subfigure[]{\includegraphics[width=8.7cm]{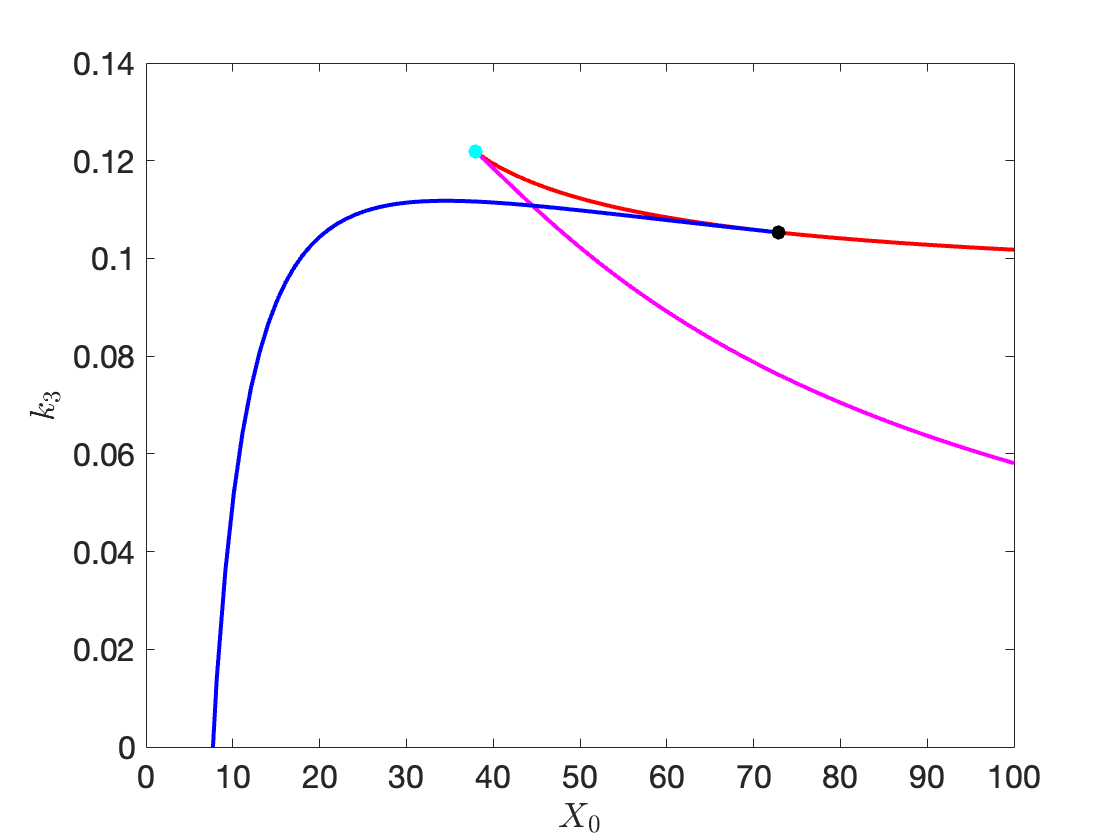}}
\subfigure[]{\includegraphics[width=8.7cm]{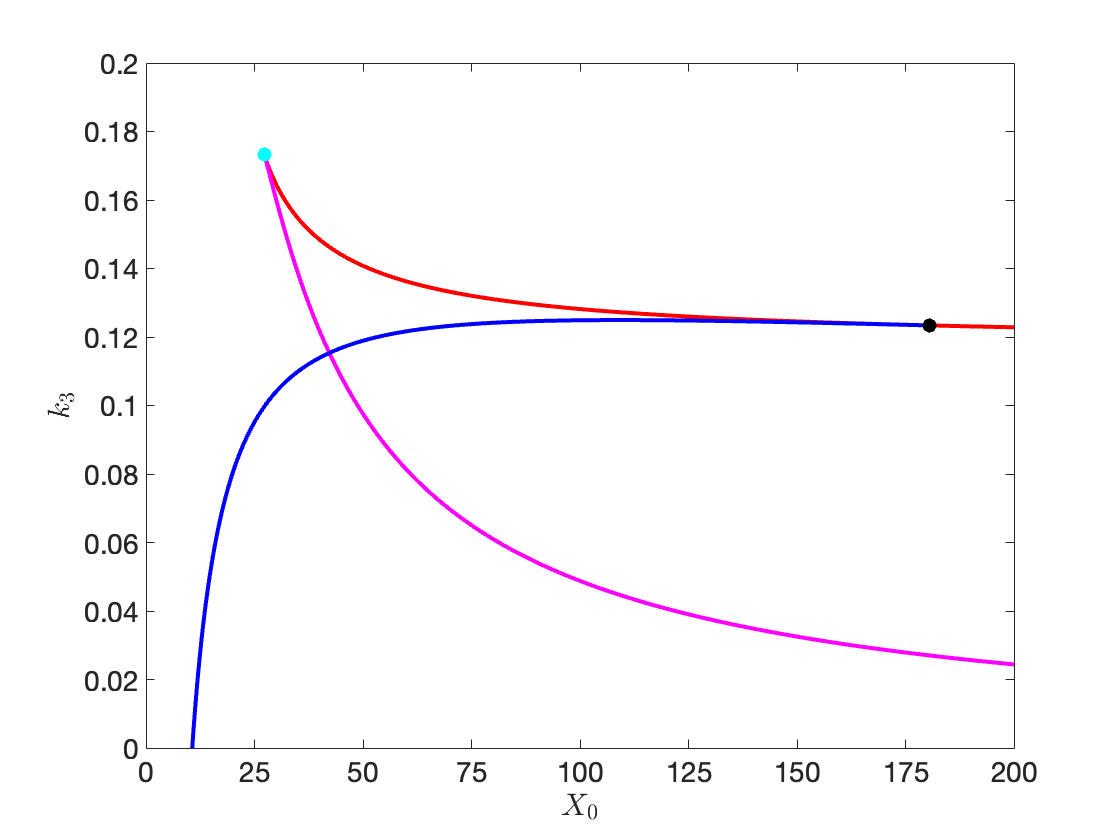}}}
\caption{(a) 2d Bifurcation for Holling type II (b) 2d Bifurcation for Ivlev, with $X_0$ and $k_3$ as bifurcation parameters. Here $X_0$ is varying along $x$-axis and $k_3$ is varying along $y$-axis.}
\label{Fig:Fig1}
\end{figure}

To initiate our numerical study, we selected the parameter values for the system \eqref{algeb_equn} with the Holling Type II functional response as outlined in \cite{IJBC}. The corresponding parameters for the Ivlev functional response were estimated using the least-squares approximation method, with the Holling Type II response serving as the baseline. Since our objective is to investigate the structural sensitivity of spatial patterns, the approximation of the Ivlev functional response was carried out in such a way that the homogeneous steady state—satisfying the Turing instability condition—remains closely aligned between both models. Table \ref{Table:1} lists the fixed parameter values associated with the Holling Type II functional response and the estimated parameters for the Ivlev functional response.

\begin{table}[h]
\begin{center}
\begin{tabular}  {|c|c|} \hline
Parameter  &  Values \\ 
\hline 
$k_{1_H}$ & 1.3 \\ 
\hline 
$k_{2_H}$ & 1 \\
\hline 
$k_{1_I}$ & 1.2367 \\ 
\hline 
$k_{2_I}$ & 0.4101 \\ 
\hline 
\end{tabular}
\caption{Fixed parameter values corresponding to the two considered functional responses.}
\label{Table:1}
\end{center}
\end{table}

\begin{figure}[ht]
\centering
\includegraphics[width=14cm]{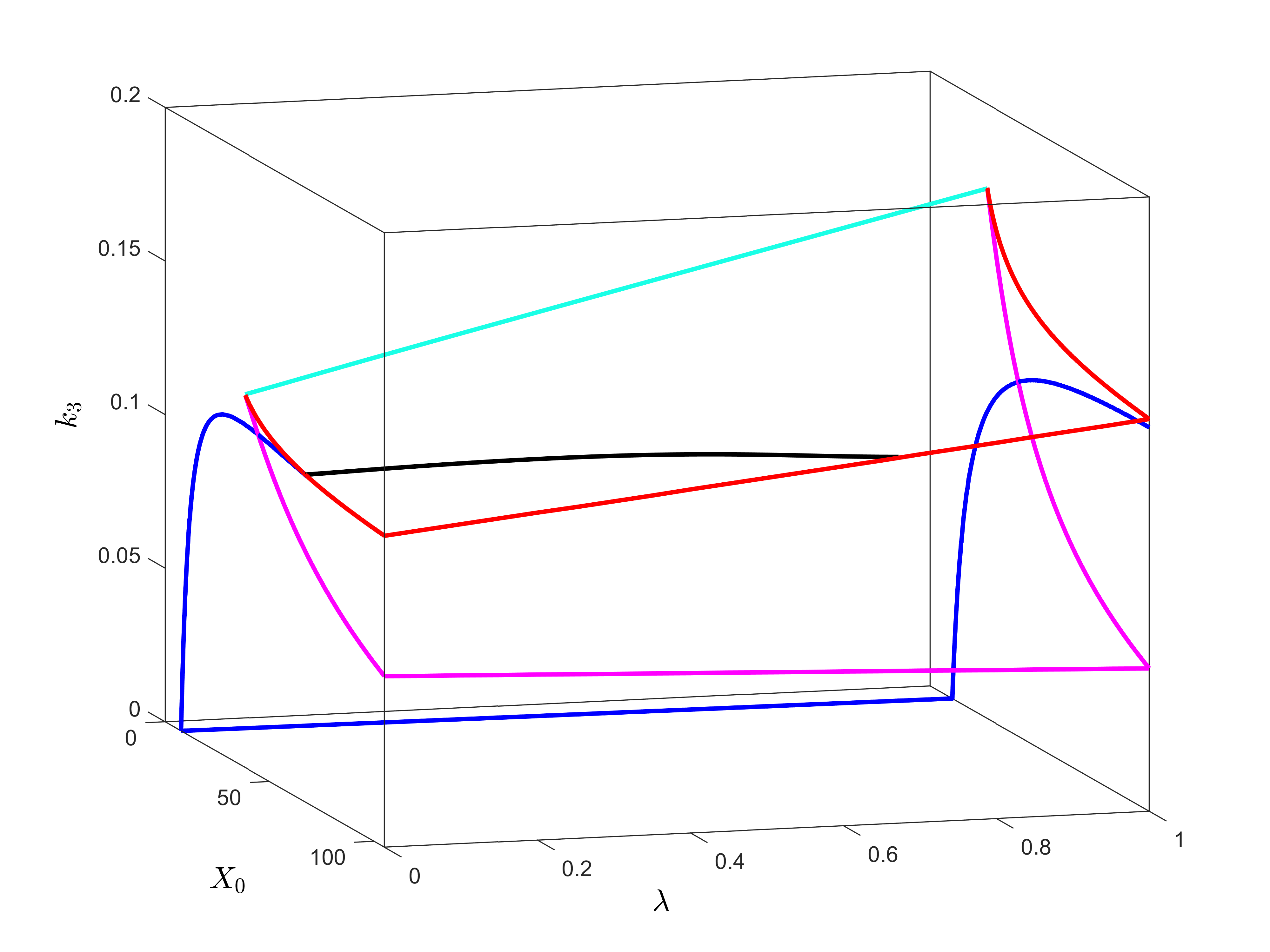}
\caption{Three parametric bifurcation diagram for the system \eqref{algeb_equn} for the fixed parameter values $r=0.24$, $b_{H}=1.3$, $a_{H}=1$, $b_{I}=1.2367$, $a_{I}=0.4101$. The surface edged by the blue curves represents the Hopf bifurcation, the surface enclosed by the red and magenta lines denote two saddle-node bifurcation. The cyan curve representing cusp bifurcation, and the black curve indicates the Bogdanov–Takens bifurcation. }
\label{Fig:Fig2}
\end{figure}

Analytical results discussed in Section \ref{LocBif} reveal the possibility of different bifurcation structures depending on the choice of parameter values $r$, $X_0$, and $k_3$. Here, we fix the parameter value $r = 0.24$ and treat $X_0$ and $k_3$ as bifurcation parameters to construct two-parameter bifurcation diagrams for the models with Holling Type II and Ivlev functional responses, shown in Figures \ref{Fig:Fig1}(a) and \ref{Fig:Fig1}(b), respectively. In both cases, we identify two saddle-node bifurcation curves that intersect at a cusp bifurcation point, as well as a Hopf bifurcation curve that intersects the upper branch of the saddle-node bifurcation curve at a Bogdanov–Takens bifurcation point. However, notable differences arise in the threshold values of these bifurcations. In the model with the Holling Type II functional response, the two saddle-node bifurcation curves lie much closer together compared to those in the Ivlev-based model. Moreover, while the Hopf bifurcation curve increases monotonically in the Ivlev model, it exhibits a non-monotonic trend in the Holling Type II model—initially increasing and then decreasing. As a result, the distance between the cusp and Bogdanov–Takens bifurcation points is larger in the Ivlev model than in the Holling Type II model. The coordinates of the Bogdanov–Takens bifurcation point are $(72.9180, 0.1053)$ for the Holling Type II model and $(180.5614, 0.1233)$ for the Ivlev model, while the coordinates of the cusp bifurcation point are $(38.0003, 0.1217)$ and $(27.2783, 0.1732)$ for the Holling Type II and Ivlev models, respectively.

 \begin{figure}[ht]
\centering
\mbox{\subfigure[]{\includegraphics[width=8.7cm]{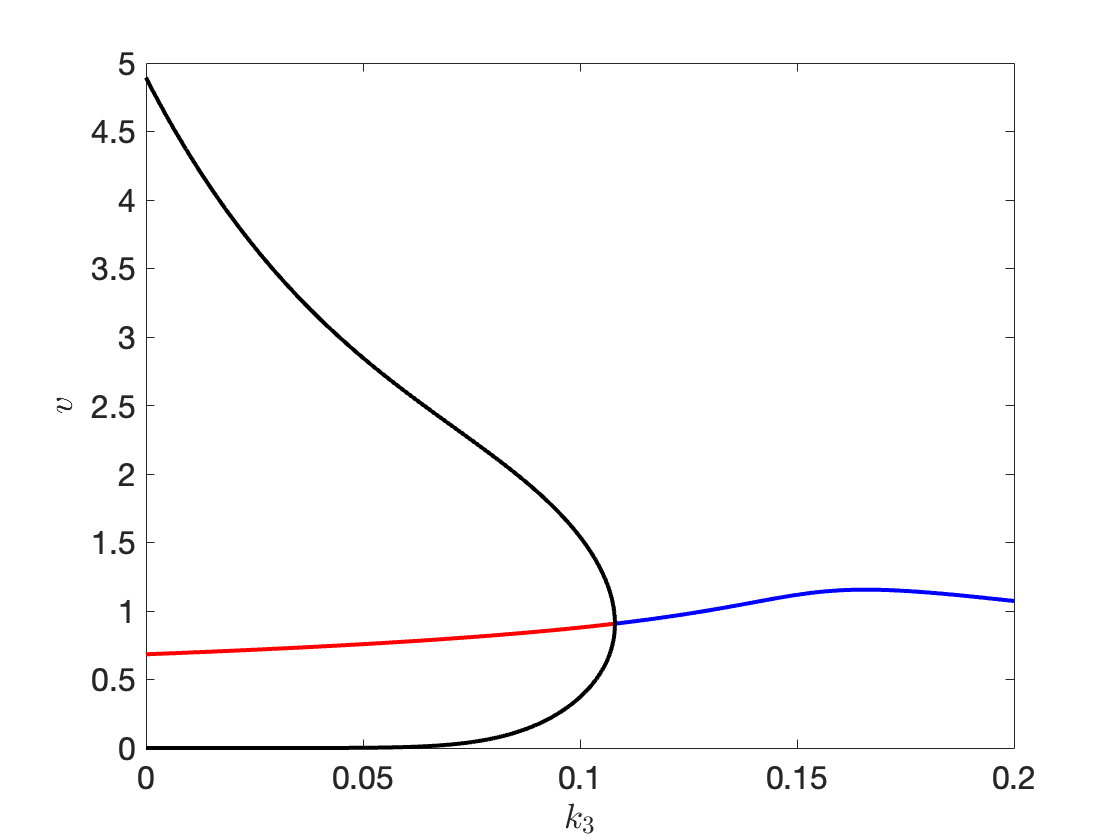}}
\subfigure[]{\includegraphics[width=8.7cm]{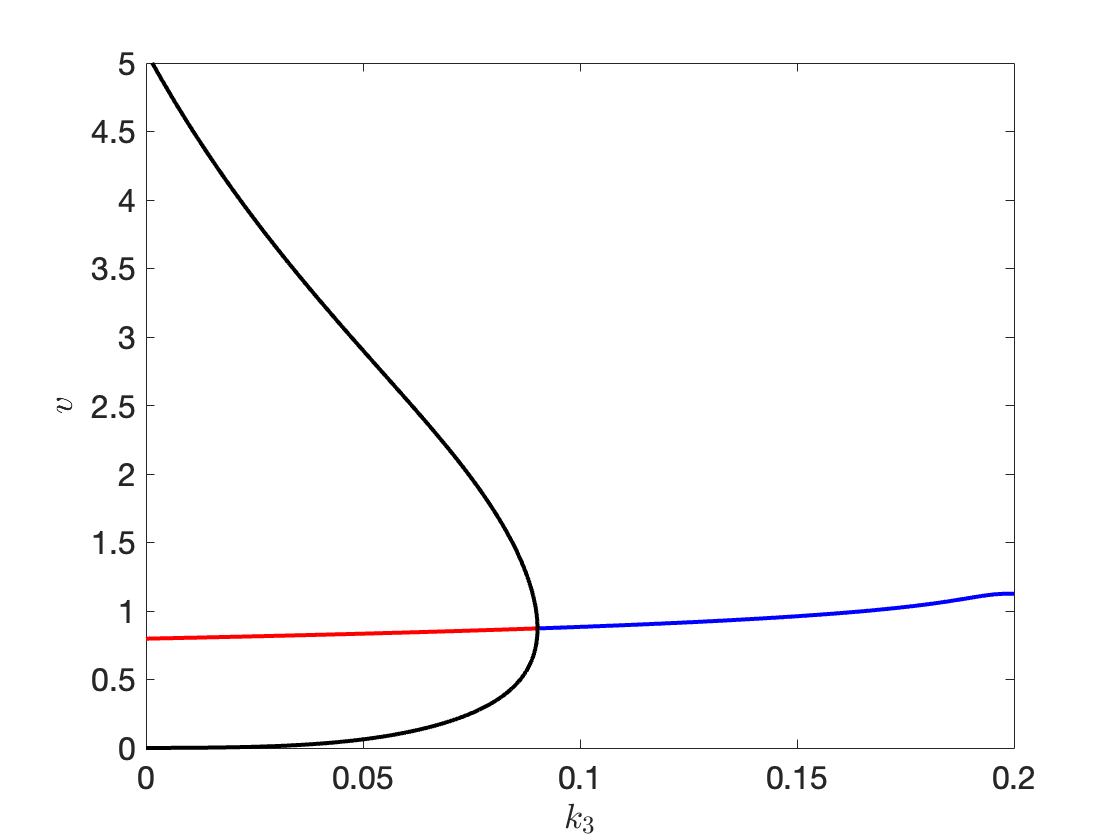}}}
\caption{One parametric bifurcation diagram for the system \eqref{algeb_equn} with (a) the Holling type II functional response with parameter values corresponding to functional response are $k_{1_H}=1.3, k_{2_H}=1$, (b) the Ivlev functional response with parameter values corresponding to functional response are $k_{1_I}=1.2367, k_{2_I}=0.4101$. The fixed parameter values are $X_0=23~\mbox{\&}~ r=0.24$. $k_3$ is the bifurcation parameter}
\label{Fig:Fig3}
\end{figure}

To investigate whether any intermediate bifurcation structures emerge as we transition from the Holling Type II functional response to the Ivlev functional response, we consider a mixed functional response defined as:  \[f(u)=\sigma f_I(u)+(1-\sigma) f_H(u),~\sigma \in [0,1],\] which represents a convex combination of the Holling Type II and Ivlev functional responses. For further details, see \cite{Aldebert2019}. We treat $\sigma$ as an additional bifurcation parameter alongside $X_0$ and $k_3$, and generate the corresponding bifurcation diagram shown in Fig. \ref{Fig:Fig2}. This three-dimensional bifurcation diagram confirms that the shifts in bifurcation thresholds between the two limiting cases—$\sigma = 0$ (corresponding to the Holling Type II model) and $\sigma = 1$ (corresponding to the Ivlev model)—are a result of changes in the parametrization of the functional response.

A study of the one-parameter bifurcation diagram reveals significant alterations in the global bifurcation structure. From the two-parameter bifurcation diagrams for both models, shown in Fig.~\ref{Fig:Fig1}, it is evident that when $X_0 = 23$ and the parameter $k_3$ is varied, the system does not undergo a saddle-node bifurcation. Furthermore, Theorem \ref{thm1} confirms that under these parameter constraints, both models possess a unique coexisting steady state for different values of $k_3$. In both cases, this steady state undergoes a supercritical Hopf bifurcation, although a notable shift in the bifurcation threshold is observed. Specifically, the Hopf bifurcation occurs at $k_3 = 0.108$ for the model with the Holling Type II functional response, and at $k_3 = 0.0902$ for the model with the Ivlev functional response. The bifurcation diagrams for both models appear topologically equivalent, with the stable limit cycle emerging from the supercritical Hopf bifurcation exhibiting nearly identical oscillation amplitudes. Additionally, for small values of $k_3$, large-amplitude oscillations are observed in both models.

\begin{figure}[ht]
\centering
\mbox{\subfigure[]{\includegraphics[width=8.7cm]{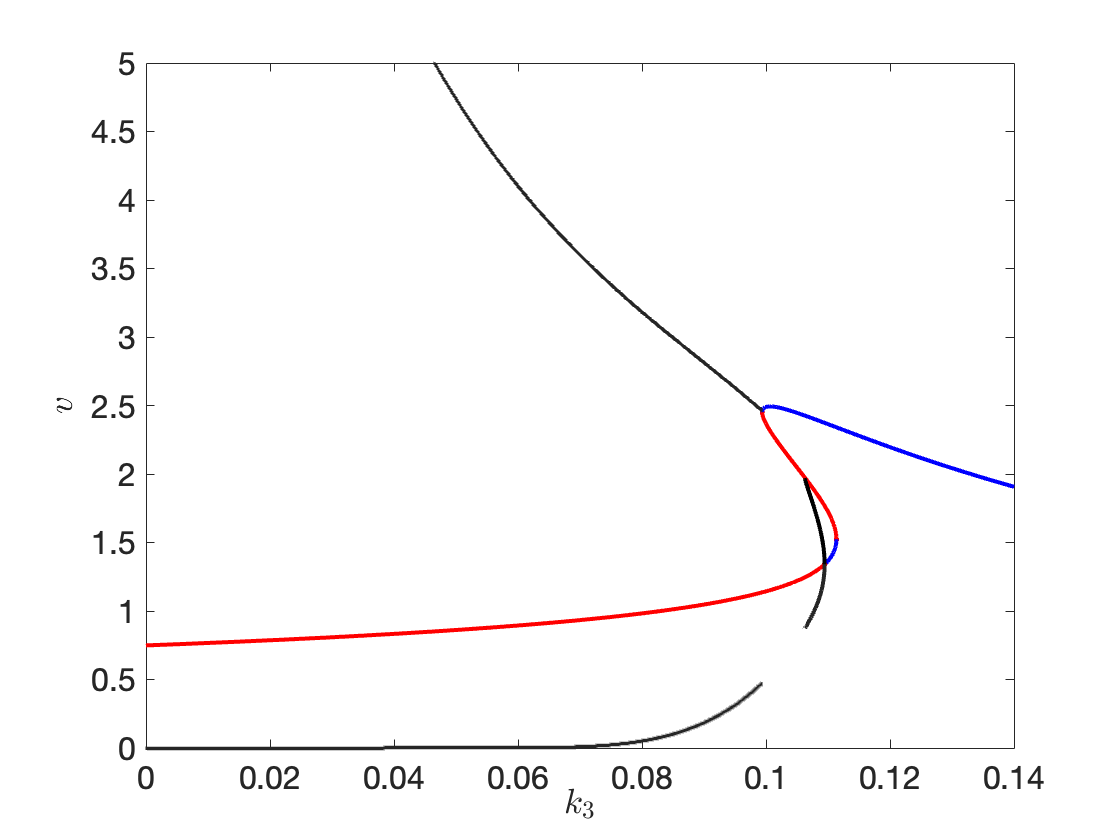}}
\subfigure[]{\includegraphics[width=8.7cm]{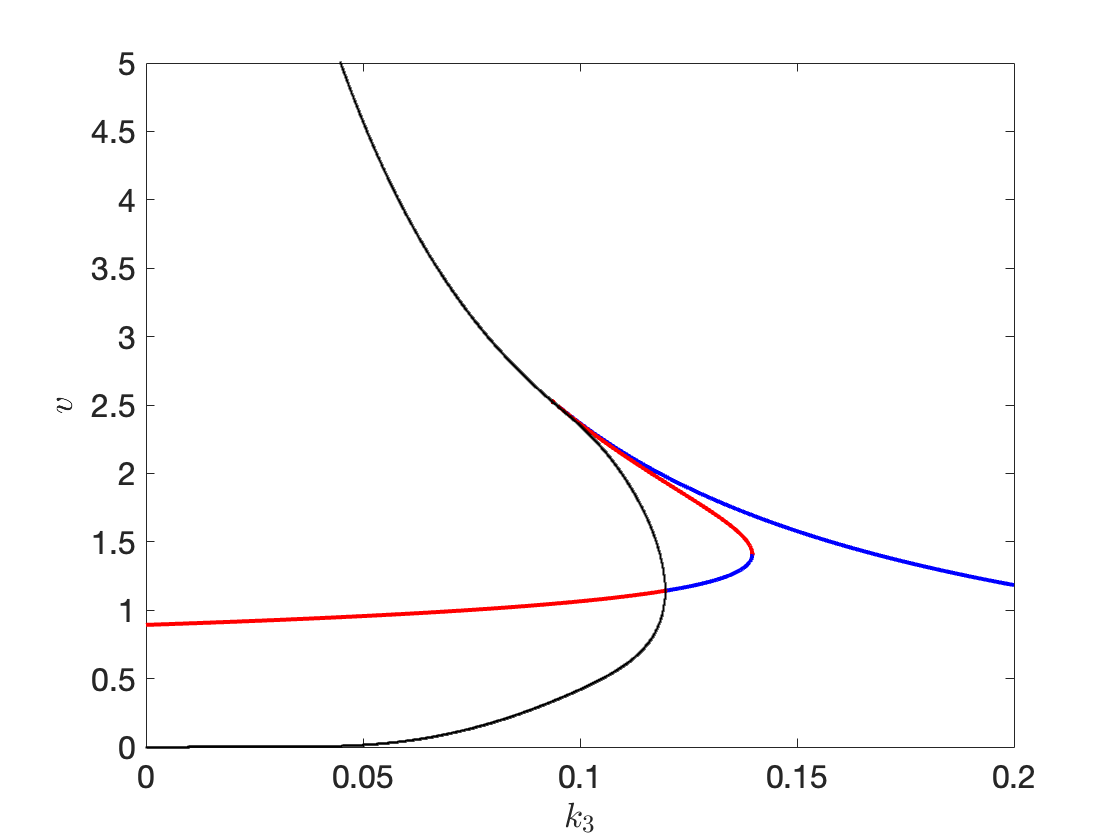}}}
\caption{One parametric bifurcation diagram for the system \eqref{algeb_equn} with (a) the Holling type II functional response with parameter values corresponding to functional response are $k_{1_H}=1.3, k_{2_H}=1$, (b) the Ivlev functional response with parameter values corresponding to functional response are $k_{1_I}=1.2367, k_{2_I}=0.4101$. The fixed parameter values are $X_0=52~\mbox{\&}~ r=0.24$. $k_3$ is the bifurcation parameter}
\label{Fig:Fig4}
\end{figure}

Keeping all other parameter values unchanged and setting $X_0 = 52$, Theorem \ref{thm1} confirms that both systems exhibit at least one and at most three coexisting steady states as the parameter $k_3$ varies. For the model with the Holling Type II functional response, three steady states coexist within the range $k_3 \in [0.1, 0.115]$, whereas for the model with the Ivlev functional response, this range extends to $k_3 \in [0.1, 0.14]$. This bistable scenario arises due to two successive saddle-node bifurcations. For the chosen set of parameter values, the steady state with the highest prey density remains stable throughout its existence, while the steady state with intermediate prey density is always unstable. Meanwhile, the coexisting steady state with the lowest prey density undergoes a supercritical Hopf bifurcation in both models. The Hopf bifurcation threshold occurs at $k_3 = 0.1094$ for the Holling Type II model and at $k_3 = 0.0902$ for the Ivlev model. The stable limit cycle that emerges from this supercritical Hopf bifurcation disappears via a homoclinic bifurcation at $k_3 = 0.1062$ and reappears through another homoclinic bifurcation at $k_3 = 0.0933$ for the Holling Type II model. Within the interval $k_3 \in (0.0933, 0.1062)$, this limit cycle vanishes, and trajectories originating from the interior of the first octant eventually settle into the coexisting steady state with the highest prey density, following transient oscillations. However, this disappearance and reappearance of the limit cycle has not been observed in the Ivlev model under the same parameter conditions.

\begin{figure}[ht]
\centering
\subfigure[]{\includegraphics[width=9.5cm]{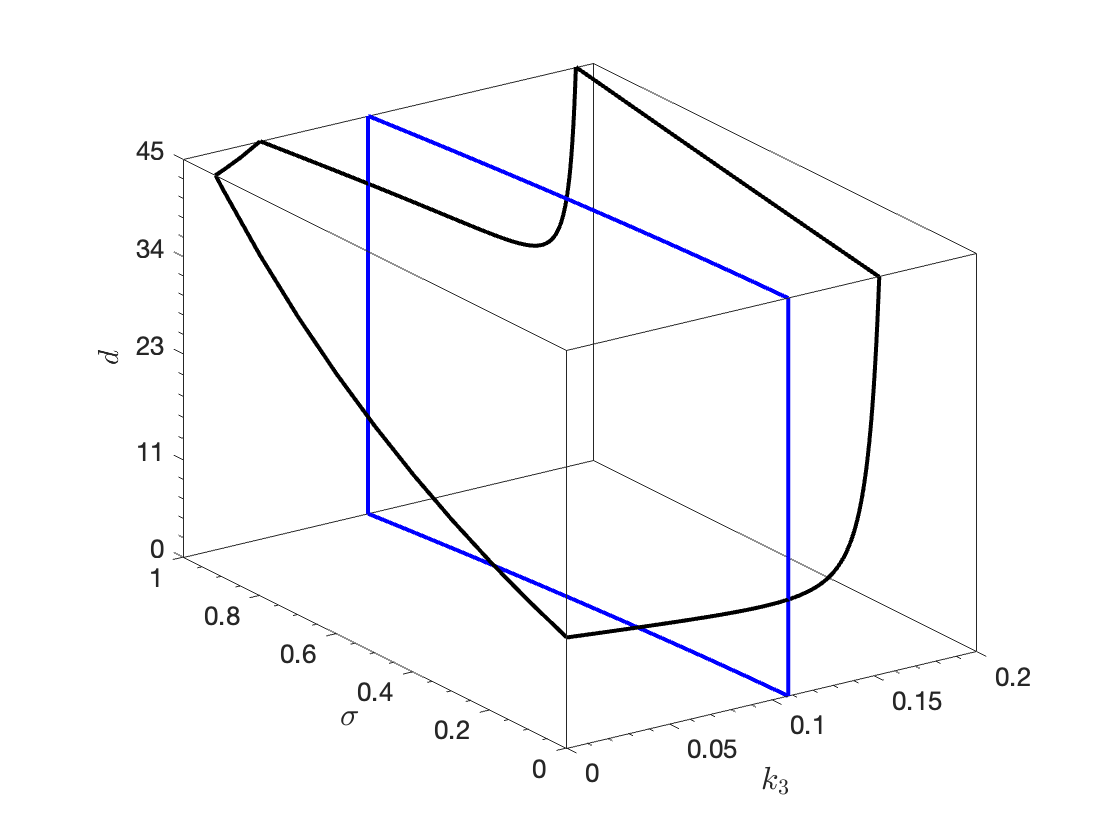}}\\
\subfigure[]{\includegraphics[width=6.5cm]{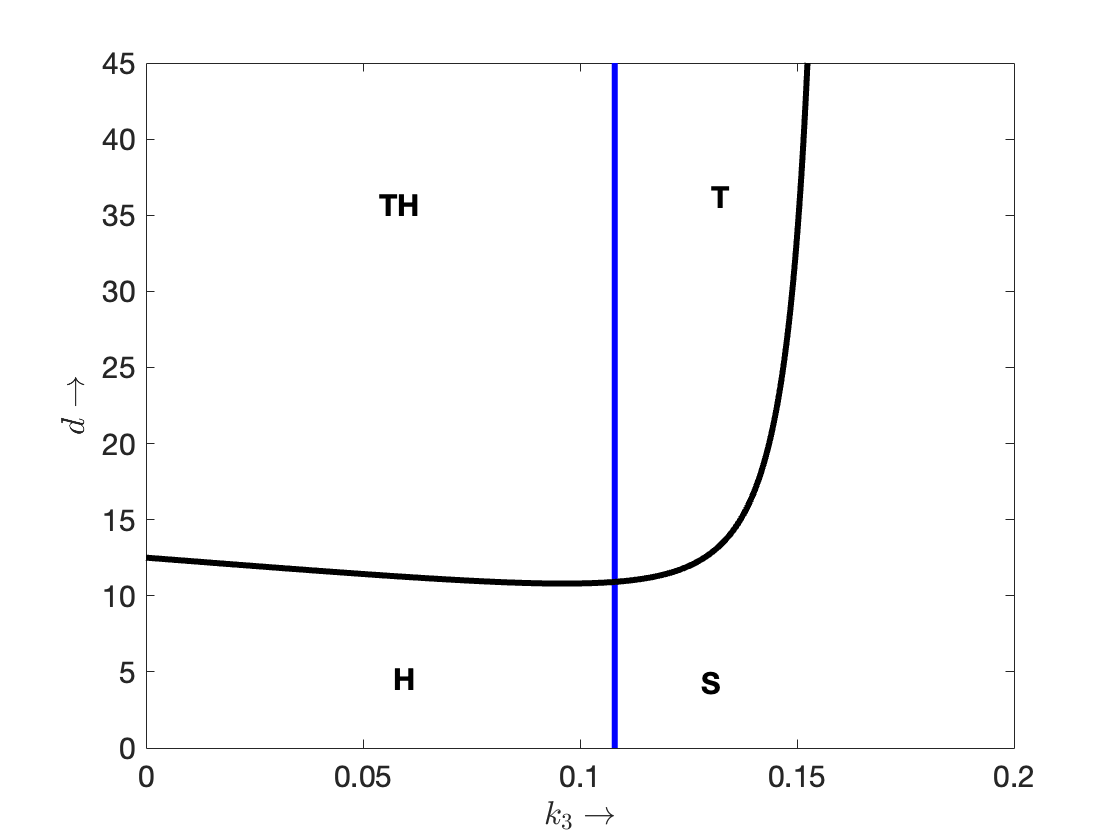}}
\subfigure[]{\includegraphics[width=6.5cm]{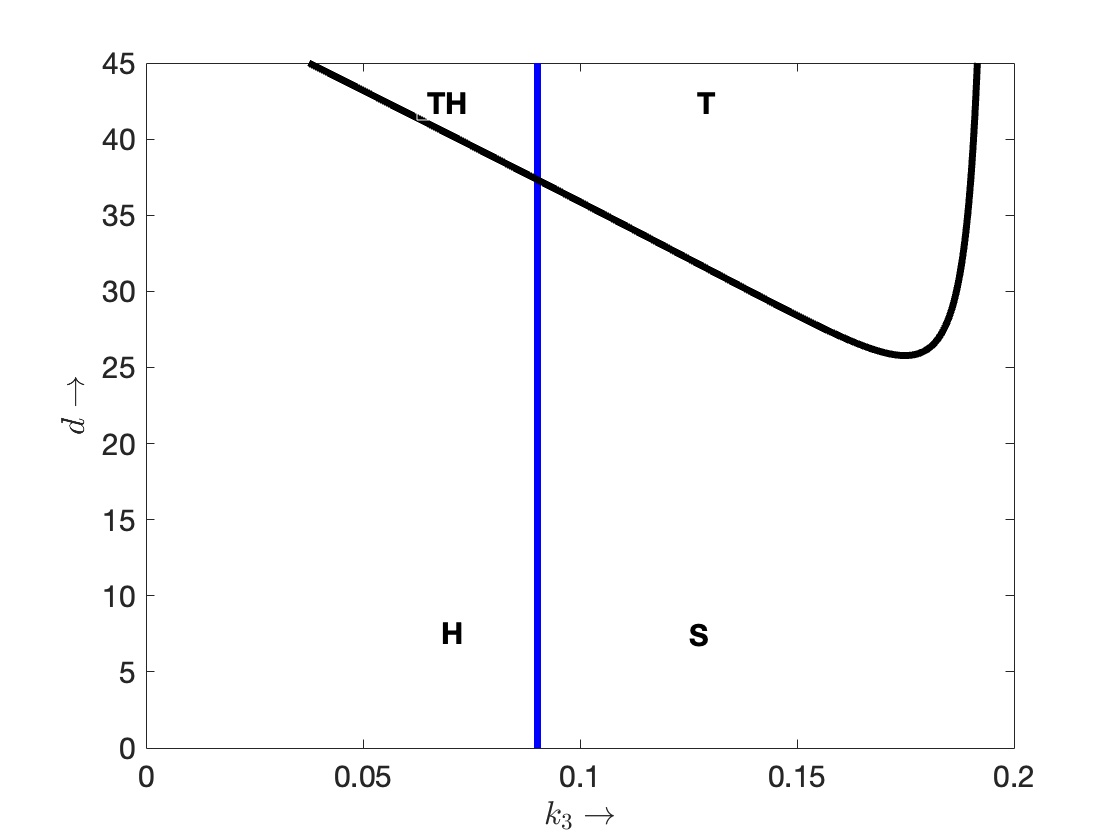}}
\caption{(a) Three parametric plot of Turing and Hopf bifurcation curves for the system \eqref{alg_eqn_pde} with the fixed parameter values $X_0=23$, $r=0.24$, $b_{H}=1.3$, $a_{H}=1$, $b_{I}=1.2367$, $a_{I}=0.4101$, representing a continuous shift of Turing bifurcation threshold due to continuous switch between two parameterizations of functional response: Holling's $\sigma=0$ to Ivlev's $\sigma=1$. (b) The plot of the Turing bifurcation curve (black) and Hopf bifurcation threshold (blue) in $(k_3,d)$-parameter space for the model with Holling type II functional response. (c) The plot of the Turing bifurcation curve (black) and Hopf bifurcation threshold (blue) in $(k_3,d)$-parameter space for the model with Ivlev functional response.}
\label{Fig:Fig5}
\end{figure}

\subsection{Numerical simulation for spatio-temporal model}

Now, we explore the structural sensitivity of the spatially extended model, considering $\Omega \in \mathbb{R}^2$ as a square domain. Following our study of the temporal model with two different parameter setups, we conduct a similar analysis here for the spatial model: one with $X_0 = 23$ and another with $X_0 = 52$, while varying the parameter $k_3$. The parameter $r$ is fixed at $0.24$, and the values associated with the functional responses are the same as those listed in Table~\ref{Table:1}. The objective is to examine the impact of different parametrizations of the functional response on diffusion-driven instability and the resulting spatial patterns.

For $X_0=23$, Fig.~\ref{Fig:Fig5} (b,c) illustrates the region in the $(k_3,d)$ parameter space where Turing and/or Turing-Hopf instabilities take place. The Turing bifurcation curve and the Hopf bifurcation threshold of the system \eqref{algeb_equn} partition the parameter space into four distinct dynamical regions: stable(region S), Turing (region T), Turing-Hopf (region TH) and Hopf (region H) regions. These regions are shown in Fig.~\ref{Fig:Fig5} (b) for the model with Holling type II and in Fig.~\ref{Fig:Fig5} (c) for the model with Ivlev functional response. A key distinction is, despite the topological equivalence of the unique homogeneous steady state in both models as illustrated in Fig.~\ref{Fig:Fig3}, the Turing bifurcation threshold is significantly higher for the model with the Ivlev functional response than for the model with the Holling Type II functional response. To analyze this transition in the Turing bifurcation threshold, we introduce the convex combination of two considered parameterizations as follows: \[f(u)=\sigma f_I(u)+(1-\sigma) f_H(u),~\sigma \in [0,1].\]Using this mixed function as the functional response, we treat $\sigma$ as the bifurcation parameter together with $k_3~\mbox{and}~d$ to obtain a three-dimensional plot of the Turing and Hopf bifurcation curves with a shifting value of $\sigma$ between $0~\mbox{and}~1$ (see Fig.~\ref{Fig:Fig5}). This figure illustrates the continuous shift of the Turing bifurcation threshold between two terminal cases: $\sigma=0$ for the model with Holling Type II functional response and $\sigma=1$ for the model with Ivlev functional response. These findings emphasize the role of the choice of parameterization of functional response in diffusion-driven instability of homogeneous steady state.

\begin{figure}[ht]
\centering
\mbox{\subfigure[]{\includegraphics[width=5.7cm]{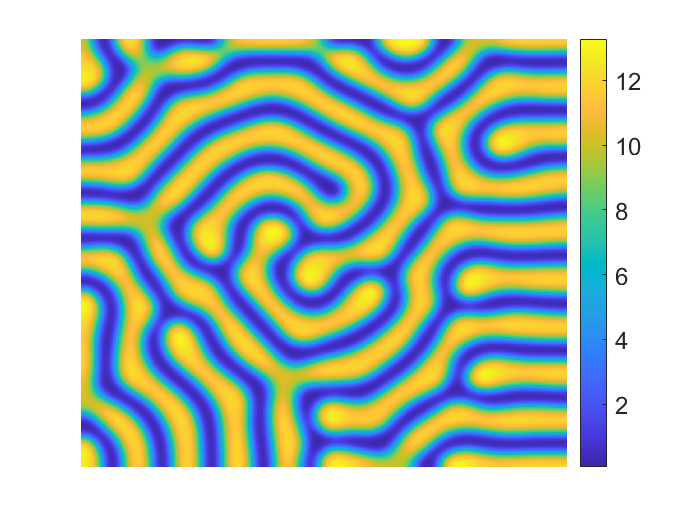}}\subfigure[]{\includegraphics[width=5.7cm]{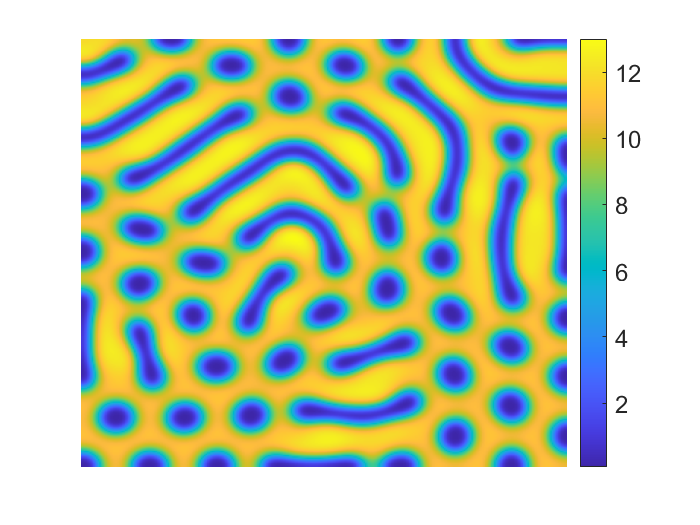}}
\subfigure[]{\includegraphics[width=5.7cm]{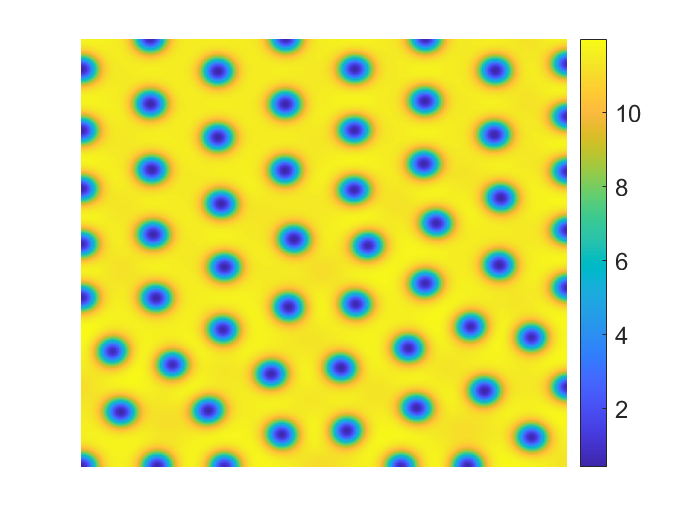}}}
\caption{Patterns that have been observed for the system \eqref{alg_eqn_pde} with Holling type II functional response with the parameter values $k_{1_H}=1.3$, $k_{2_H}=1$, $X_0=23$, $r=0.24$ and for different values of $k_3$. To encapsulate different scenarios distinctly, we have attached patterns observed for the parameter value (a) $k_3=0.04$ (labyrinthine pattern), (b) $k_3=0.1$ (mixture of coldspot and labyrinthine pattern), (c) $k_3=0.15$ (coldspot pattern). }
\label{Fig:Fig6}
\end{figure}

 \begin{figure}[ht]
\centering
\mbox{\subfigure[]{\includegraphics[width=5.7cm]{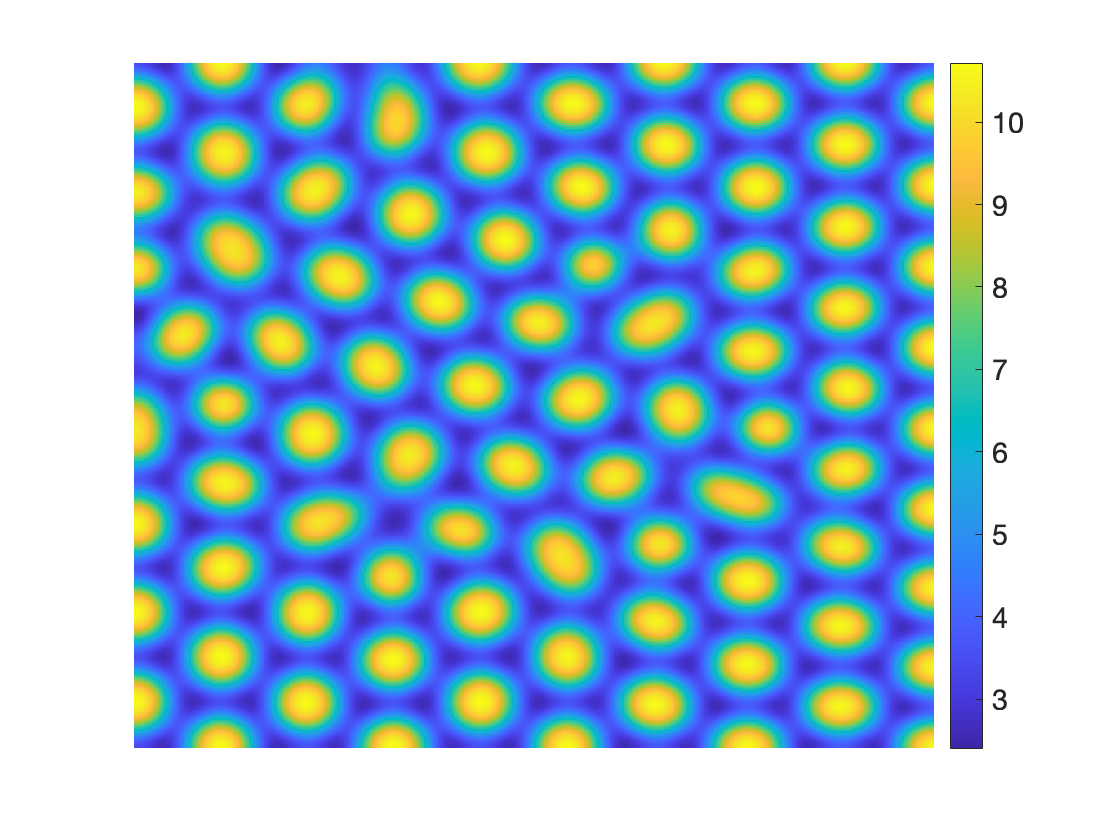}}
\subfigure[]{\includegraphics[width=5.7cm]{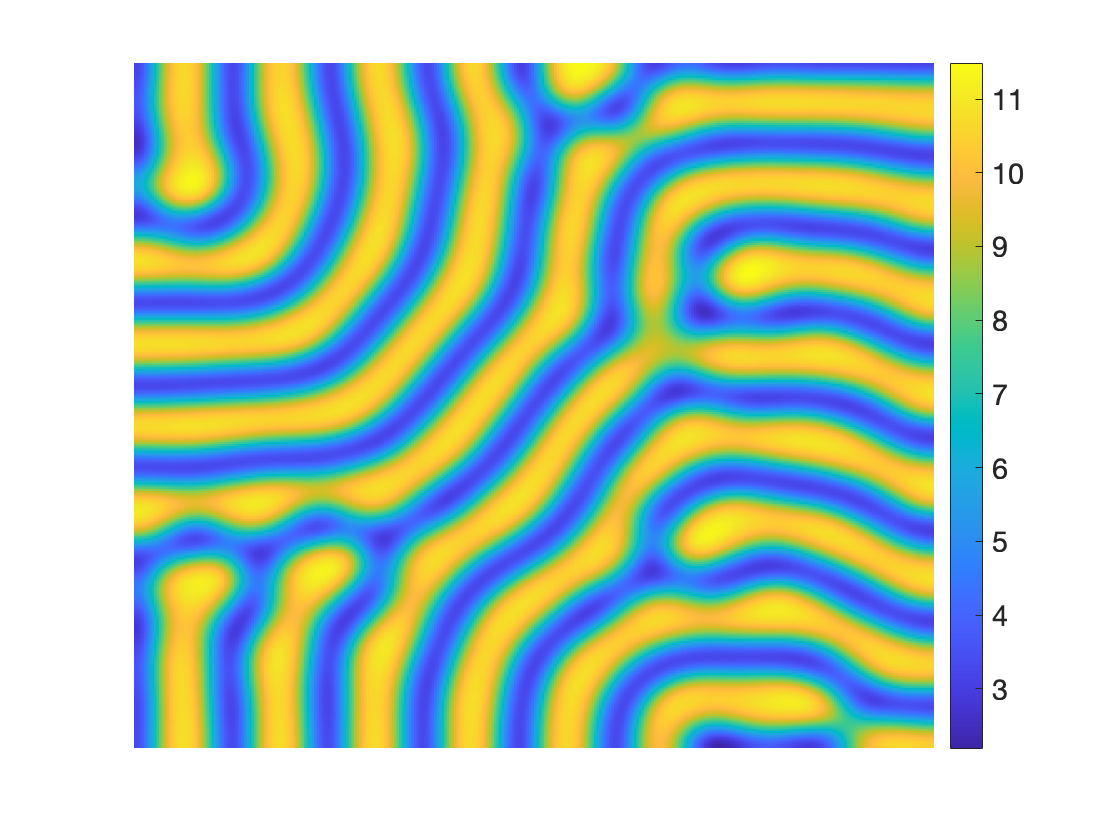}}
\subfigure[]{\includegraphics[width=5.7cm]{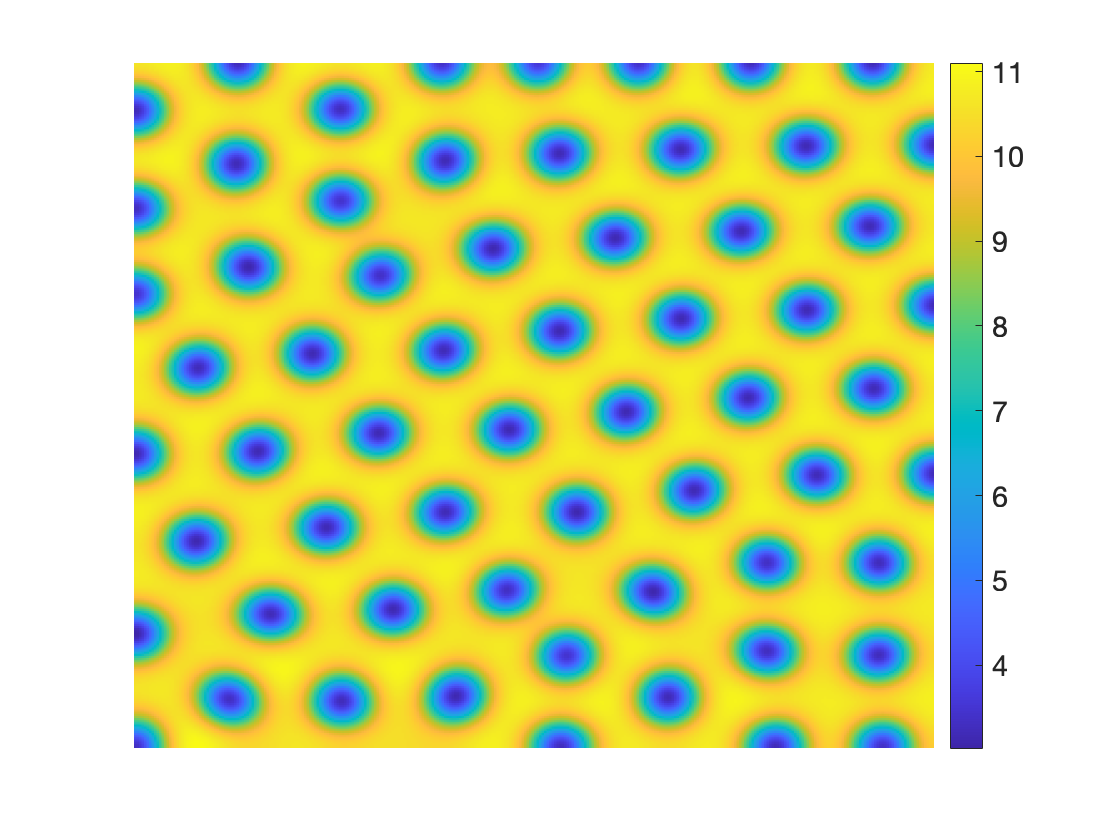}}}
\caption{Patterns that have been observed for the system \eqref{alg_eqn_pde} with Ivlev functional response with the parameter values $k_{1_H}=1.2367$, $k_{2_H}=0.4101$, $X_0=23$, $r=0.24$ and for different values of $k_3$. To encapsulate different scenarios distinctly, we have shown patterns observed for the parameter value (a) $k_3=0.083$ (hotspot pattern), (b) $k_3=0.14$ (labyrinthine pattern), (c) $k_3=0.18$ (coldspot pattern).}
\label{Fig:Fig7}
\end{figure}

Since the Turing bifurcation threshold is higher in the model with the Ivlev functional response, we fix the predator-to-prey self-diffusion coefficient ratio at $d = 45$. This choice ensures that spatial patterns emerge in both models over a wider range of the parameter $k_3$. Notably, in the Ivlev-based model, the Turing instability persists across a broader range of $k_3$ values compared to the model with the Holling Type II functional response. However, the Turing-Hopf instability spans a broader $k_3$ range in the Holling Type II model. Rather than directly comparing the patterns at identical $k_3$ values for both models, we analyze the variety of stationary patterns observed in each case. The possible patterns obtained for the model with the Holling Type II functional response are presented in Fig. \ref{Fig:Fig6}, and those for the model with the Ivlev functional response are shown in Fig. \ref{Fig:Fig7}. These patterns were generated by selecting various $k_3$ values from the Turing and Turing-Hopf regions—specified in the figure captions—and applying a small heterogeneous perturbation to the Turing-unstable homogeneous steady state. Although both models exhibit dynamic patterns for certain values of $k_3$, our focus here is limited to stationary patterns. A notable distinction emerges: the model with the Holling Type II functional response does not produce the hotspot pattern that appears prominently in the Ivlev model (see Fig.~\ref{Fig:Fig7}(a)). This finding underscores how the modeling of predator efficiency, aggregation, and resource exploitation via different functional responses plays a crucial role in generating spatial heterogeneity in ecological systems.

\begin{figure}[ht]
\centering
\mbox{\subfigure[]{\includegraphics[width=8.4cm]{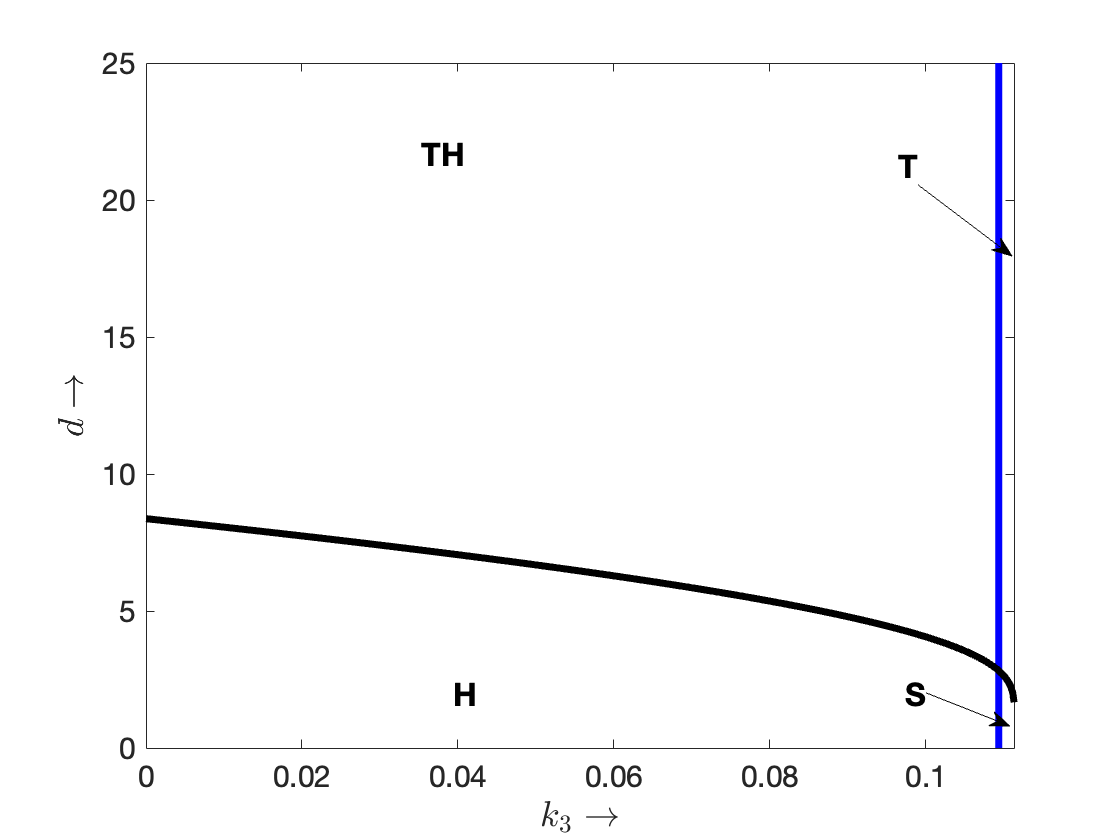}}
\subfigure[]{\includegraphics[width=8.4cm]{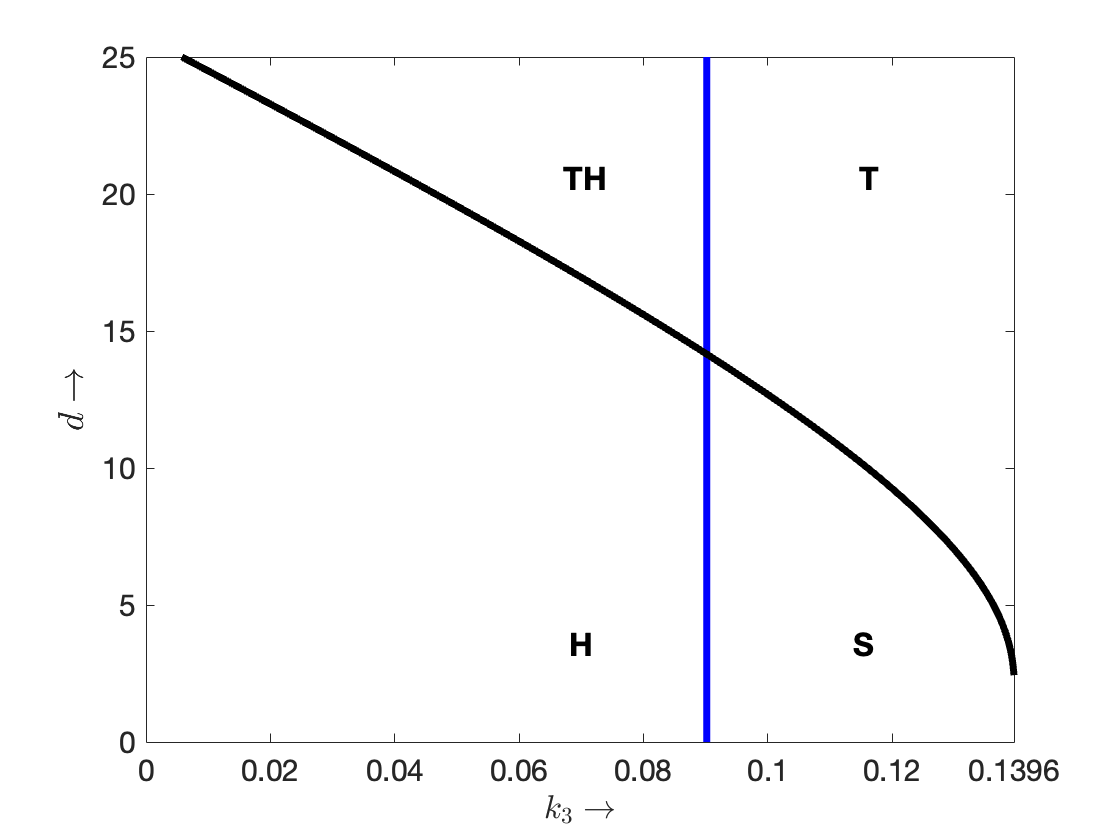}}}
\caption{The plot of the Turing bifurcation curve (black) and Hopf bifurcation threshold (blue) in $(k_3,d)$-parameter space for the system \eqref{alg_eqn_pde} (a) with Holling type II functional response, (a) with Ivlev functional response, for the fixed parameter values $X_0=52$, $r=0.24$, $b_{H}=1.3$, $a_{H}=1$, $b_{I}=1.2367$, $a_{I}=0.4101$.}
\label{Fig:Fig8}
\end{figure}

The study of the temporal model \eqref{algeb_equn} for $X_0 = 52$ reveals that the system \eqref{alg_eqn_pde} can exhibit three coexisting homogeneous steady states for both chosen parameterizations. Additionally, the stable limit cycle disappears in the model with the Holling Type II functional response (see Fig.~\ref{Fig:Fig4}). Under the given parameter constraints, the homogeneous steady state with the highest prey density does not satisfy the Turing instability condition in either model. However, the condition is met for the homogeneous steady state with the lowest prey density in both models. The Turing bifurcation curves and Hopf bifurcation thresholds for both models are shown in Fig.~\ref{Fig:Fig8}. Notably, compared to the previous case (when $X_0 = 23$), the gap between the Turing bifurcation thresholds for the two models is significantly narrower. Furthermore, the stable and Turing domains for the model with the Holling Type II functional response are noticeably smaller than those for the model with the Ivlev functional response within the considered parameter range. The presence of bistability for this parameter setting (unlike the earlier case with $X_0 = 23$) introduces an additional layer of complexity, as the existence of multiple steady states may prevent the system from settling into a stationary pattern.

\begin{figure}[ht]
\centering
\mbox{\subfigure[]{\includegraphics[width=8.1cm]{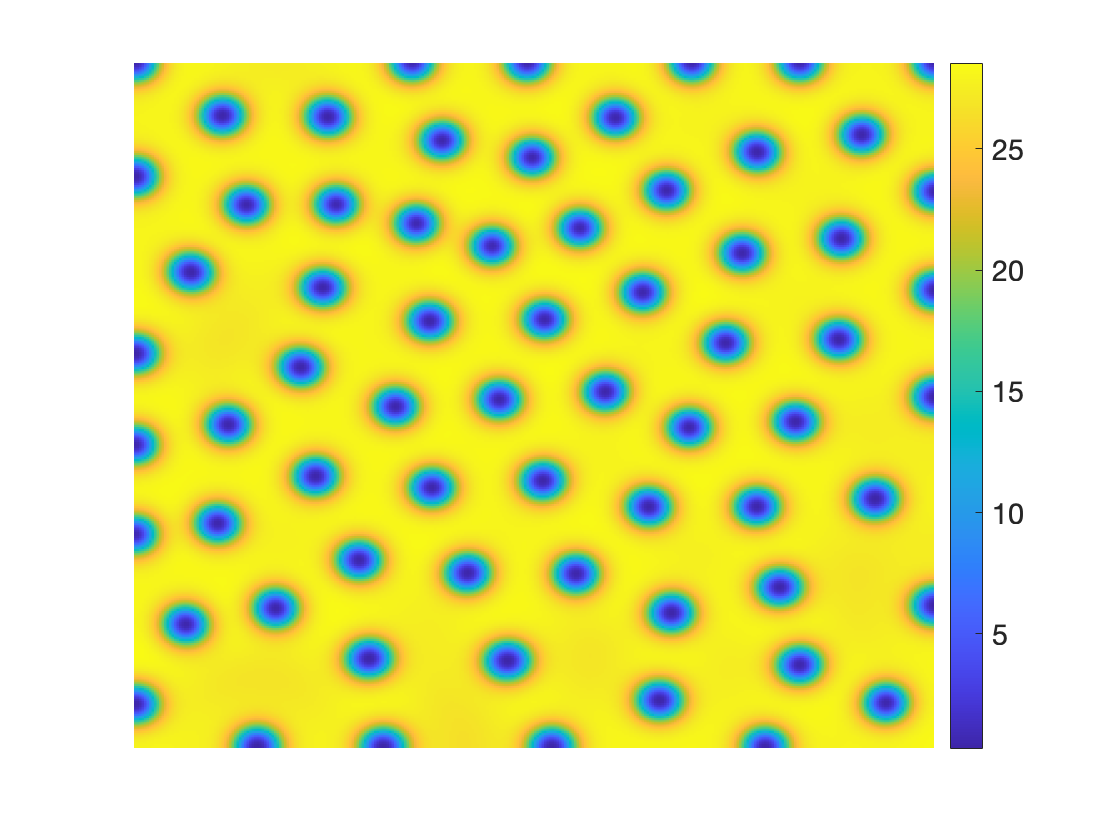}}
\subfigure[]{\includegraphics[width=8.1cm]{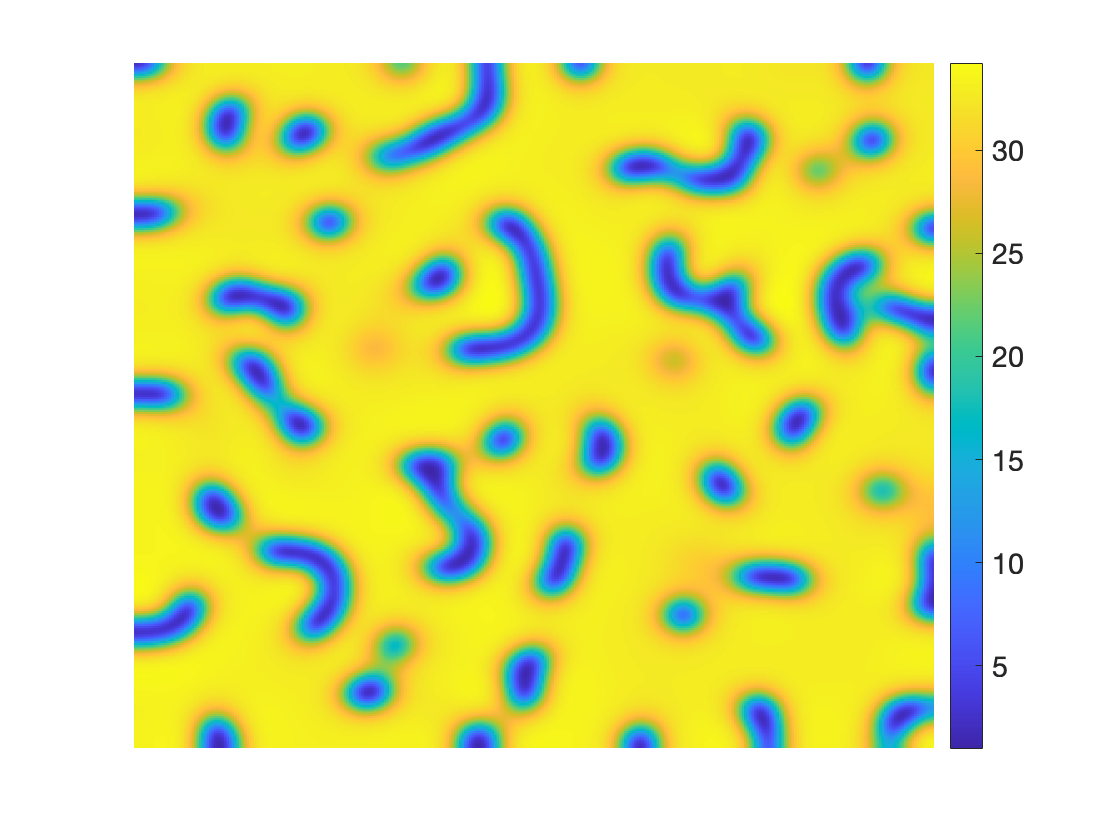}}}
\caption{Patterns observed for the system \eqref{alg_eqn_pde} with (a) Holling Type II functional response and (b) Ivlev functional response, using parameter values $r = 0.24$, $X_0 = 52$, and $k_3 = 0.1$. The parameters specific to each functional response are provided in Table \ref{Table:1}.
}
\label{Fig:Fig9}
\end{figure}

We have analyzed the spatial patterns generated by introducing small heterogeneous perturbations around the Turing-unstable homogeneous steady state, using a predator-to-prey self-diffusion coefficient ratio of $d = 20$. Unlike the earlier case, a lower value of $d$ is considered here. A schematic diagram of all possible patterns is presented in Fig.~\ref{Fig:Fig10}. Interestingly, a new phenomenon has been observed in this setup. For the model with the Holling Type II functional response, a cold-spot pattern emerges for the parameter value $k_3 = 0.1$, and it remains stationary. In contrast, for the model with the Ivlev functional response—despite satisfying the Turing instability condition—a pattern initially forms but evolves into a non-stationary one for the same value of $k_3 = 0.1$. This behavior arises due to the stability of another homogeneous steady state with higher predator density, which disrupts the establishment of a stationary Turing pattern. Additionally, the non-stationary pattern observed in the Ivlev model is a mixture of cold-spot and labyrinthine patterns (Fig. \ref{Fig:Fig9}). Moreover, we have observed the existence of stationary patterns outside the Turing domain, more precisely inside the Hopf domain where the Turing instability condition does not get satisfied, for the model with Ivlev functional response. This phenomenon has not been observed for the model with Holling type II functional response (see Fig. \ref{Fig:Fig10}).

\section{Discussion}{\label{disc}}

Guided by the need to quantify uncertainty in model predictions and to assist ecologists in selecting the most appropriate models, we are interested in exploring alternative model formulations. The primary objective of this work is to investigate the impact of alternative functional responses with similar characteristics in the context of pattern formation. It is crucial to begin with simple ecological models, as they provide a foundational framework for understanding complex interactions in more intricate systems. The Rosenzweig–MacArthur model is unable to produce spatially heterogeneous patterns under self-diffusion alone \cite{MB}. However, the introduction of a density-dependent death rate for the predator population into this model is sufficient to generate various stationary and dynamic patterns under self-diffusion \cite{Shen, IJBC, MB2}. In this study, we examine the structural sensitivity of the Bazykin model and its spatial extension by incorporating alternative formulations of predator grazing behavior. While the Bazykin model traditionally employs the Holling Type II functional response to describe predator grazing, we introduce the Ivlev functional response as an alternative interaction term to study the structural sensitivity of both temporal and spatio-temporal dynamics.

Here, we present preliminary findings on the potential number of steady states, their stability, and bifurcation behavior for the temporal version of a conceptual model incorporating Bazykin-type reaction kinetics and a generalized functional response that satisfies the key characteristics of both the Holling Type II and Ivlev functional responses \cite{Fussmann, seo2018}. Our analytical study reveals that the system can exhibit at most three coexisting steady states, depending on parameter values. It also exhibits two saddle-node bifurcation curves intersecting at a cusp bifurcation point and a Hopf bifurcation curve that collides with one of the saddle-node bifurcation curves at a Bogdanov–Takens bifurcation point in the $(X_0\mbox{-}k_3)$ parameter space, irrespective of the specific choice of functional response. Numerical validation of these analytical findings is carried out using both the Holling Type II and Ivlev functional responses as alternative model formulations. For this purpose, we adopt parameter values from \cite{IJBC} for the Holling Type II case and estimate the Ivlev parameters using least squares approximation. In both models, the Hopf bifurcation curve collides with the upper saddle-node bifurcation curve under the considered parameter restriction. Although the two-parameter bifurcation diagrams share similar structures across both models, a significant shift in their threshold values is observed. This difference notably affects their one-parameter bifurcation diagrams, particularly when varying the interspecies competition rate across different (dimensionless) carrying capacities of the prey population, $X_0$. The analysis of these one-parameter bifurcation diagrams reveals that at higher carrying capacities, the models deviate from similar behavior at the coexisting homogeneous steady state, indicating a divergence in their dynamical outcomes.

Analytical results for the spatio-temporal model are provided to establish the existence and boundedness of solutions for the spatially explicit system. Additionally, we present conditions for the local and global stability of homogeneous steady states under spatially heterogeneous perturbations, the existence of spatially heterogeneous steady states, and the Turing instability of coexisting homogeneous steady states. These analytical findings, along with the conditions required to derive them, are expressed in terms of a generalized functional response. Subsequently, a numerical study is conducted to validate the analytical results and to explore the impact of different functional responses on spatio-temporal pattern formation and Turing instability. The results reveal that in systems with lower carrying capacity, where the one parameter bifurcation diagrams of the models are similar, the difference in Turing bifurcation thresholds is more pronounced. Conversely, in systems with higher carrying capacity, where the one-parameter bifurcation diagrams diverge, the Turing bifurcation thresholds show less variation. Interestingly, stationary patterns have been observed outside the classical Turing domain, particularly within the Hopf domain, for the model incorporating the Ivlev functional response under higher prey carrying capacity. Furthermore, hotspot patterns emerge in the Ivlev-based model when the prey carrying capacity is low. In contrast, the model with the Holling Type II functional response does not exhibit such behaviors under the same parameter conditions.

\begin{figure}[ht]
\centering
\mbox{\subfigure[]{\includegraphics[width=8.8cm]{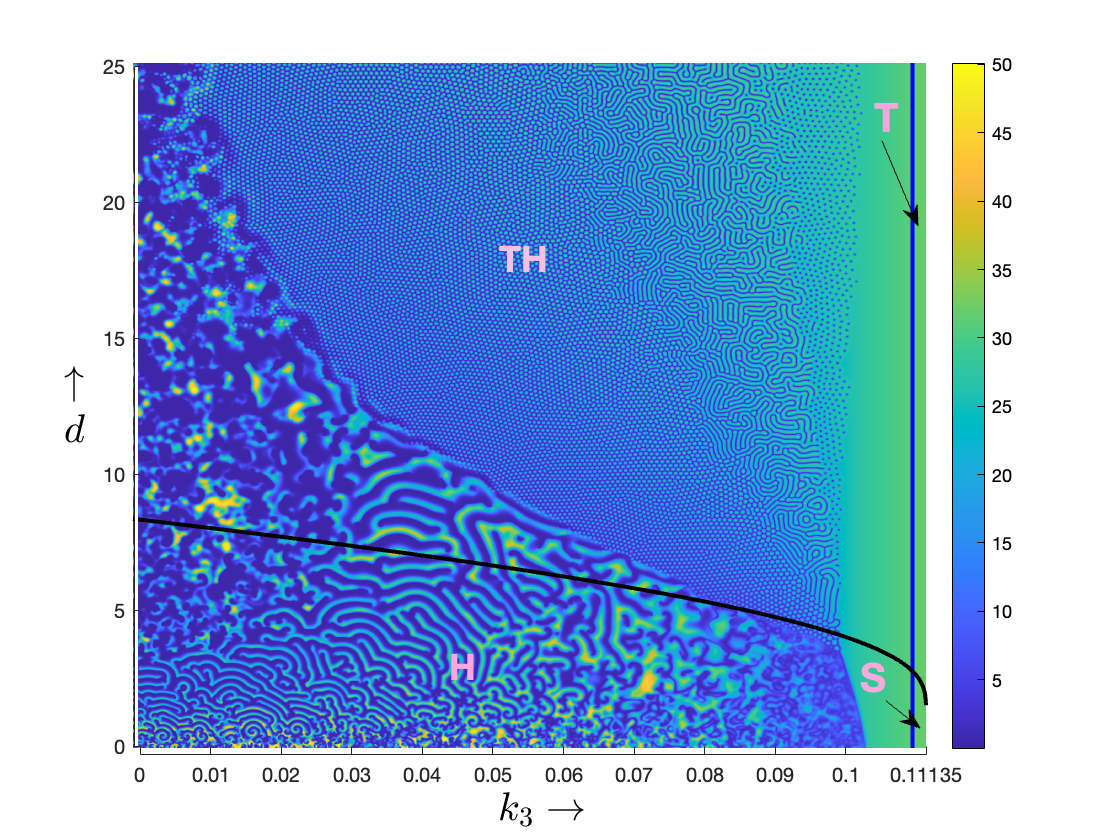}}
\subfigure[]{\includegraphics[width=8.8cm]{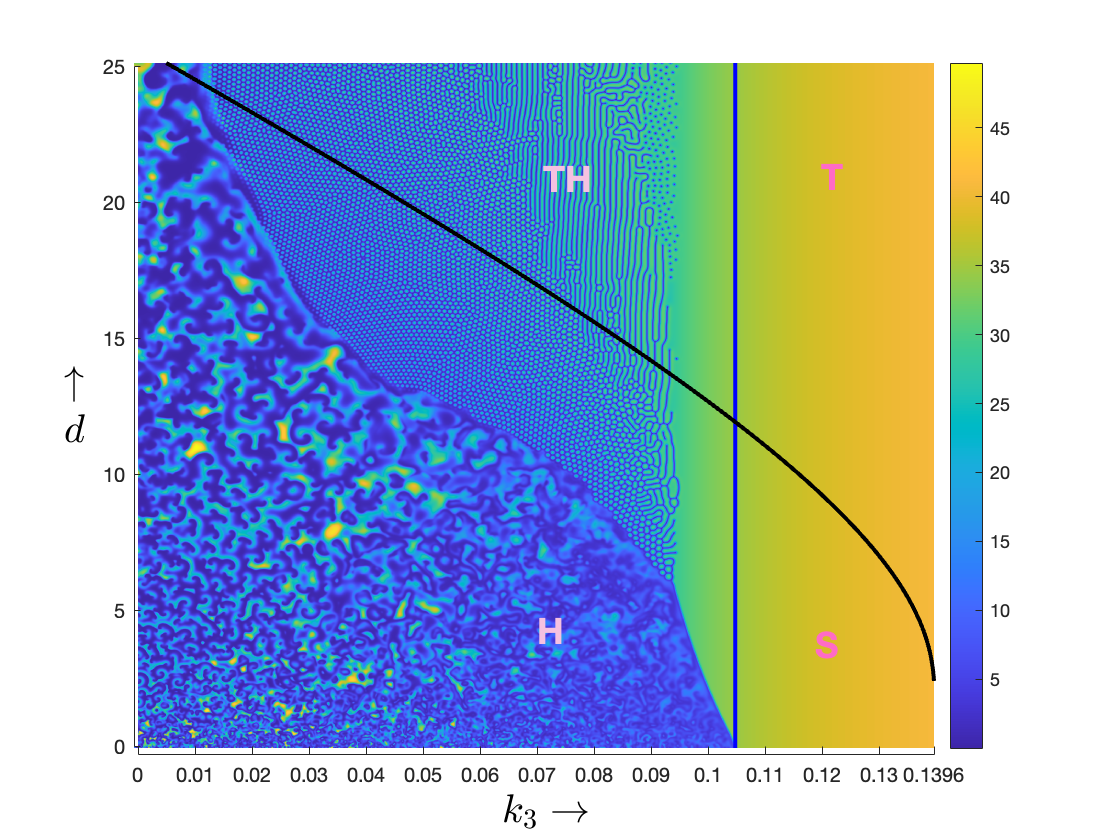}}}
\caption{Schematic diagrams illustrating all possible patterns observed for system \eqref{alg_eqn_pde} are presented for (a) the Holling Type II functional response and (b) the Ivlev functional response. The parameter values corresponding to each functional response are listed in Table \ref{Table:1}. The fixed parameters are set as $r = 0.24$ and $X_0 = 52$. The varying parameters range as follows: $d \in [0, 25]$ and $k_3 \in [0.00001, 0.11135]$ for the model with the Holling Type II functional response, and $k_3 \in [0.00001, 0.1396]$ for the model with the Ivlev functional response.}
\label{Fig:Fig10}
\end{figure}

The mathematical modelling and analysis of the dynamic evolution of interacting populations is an ongoing process. The formulation of a model based on available information and observations about population interactions, followed by systematic mathematical analysis, provides an opportunity to predict the fate of species in terms of survival and patterns of variation. Discrepancies revealed through such analysis often lead to further modifications of the model under consideration. In the context of interacting population models, the functional and numerical responses form the crucial link between two trophic levels, and the parameterization of these responses largely depends on grazing behavior. However, available data on predation and/or interactions between individuals of two species are often insufficient to uniquely parameterize mathematical functions. Recently, Fussmann and Blasius demonstrated that different mathematical representations of the functional response can lead to significantly different dynamic outcomes \cite{Fussmann}. This insight spurred a series of studies aimed at examining the phenomenon in greater depth, with a particular focus on establishing a more rigorous mathematical foundation \cite{seo2018, aldebert2019three, adamson2014defining, Adamson, flora2011structural, aldebert2016structural}. To date, structural sensitivity in population dynamics has been studied mainly in models assuming homogeneous distributions of interacting populations, while such analyses for spatio-temporal models remain rare \cite{MANNA2024134220, Sherratt}. Motivated by our previous work \cite{MANNA2024134220}, we attempt here to provide a detailed mathematical analysis, supported by numerical simulations, to understand the structural sensitivity of stationary and non-stationary patterns in a two-species prey–predator model with Bazykin’s reaction kinetics. Our future goal is to apply this approach to three-species models, as well as to models incorporating other types of functional responses.

\section*{Acknowledgement:} The research of Mr. Indrajyoti Gaine is supported by the PMRF, Ministry of Education, Govt. of  India (Scholar Id: 2303393).



\bibliographystyle{elsarticle-num}
\bibliography{hydrarefuge.bib}

\end{document}